\setlist[enumerate]{noitemsep,partopsep=0pt,parsep=0pt}
\setlist[itemize]{noitemsep,partopsep=0pt,parsep=0pt}
\newtheorem{thm}{Theorem}
\newtheorem{lem}{Lemma}
\newtheorem{cor}{Corollary}
\newtheorem*{cor:limit-type1_direct_repeated}{Corollary~\ref{cor:limit-type1_direct} (restated)}
\newtheorem*{cor:direct_repeated}{Corollary~\ref{cor:direct} (restated)}
\newtheorem*{cor:limit-type1_strongconverse_repeated}{Corollary~\ref{cor:limit-type1_strongconverse} (restated)}
\newtheorem*{cor:strong-converse_repeated}{Corollary~\ref{cor:strong-converse} (restated)}
\theoremstyle{remark}
\newtheorem{rem}{Remark}
\numberwithin{equation}{section}
\DeclareMathOperator{\tr}{tr}
\DeclareMathOperator{\rank}{rank}
\DeclareMathOperator{\spec}{spec}
\DeclareMathOperator{\supp}{supp}
\DeclareMathOperator*{\argmin}{\arg\min}
\DeclareMathOperator{\sym}{sym}
\newcommand{\ket}[1]{|#1\rangle}
\newcommand{\bra}[1]{\langle#1|}
\newcommand{\ketbra}[2]{\ket{#1}\!\bra{#2}}
\newcommand{\proj}[1]{\ketbra{#1}{#1}}
\begin{document}

\title{Doubly minimized Petz and sandwiched R\'enyi mutual information: Operational interpretation 
from binary quantum state discrimination}
\author{Laura Burri}
\affiliation{Institute for Theoretical Physics, ETH Zurich, Zurich, Switzerland}

\begin{abstract}
The doubly minimized Petz R\'enyi mutual information of order $\alpha$ is defined as the minimum of the Petz divergence of order $\alpha$ of a given bipartite quantum state relative to all product states. The doubly minimized sandwiched R\'enyi mutual information is defined analogously, with the Petz divergence replaced by the sandwiched divergence. In this work, we study certain binary quantum state discrimination problems related to correlation detection. We show that the corresponding direct exponent is determined by the doubly minimized Petz R\'enyi mutual information of order $\alpha\in (1/2,1)$, and that the strong converse exponent is determined by the doubly minimized sandwiched R\'enyi mutual information of order $\alpha\in (1,\infty)$. This provides an operational interpretation of these types of R\'enyi mutual information and generalizes previous results for classical probability distributions to the quantum setting. For completeness, we also study the corresponding moderate deviation regime both below and above the threshold, and determine the Stein exponent and the second-order asymptotics.
\end{abstract}

\maketitle

\section{Introduction}\label{sec:1}

The interplay between information-theoretic tasks and information measures is a central theme in both classical and quantum information theory. 
This paper focuses on the task of i.i.d. (independent and identically distributed) correlation detection and explores how its error exponents -- the direct exponent and the strong converse exponent -- relate to R\'enyi generalizations of the mutual information. 
In classical information theory, this relation has been studied in previous work, yielding operational interpretations of the doubly minimized R\'enyi mutual information~\cite{tomamichel2018operational}. 
Here, we extend this relation to the quantum setting. 
We begin by reviewing the relevant results from classical information theory.

\emph{Classical setting.} Let $P_{XY}$ be the joint probability mass function (PMF) of two random variables $X$ and $Y$, and let $P_X$ and $P_Y$ be the marginal PMFs of $X$ and $Y$, respectively.  
Based on the R\'enyi divergence $D_\alpha$, we consider the following types of R\'enyi mutual information (RMI) for $\alpha\in [0,\infty]$.
\begin{align}
I_\alpha^{\uparrow\uparrow}(X:Y)_P&\coloneqq D_\alpha (P_{XY}\| P_X P_Y)
\label{eq:i-classical-0}
\\
I_\alpha^{\uparrow\downarrow}(X:Y)_P&\coloneqq \inf_{R_Y}D_\alpha (P_{XY}\| P_X R_Y)
\label{eq:i-classical-1}
\\
I_\alpha^{\downarrow\downarrow}(X:Y)_P&\coloneqq \inf_{Q_X,R_Y}D_\alpha(P_{XY}\| Q_X R_Y)
\label{eq:i-classical-2}
\end{align}
We call them the \emph{non-minimized RMI}, the \emph{singly minimized RMI}, and the \emph{doubly minimized RMI}. 
The minimizations are over PMFs $Q_X$ and $R_Y$. 
If $\alpha=1$, then each of the above measures coincides with the classical mutual information $I(X:Y)_P$~\cite{lapidoth2019two}. 
Thus, the listed measures may be viewed as one-parameter generalizations of the classical mutual information. 

Each of the three measures can be related to the direct exponent and the strong converse exponent of a certain binary discrimination problem, as outlined in Table~\ref{tab:classical}. 
The problem in the third row of this table corresponds to the task of i.i.d. correlation detection in the classical setting. 
The listed results for the direct exponents provide an operational interpretation of the family of the non-minimized RMIs and the singly minimized RMIs of order $\alpha\in (0,1)$, and the doubly minimized RMIs of order $\alpha\in (\frac{1}{2},1)$, respectively. 
On the other hand, the listed results for the strong converse exponents provide an operational interpretation of the family of the non-minimized, the singly minimized, and the doubly minimized RMIs of order $\alpha\in (1,\infty)$, respectively.

\begin{table}
\begin{tabular}{p{0.26\textwidth}p{0.36\textwidth}p{0.36\textwidth}} \toprule
\textbf{Null hypothesis,\newline alternative hypothesis} & \textbf{Direct exponent}&\textbf{Strong converse exponent}\\ \midrule
$H_0^n=\{P_{XY}^{\times n}\}$ \newline
$H_1^n=\{P_{X}^{\times n} P_{Y}^{\times n}\}$&
For any $R\in (0,\infty)$~\cite{hoeffding1965asymptotically,hoeffding1965probabilities,csiszar1971error,blahut1974hypothesis,audenaert2008asymptotic} \newline
$\lim\limits_{n\rightarrow\infty}-\frac{1}{n}\log \hat{\alpha}_n(e^{-nR})$\newline
$= \sup\limits_{s\in (0,1)}\frac{1-s}{s}(I_s^{\uparrow\uparrow}(X:Y)_P-R)$.
&For any $R\in [0,\infty)$~\cite{blahut1974hypothesis,han1989strong,nakagawa1993converse} \newline
$\lim\limits_{n\rightarrow\infty}-\frac{1}{n}\log (1-\hat{\alpha}_n(e^{-nR}))$\newline
$= \sup\limits_{s\in (1,\infty)}\frac{s-1}{s}(R-I_s^{\uparrow\uparrow}(X:Y)_P)$.
\\ \hline
$H_0^n=\{P_{XY}^{\times n}\}$ \newline
$H_1^n=\{P_{X}^{\times n} R_{Y}^{\times n}\}_{R_{Y}}$&
For any $R\in (I_0^{\uparrow\downarrow}(X:Y)_P,\infty)$~\cite{tomamichel2018operational} \newline
$\lim\limits_{n\rightarrow\infty}-\frac{1}{n}\log \hat{\alpha}_n(e^{-nR})$\newline
$= \sup\limits_{s\in (0,1)}\frac{1-s}{s}(I_s^{\uparrow\downarrow}(X:Y)_P-R)$.
&For any $R\in [0,R_\infty)$~\cite{tomamichel2018operational}  \newline
$\lim\limits_{n\rightarrow\infty}-\frac{1}{n}\log (1-\hat{\alpha}_n(e^{-nR}))$\newline
$= \sup\limits_{s\in (1,\infty)}\frac{s-1}{s}(R-I_s^{\uparrow\downarrow}(X:Y)_P)$.
\\ \hline
$H_0^n=\{P_{XY}^{\times n}\}$ \newline
$H_1^n=\{Q_{X}^{\times n} R_{Y}^{\times n}\}_{Q_{X},R_{Y}}$&
For any 
$R\in (R_{1/2},\infty)$~\cite{tomamichel2018operational} \newline
$\lim\limits_{n\rightarrow\infty}-\frac{1}{n}\log \hat{\alpha}_n(e^{-nR})$\newline
$= \sup\limits_{s\in (\frac{1}{2},1)}\frac{1-s}{s}(I_s^{\downarrow\downarrow}(X:Y)_P-R)$.
&For any $R\in [0,R_\infty )$~\cite{tomamichel2018operational} \newline
$\lim\limits_{n\rightarrow\infty}-\frac{1}{n}\log (1-\hat{\alpha}_n(e^{-nR}))$\newline
$= \sup\limits_{s\in (1,\infty)}\frac{s-1}{s}(R-I_s^{\downarrow\downarrow}(X:Y)_P)$.
\\ \bottomrule
\end{tabular}
\caption{Overview of certain binary discrimination problems. 
Let $P_{XY}$ be a PMF. 
Each row pertains to a sequence of binary discrimination problems with null hypothesis $H_0^n$ and alternative hypothesis $H_1^n$ for $n\in \mathbb{N}_{>0}$. 
In the second row, the $n$th alternative hypothesis is given by 
$P_X^{\times n} R_Y^{\times n}$ for PMFs $R_Y$.
In the third row, the $n$th alternative hypothesis is given by 
$Q_X^{\times n} R_Y^{\times n}$ for PMFs $Q_X,R_Y$. 
The works cited in the second and third column derive single-letter formulas (i.e., formulas in which $P_{XY}$ occurs only once and not $n\rightarrow\infty$ many times) 
for the corresponding direct and strong converse exponents. 
The formulas for the strong converse exponent are valid if 
$I(X:Y)_P\neq I_\infty^{\uparrow\uparrow}(X:Y)_P$, 
$I(X:Y)_P\neq I_\infty^{\uparrow\downarrow}(X:Y)_P$, 
and $I(X:Y)_P\neq  I_\infty^{\downarrow\downarrow}(X:Y)_P$, respectively. 
The function $\hat{\alpha}_n(\mu)$ is linked to the $n$th hypothesis testing problem and is defined as the minimum type-I error when the type-II error is upper bounded by $\mu\in [0,\infty)$, see Section~\ref{ssec:hypothesis}. 
In that section, we will also explain that the formula for the direct exponent in the second row can be extended to $R\in (0,\infty)$, and the formulas for the strong converse exponent in the second and third row can be extended to $R\in [0,\infty)$, 
because the binary discrimination problems in this table are special cases of the ones that will be outlined in Table~\ref{tab:quantum} (see Remark~\ref{rem:extension}). 
The lower bound in the third row is defined as 
$R_{1/2}\coloneqq I_{1/2}^{\downarrow\downarrow}(X:Y)_P-\frac{1}{4}\frac{\partial}{\partial s^+} I_{s}^{\downarrow\downarrow}(X:Y)_P|_{s=1/2}$~\cite{tomamichel2018operational}. 
For a negative answer to the question of potential extensions of the formula for the direct exponent in the third row, see~\cite{lapidoth2018testing}. 
The upper bound in the second row is defined as 
$R_\infty\coloneqq \lim_{s\rightarrow\infty}(I_s^{\uparrow\downarrow}(X:Y)_P +s(s-1)\frac{\mathrm{d}}{\mathrm{d}s}I_s^{\uparrow\downarrow}(X:Y)_P)$~\cite{tomamichel2018operational}. 
Similarly, the upper bound in the third row is defined as 
$R_\infty\coloneqq \lim_{s\rightarrow\infty}(I_s^{\downarrow\downarrow}(X:Y)_P +s(s-1)\frac{\mathrm{d}}{\mathrm{d}s}I_s^{\downarrow\downarrow}(X:Y)_P)$~\cite{tomamichel2018operational}.}
\label{tab:classical}
\end{table}

\paragraph*{Quantum setting.} 
Let $\rho_{AB}$ be a bipartite quantum state on $A$ and $B$, and let $\rho_A$ and $\rho_B$ be the marginal states on $A$ and $B$, respectively. 
Based on the Petz divergence $D_\alpha$, we consider the following types of Petz R\'enyi mutual information (PRMI) for $\alpha\in [0,\infty)$.
\begin{align}
I_\alpha^{\uparrow\uparrow}(A:B)_\rho &\coloneqq D_\alpha (\rho_{AB}\| \rho_A\otimes \rho_B)
\label{eq:i-quantum-0}
\\
I_\alpha^{\uparrow\downarrow}(A:B)_\rho &\coloneqq \inf_{\tau_B} D_\alpha (\rho_{AB}\| \rho_A\otimes \tau_B)
\label{eq:i-quantum-1}
\\
I_\alpha^{\downarrow\downarrow}(A:B)_\rho &\coloneqq \inf_{\sigma_A,\tau_B} D_\alpha (\rho_{AB}\| \sigma_A\otimes \tau_B)
\label{eq:i-quantum-2}
\end{align}
We call them the \emph{non-minimized PRMI}, the \emph{singly minimized PRMI}, and the \emph{doubly minimized PRMI}. 
The minimizations are over quantum states $\sigma_A$ and $\tau_B$. 
If $\alpha = 1$, then each of the above measures coincides with the mutual information $I(A:B)_\rho$~\cite{wilde2014strong,hayashi2016correlation}. 
Consequently, the three listed measures can be regarded as one-parameter generalizations of the mutual information.

Analogously, based on the sandwiched R\'enyi divergence $\widetilde{D}_\alpha$, we consider the following types of sandwiched R\'enyi mutual information (SRMI) for $\alpha\in (0,\infty]$.
\begin{align}
\widetilde{I}_\alpha^{\uparrow\uparrow}(A:B)_\rho &\coloneqq \widetilde{D}_\alpha (\rho_{AB}\| \rho_A\otimes \rho_B)
\label{eq:i-quantum0}
\\
\widetilde{I}_\alpha^{\uparrow\downarrow}(A:B)_\rho &\coloneqq \inf_{\tau_B} \widetilde{D}_\alpha (\rho_{AB}\| \rho_A\otimes \tau_B)
\label{eq:i-quantum1}
\\
\widetilde{I}_\alpha^{\downarrow\downarrow}(A:B)_\rho &\coloneqq \inf_{\sigma_A,\tau_B} \widetilde{D}_\alpha (\rho_{AB}\| \sigma_A\otimes \tau_B)
\label{eq:i-quantum2}
\end{align}
We call them the \emph{non-minimized SRMI}, the \emph{singly minimized SRMI}, and the \emph{doubly minimized SRMI}. 
Again, the minimizations are over quantum states $\sigma_A$ and $\tau_B$, and if $\alpha=1$, then all three measures coincide with the mutual information~\cite{wilde2014strong,hayashi2016correlation}. 
Thus, the three listed measures represent further one-parameter generalizations of the mutual information.

Given the results in Table~\ref{tab:classical}, it is natural to inquire whether these results can be generalized from the classical to the quantum setting. 
For the discrimination problems in the first and second row of Table~\ref{tab:classical}, it has been shown that this generalization is indeed possible, as outlined in the corresponding rows of Table~\ref{tab:quantum}. 
We show that the same generalization is possible for the discrimination problem in the third row of Table~\ref{tab:classical} -- which corresponds to i.i.d. correlation detection --, as outlined in the third row of Table~\ref{tab:quantum}. 
These single-letter formulas for the direct exponent (Theorem~\ref{thm:direct}) and the strong converse exponent (Theorem~\ref{thm:strong-converse}) are our main results. 
They yield an operational interpretation of the family of the doubly minimized PRMIs of order $\alpha\in (\frac{1}{2},1)$, and the family of the doubly minimized SRMIs of order $\alpha\in (1,\infty)$, respectively. 
For completeness, we also study the moderate deviation regime both below and above the threshold $I(A:B)_\rho$ (Theorems~\ref{thm:moderate_direct} and~\ref{thm:moderate_strongconverse}), and determine the Stein exponent (Corollary~\ref{cor:stein}) and the second-order asymptotics (Theorem~\ref{thm:second}) of the corresponding binary quantum state discrimination problems. 

\begin{table}
\begin{tabular}{p{0.26\textwidth}p{0.36\textwidth}p{0.36\textwidth}} \toprule
\textbf{Null hypothesis,\newline alternative hypothesis} & \textbf{Direct exponent}&\textbf{Strong converse exponent}\\ \toprule
$H_0^n=\{\rho_{AB}^{\otimes n}\}$ \newline
$H_1^n=\{\rho_{A}^{\otimes n}\otimes \rho_{B}^{\otimes n}\}$&
For any $R\in (0,\infty)$~\cite{hayashi2007error,nagaoka2006converse,audenaert2008asymptotic} \newline
$\lim\limits_{n\rightarrow\infty}-\frac{1}{n}\log \hat{\alpha}_n(e^{-nR})$\newline
$= \sup\limits_{s\in (0,1)}\frac{1-s}{s}(I_s^{\uparrow\uparrow}(A:B)_\rho -R)$.
&For any $R\in [0,\infty)$~\cite{mosonyi2014quantum,mosonyi2015two} \newline
$\lim\limits_{n\rightarrow\infty}-\frac{1}{n}\log (1-\hat{\alpha}_n(e^{-nR}))$\newline
$= \sup\limits_{s\in (1,\infty)}\frac{s-1}{s}(R-\widetilde{I}_s^{\uparrow\uparrow}(A:B)_\rho)$.
\\ \hline
$H_0^n=\{\rho_{AB}^{\otimes n}\}$ \newline
$H_1^n=\{\rho_{A}^{\otimes n}\otimes \tau_{B}^{\otimes n}\}_{\tau_{B}}$&
For any $R\in (0,\infty)$~\cite{hayashi2016correlation} \newline
$\lim\limits_{n\rightarrow\infty}-\frac{1}{n}\log \hat{\alpha}_n(e^{-nR})$\newline
$= \sup\limits_{s\in (0,1)}\frac{1-s}{s}(I_s^{\uparrow\downarrow}(A:B)_\rho -R)$.
&For any $R\in [0,\infty)$~\cite{hayashi2016correlation} \newline
$\lim\limits_{n\rightarrow\infty}-\frac{1}{n}\log (1- \hat{\alpha}_n(e^{-nR}))$\newline
$= \sup\limits_{s\in (1,\infty)}\frac{s-1}{s}(R-\widetilde{I}_s^{\uparrow\downarrow}(A:B)_\rho )$.
\\ \hline
$H_0^n=\{\rho_{AB}^{\otimes n}\}$ \newline
$H_1^n=\{\sigma_{A}^{\otimes n}\otimes \tau_{B}^{\otimes n}\}_{\sigma_{A},\tau_{B}}$ &
For any $R\in (R_{1/2},\infty)$ [Theorem~\ref{thm:direct}] \newline
$\lim\limits_{n\rightarrow\infty}-\frac{1}{n}\log \hat{\alpha}_n(e^{-nR})$\newline
$= \sup\limits_{s\in (\frac{1}{2},1)}\frac{1-s}{s}(I_s^{\downarrow\downarrow}(A:B)_\rho -R)$.
&For any $R\in [0,\infty)$ [Theorem~\ref{thm:strong-converse}] \newline
$\lim\limits_{n\rightarrow\infty}-\frac{1}{n}\log (1-\hat{\alpha}_n(e^{-nR}))$\newline
$= \sup\limits_{s\in (1,\infty)}\frac{s-1}{s}(R-\widetilde{I}_s^{\downarrow\downarrow}(A:B)_\rho )$.
\\ \bottomrule
\end{tabular}
\caption{Overview of certain binary quantum state discrimination problems. 
Let $\rho_{AB}$ be a quantum state.
Each row pertains to a sequence of binary quantum state discrimination problems with null hypothesis $H_0^n$ and alternative hypothesis $H_1^n$ for $n\in \mathbb{N}_{>0}$. 
In the second row, the $n$th alternative hypothesis is given by 
$\rho_A^{\otimes n}\otimes\tau_B^{\otimes n}$ 
for quantum states $\tau_B$.
In the third row, the $n$th alternative hypothesis is given by 
$\sigma_A^{\otimes n}\otimes\tau_B^{\otimes n}$ 
for quantum states $\sigma_A,\tau_B$. 
The works cited in the second and third column derive single-letter formulas for the corresponding direct and strong converse exponents. 
The formulas for the strong converse exponent are valid if 
$I(A:B)_\rho\neq \widetilde{I}_\infty^{\uparrow\uparrow}(A:B)_\rho$, 
$I(A:B)_\rho\neq \widetilde{I}_\infty^{\uparrow\downarrow}(A:B)_\rho$, 
and $I(A:B)_\rho\neq \widetilde{I}_\infty^{\downarrow\downarrow}(A:B)_\rho$, respectively. 
The lower bound in the third row is defined as 
$R_{1/2}\coloneqq I_{1/2}^{\downarrow\downarrow}(A:B)_\rho-\frac{1}{4}\frac{\partial}{\partial s^+} I_{s}^{\downarrow\downarrow}(A:B)_\rho\big|_{s=1/2}$.}
\label{tab:quantum}
\end{table}

\paragraph*{Related work.} 
The hypothesis testing problem considered here previously appeared in~\cite{berta2021composite}. 
Let us denote the minimum type-II error when the type-I error is upper bounded by $\mu\in [0,\infty)$ by $\hat{\beta}_{n}(\mu)$. 
Using this notation, it was shown that the Stein exponent satisfies 
$\lim_{n\rightarrow\infty} -\frac{1}{n}\log \hat{\beta}_{n}(\mu) = I(A:B)_\rho$ 
for all $\mu \in (0,1)$~\cite[Proposition~3.4]{berta2021composite}. 
This result is recovered immediately from our findings (see also~\cite{schmitt2026tumulainformationdoublyminimized} for a similar argument). 
However,~\cite{berta2021composite} does not address other exponents, such as the direct or strong converse exponent. 
This distinguishes our results from those in~\cite{berta2021composite}.

While our main results on the direct and strong converse exponent generalize earlier results for the classical setting~\cite{tomamichel2018operational} (see Tables~\ref{tab:classical} and~\ref{tab:quantum}), 
they do not directly follow from recent work on quantum state discrimination with composite correlated hypotheses~\cite{fang2025errorexponentsquantumstate}, as several results in~\cite{fang2025errorexponentsquantumstate} assume that the alternative hypothesis $H_1^n$ is convex. 
While the assumption of convexity is common in the literature~\cite{fang2025errorexponentsquantumstate,Brand_o_2010,Hayashi2025GeneralizedQSL,Lami_2025}, it is not satisfied in our setting, since $H_1^n$ in the last row of Table~\ref{tab:quantum} is non-convex.

\paragraph*{Outline.} 
In Section~\ref{sec:preliminaries}, we summarize the necessary mathematical preliminaries.
We begin by introducing our general notation~(\ref{ssec:notation}).
Next, we present definitions and properties related to permutation invariance~(\ref{ssec:permutation}), entropies, divergences, and R\'enyi mutual information~(\ref{ssec:divergence}), and binary quantum state discrimination~(\ref{ssec:hypothesis}). 
In Section~\ref{sec:main}, we present our main results (Theorems~\ref{thm:direct} and~\ref{thm:strong-converse}), along with results on the moderate deviation regime (Theorems~\ref{thm:moderate_direct} and~\ref{thm:moderate_strongconverse}), the Stein exponent (Corollary~\ref{cor:stein}) and the second-order asymptotics (Theorem~\ref{thm:second}).

\section{Preliminaries}\label{sec:preliminaries}

\subsection{Notation}\label{ssec:notation}
``$\log$'' is taken to refer to the natural logarithm. 
The set of natural numbers that are strictly smaller than $n\in \mathbb{N}$ is denoted by $[n]\coloneqq \{0,1,\dots, n-1\}$. 

In this paper, we work exclusively with finite-dimensional Hilbert spaces for simplicity. 
The dimension of a Hilbert space $A$ is denoted as $d_A\coloneqq \dim(A)\in \mathbb{N}_{>0}$. 
The tensor product of two Hilbert spaces $A$ and $B$ is sometimes denoted by $AB$ instead of $A\otimes B$, and $A^{n}\coloneqq A^{\otimes n}$ for any $n\in \mathbb{N}_{>0}$. 
The set of linear maps from $A$ to $A$ is denoted by $\mathcal{L}(A)$. 
Identities are sometimes left implicit; for instance, for $X_A\in\mathcal{L}(A)$, the symbol ``$X_A$'' may denote $X_A\otimes 1_B\in \mathcal{L}(A\otimes B)$. 
The rank and spectrum of $X\in \mathcal{L}(A)$ are denoted by $\rank(X)$ and $\spec(X)$, respectively. 
The support of $X\in \mathcal{L}(A)$ is denoted by $\supp(X)$ and is defined as the orthogonal complement of the kernel of $X$. 
For $X,Y\in \mathcal{L}(A)$, $X\ll Y$ is true iff the kernel of $Y$ is contained in the kernel of $X$. 
For $X,Y\in \mathcal{L}(A)$, $X\perp Y$ is true iff $XY=0=YX$.

The adjoint of $X\in \mathcal{L}(A)$ is denoted by $X^\dagger$. 
For $X\in \mathcal{L}(A)$, $X\geq 0$ is true iff $X$ is positive semidefinite, and 
$X>0$ is true iff $X$ is positive definite. 
If $X,Y\in \mathcal{L}(A)$ are self-adjoint, then $X\geq Y$ is true iff $X-Y\geq 0$. 
If $X\in \mathcal{L}(A)$ is positive semidefinite, then  $X^p$ is defined for $p\in \mathbb{R}$ by taking the power on the support of $X$. 
The operator absolute value of $X\in \mathcal{L}(A)$ is denoted by $\lvert X\rvert \coloneqq (X^\dagger X)^{1/2}$. 
The trace of $X\in \mathcal{L}(A)$ is denoted as $\tr[X]$, and the partial trace over $A$ is denoted as $\tr_A$. 
For $X\in \mathcal{L}(A)$, the Schatten $p$-norm is defined as 
$\|X \|_p\coloneqq \tr[\lvert X\rvert^p]^{1/p}$ for $p\in [1,\infty)$, and as 
$\lVert X\rVert_{\infty}\coloneqq \sqrt{\max (\spec(X^\dagger X))}$ for $p=\infty$. 
The Schatten $p$-quasi-norm is defined as $\|X \|_p\coloneqq \tr[\lvert X\rvert^p]^{1/p}$ for $p\in (0,1)$.

If $X,Y\in \mathcal{L}(A)$ are self-adjoint, then $\{X\geq Y\}$ denotes the orthogonal projection onto the subspace corresponding to the non-negative eigenvalues of $X-Y$, and $\{X<Y\}\coloneqq 1-\{X\geq Y\}$ denotes the orthogonal projection onto the subspace corresponding to the strictly negative eigenvalues of $X-Y$. 

The set of quantum states on $A$ is defined as  
$\mathcal{S}(A)\coloneqq\{\rho\in \mathcal{L}(A):\rho\geq 0,\tr[\rho]=1\}$.

\subsection{Permutation invariance}\label{ssec:permutation}
We denote the \emph{symmetric group of degree $n\in \mathbb{N}_{>0}$} by $S_n$.
The unitary operator $U(\pi)_{A^n}\in \mathcal{L}(A^{\otimes n})$ associated with a permutation $\pi \in S_n$ is defined by
\begin{equation}
U(\pi)_{A^n}\ket{\phi_1}\otimes\dots\otimes\ket{\phi_n}
= \ket{\phi_{\pi^{-1}(1)}}\otimes\dots\otimes\ket{\phi_{\pi^{-1}(n)}}
\qquad\forall \ket{\phi_1},\dots,\ket{\phi_n}\in A.
\end{equation}

The set of \emph{permutation invariant operators} is defined as 
\begin{align}
\mathcal{L}_{\sym}(A^{\otimes n})
\coloneqq \{X_{A^n}\in \mathcal{L}(A^{\otimes n}):
U(\pi)_{A^n} X_{A^n} U(\pi)_{A^n}^\dagger = X_{A^n}\,\forall \pi \in S_n\},
\end{align}
and the set of \emph{permutation invariant states} as 
$\mathcal{S}_{\sym}(A^{\otimes n})\coloneqq \mathcal{S}(A^{\otimes n})\cap\mathcal{L}_{\sym}(A^{\otimes n})$.

We denote the \emph{universal permutation invariant state}~\cite{renner2006security,christandl2009postselection,hayashi2016correlation} for $n\in \mathbb{N}_{>0}$ as
\begin{equation}
\omega_{A^n}^n \coloneqq \frac{1}{g_{n,d_A}}\tr_{{A'}^n}[(P^{n}_{\sym})_{A^n{A'}^n}],
\qquad\text{where}\qquad
g_{n,d_A}\coloneqq \binom{n+d_A^2-1}{n},
\end{equation}
$A'$ is a Hilbert space isomorphic to $A$, 
and $(P^{n}_{\sym})_{A^n{A'}^n}$ is the orthogonal projection onto the symmetric subspace of $(AA')^{\otimes n}$.

\subsection{Entropies, divergences, and R\'enyi mutual information}\label{ssec:divergence}
For $\rho\in \mathcal{S}(A)$, the \emph{von Neumann entropy} is defined as $H(A)_\rho\coloneqq -\tr[\rho\log \rho]$.
The \emph{R\'enyi entropy (of order $\alpha$)} is defined as
$H_\alpha (A)_\rho
\coloneqq\frac{1}{1-\alpha}\log \tr[\rho^\alpha]$
for $\alpha\in (-\infty,1)\cup (1,\infty)$, 
and for $\alpha \in \{1,\infty\}$ as the corresponding limits. 
For $\rho\in \mathcal{S}(AB)$, 
the \emph{mutual information} between $A$ and $B$ is $I(A:B)_\rho \coloneqq H(A)_\rho+H(B)_\rho-H(AB)_\rho$. 
The \emph{mutual information variance} is defined as~\cite{tomamichel2013hierarchy,li2014second,hayashi2016correlation}
\begin{align}\label{eq:def-variance}
V(A:B)_\rho\coloneqq \tr[\rho_{AB}(\log \rho_{AB}-\log (\rho_A\otimes\rho_B)-I(A:B)_\rho)^2].
\end{align}

The \emph{Petz (quantum R\'enyi) divergence (of order $\alpha$)} is defined for $\alpha\in (0,1)\cup (1,\infty),\rho\in \mathcal{S}(A)$, and any positive semidefinite $\sigma\in \mathcal{L}(A)$ as~\cite{petz1986quasi}
\begin{equation}
D_\alpha (\rho\| \sigma)\coloneqq\frac{1}{\alpha -1} \log \tr [\rho^\alpha \sigma^{1-\alpha}]
\end{equation}
if $(\alpha <1\land \rho\not\perp\sigma)\lor \rho\ll \sigma$ and 
$D_\alpha (\rho\| \sigma)\coloneqq \infty$ else. 
In addition, $D_0$ and $D_1$ are defined as the respective limits of $D_\alpha$ as $\alpha\rightarrow 0$ and $\alpha\rightarrow 1$.

The \emph{sandwiched (quantum R\'enyi) divergence (of order $\alpha$)} is defined for $\alpha\in (0,1)\cup (1,\infty),\rho\in \mathcal{S}(A)$, and any positive semidefinite $\sigma\in \mathcal{L}(A)$ as~\cite{mueller2013quantum,wilde2014strong}
\begin{equation}
\widetilde{D}_\alpha (\rho\| \sigma)\coloneqq
\frac{1}{\alpha -1} \log \tr [(\sigma^{\frac{1-\alpha}{2\alpha}} \rho \sigma^{\frac{1-\alpha}{2\alpha}})^\alpha]
\end{equation}
if $(\alpha <1\land \rho\not\perp\sigma)\lor \rho\ll \sigma$ and 
$\widetilde{D}_\alpha (\rho\| \sigma)\coloneqq\infty$ else. 
In addition, $\widetilde{D}_1$ and $\widetilde{D}_\infty$ are defined as the respective limits of $\widetilde{D}_\alpha$ as $\alpha\rightarrow 1$ and $\alpha\rightarrow \infty$.

For $\rho_{AB}\in \mathcal{S}(AB)$, the \emph{doubly minimized PRMI (of order $\alpha$)} is defined for any $\alpha\in [0,\infty)$ as
\begin{align}
I_\alpha^{\downarrow\downarrow} (A:B)_\rho 
&\coloneqq \inf_{\substack{\sigma_A\in \mathcal{S}(A),\\ \tau_B\in \mathcal{S}(B)}} D_\alpha (\rho_{AB}\| \sigma_A\otimes \tau_B ),
\label{eq:prmi2}
\end{align}
and the \emph{doubly minimized SRMI (of order $\alpha$)} is defined for any $\alpha\in (0,\infty]$ as 
\begin{align}
\widetilde{I}^{\downarrow\downarrow}_\alpha (A:B)_\rho 
&\coloneqq \inf_{\substack{\sigma_A\in \mathcal{S}(A),\\ \tau_B\in \mathcal{S}(B)}}\widetilde{D}_\alpha (\rho_{AB}\| \sigma_A\otimes \tau_B ).
\label{eq:srmi2}
\end{align}

\subsection{Binary quantum state discrimination}\label{ssec:hypothesis}
In this section, we will explain the notation related to binary discrimination. 
First, we will describe the classical setting, then the quantum setting, and then we will explain how the classical setting can be regarded as a special case of the quantum setting under suitable conditions. 

\paragraph*{Classical setting.} Classical binary discrimination pertains to a scenario where one is given a PMF $S_X$ over $\mathcal{X}$ that is an element of $H_0$ or $H_1$, both of which are non-empty subsets of the set of PMFs over $\mathcal{X}$. 
The task is to decide which is true: $S_X\in H_0$ (the \emph{null hypothesis}) or $S_X\in H_1$ (the \emph{alternative hypothesis}). 
The \emph{test} is a function $T:\mathcal{X}\rightarrow [0,1]$, and the decision is determined by the binary test $(T,1-T)$ applied to $S_X$. 
If the event corresponding to $T$ occurs, then the decision is made that the null hypothesis is true.
Conversely, if the event corresponding to $1-T$ occurs, then the decision is made that the alternative hypothesis is true.
This paper solely addresses discrimination problems with a \emph{simple} null hypothesis, i.e., $H_0=\{P_X\}$ for some PMF $P_X$.

We are mainly concerned with \emph{sequences} of binary discrimination problems of the following form, for a fixed PMF $P_{XY}$ over $\mathcal{X}\times\mathcal{Y}$. 
For each $n\in \mathbb{N}_{>0}$, the null hypothesis is $H_0^n=\{P_{XY}^{\times n} \}$, the alternative hypothesis $H_1^n$ is a non-empty subset of the set of PMFs over $(\mathcal{X}\times\mathcal{Y})^{\times n}$, and the test is a function 
$T^n:(\mathcal{X}\times \mathcal{Y})^{\times n}\rightarrow [0,1]$. 
The \emph{type-I error} and the \emph{(worst case) type-II error} are then, respectively,
\begin{align}
\alpha_n(T^n)&\coloneqq \sum_{\substack{x_1,\dots, x_n\in \mathcal{X}, \\ y_1,\dots, y_n\in \mathcal{Y}}} 
P_{XY}(x_1,y_1) \cdot \ldots \cdot P_{XY}(x_n,y_n)(1-T^n(x_1,y_1,\dots, x_n,y_n)) ,
\label{eq:classical-type-I}\\
\beta_n(T^n)&\coloneqq \sup_{Q_{X^nY^n}\in H_1^n}\sum_{\substack{x_1,\dots, x_n\in \mathcal{X}, \\ y_1,\dots, y_n\in \mathcal{Y}}} 
Q_{X^nY^n}(x_1,y_1,\dots, x_n,y_n)T^n(x_1,y_1,\dots, x_n,y_n).
\label{eq:classical-type-II}
\end{align}
For the $n$th hypothesis testing problem, the \emph{minimum} type-I error when the type-II error is upper bounded by $\mu \in [0,\infty)$ is denoted by
\begin{equation}\label{eq:alpha-classical}
\hat{\alpha}_n(\mu)\coloneqq \inf_{T^n}\{\alpha_n(T^n): \beta_n(T^n)\leq \mu \},
\end{equation}
where the minimization is over all functions 
$T^n:(\mathcal{X}\times \mathcal{Y})^{\times n}\rightarrow [0,1]$. 

\paragraph*{Quantum setting.} 
Binary quantum state discrimination pertains to a scenario where one is given a quantum state $\xi\in \mathcal{S}(A)$ that is an element of $H_0$ or $H_1$, both of which are non-empty subsets of $\mathcal{S}(A)$. 
The task is to decide which is true: $\xi\in H_0$ (the \emph{null hypothesis}) or $\xi\in H_1$ (the \emph{alternative hypothesis}). 
The \emph{test} is some $T\in \mathcal{L}(A)$ that satisfies $0\leq T\leq 1$. 
The decision is determined by the binary measurement $(T,1-T)$ applied to $\xi$. 
If the measurement outcome is the one associated with $T$, then the decision is made that the null hypothesis is true. 
Conversely, if the measurement outcome is the one associated with $1-T$, then the decision is made that the alternative hypothesis is true.
This paper solely addresses quantum state discrimination problems with a \emph{simple} null hypothesis, i.e., $H_0=\{\rho\}$ for some $\rho\in \mathcal{S}(A)$. 

We are mainly concerned with \emph{sequences} of binary quantum state discrimination problems of the following form, for a fixed $\rho_{AB}\in \mathcal{S}(AB)$. 
For each $n\in \mathbb{N}_{>0}$, the null hypothesis is $H_0^n=\{\rho_{AB}^{\otimes n} \}$, the alternative hypothesis $H_1^n$ is a non-empty subset of $\mathcal{S}(A^nB^n)$, and the test is some $T^n_{A^nB^n}\in \mathcal{L}(A^nB^n)$ that satisfies $0\leq T^n_{A^nB^n}\leq 1$.
The \emph{type-I error} and the \emph{(worst case) type-II error} are then, respectively,
\begin{align}
\alpha_n(T^n_{A^nB^n})&\coloneqq \tr[\rho_{AB}^{\otimes n} (1-T^n_{A^nB^n})],
\\
\beta_n(T^n_{A^nB^n})&\coloneqq \sup_{\sigma_{A^nB^n}\in H_1^n} \tr[\sigma_{A^nB^n} T^n_{A^nB^n}].
\end{align}
For the $n$th hypothesis testing problem, the \emph{minimum} type-I error when the type-II error is upper bounded by $\mu \in [0,\infty)$ is denoted by
\begin{equation}\label{eq:alpha-quantum}
\hat{\alpha}_n(\mu)\coloneqq \inf_{\substack{T^n_{A^nB^n}\in \mathcal{L} (A^nB^n): \\ 0\leq T^n_{A^nB^n}\leq 1}}\{\alpha_n(T^n_{A^nB^n}): \beta_n(T^n_{A^nB^n})\leq \mu \}.
\end{equation}

In order to quantify the trade-off between type-I and type-II errors in the asymptotic limit where $n\rightarrow\infty$, we define the following error exponents.~\cite{tomamichel2018operational,mosonyi2022error}
\begin{itemize}
\item The \emph{direct exponent} with respect to $R\in [0,\infty)$ is
$\liminf_{n\rightarrow\infty } -\frac{1}{n}\log \hat{\alpha}_n(e^{-nR})$
if this limit exists, 
and $+\infty$ else. 
($R$ is referred to as the \emph{type-II rate}.)
\item The \emph{strong converse exponent} with respect to $R\in [0,\infty)$ is 
$\limsup_{n\rightarrow\infty} -\frac{1}{n}\log (1- \hat{\alpha}_n (e^{-nR}))$
if this limit exists, 
and $+\infty$ else. 
($R$ is referred to as the \emph{type-II rate}.)
\item The \emph{threshold rate} (or: \emph{Stein exponent}) is 
$\sup\{R\in \mathbb{R}:\limsup_{n\rightarrow\infty}\hat{\alpha}_n(e^{-nR})=0\}$.
\item The \emph{strong converse threshold rate} is 
$\inf\{R\in \mathbb{R}:\liminf_{n\rightarrow\infty}\hat{\alpha}_n(e^{-nR})=1\}$
if this infimum exists, 
and $+\infty$ else.
\end{itemize}

\paragraph*{CC states.} 
Binary \emph{quantum} state discrimination can be compared to \emph{classical} binary discrimination by restricting the former to the special case of classical-classical (CC) states, as we will elaborate in the following. 
Let $\mathcal{X}\coloneqq [d_A],\mathcal{Y}\coloneqq [d_B]$,
let $P_{XY}$ be a PMF over $\mathcal{X}\times\mathcal{Y}$, and let 
$\rho_{AB}\coloneqq\sum_{x\in \mathcal{X},y\in \mathcal{Y}}P_{XY}(x,y)\proj{a_x,b_y}_{AB}$, 
where $\{\ket{a_x}_A\}_{x\in [d_A]}$ and $\{\ket{b_y}_B\}_{y\in [d_B]}$ are orthonormal bases for $A$ and $B$, respectively. 

Consider a sequence of binary \emph{quantum} state discrimination problems as above with 
null hypothesis $H_0^{\mathrm{q},n}=\{\rho_{AB}^{\otimes n}\}$ and alternative hypothesis $H_1^{\mathrm{q},n}$, 
and let the function in~\eqref{eq:alpha-quantum} be denoted by $\hat{\alpha}_n^{\mathrm{q}}$. 
For notational convenience, let us define for any $\sigma_{A^nB^n}\in H_1^{\mathrm{q},n}$
\begin{align}
Q^\sigma_{X^nY^n}(x_1,y_1,\dots, x_n,y_n)\coloneqq 
\bra{a_{x_1},b_{y_1},\dots, a_{x_n},b_{y_n}} \sigma_{A^nB^n} \ket{a_{x_1},b_{y_1},\dots, a_{x_n},b_{y_n}}
\end{align}
for all $x_1,\dots,x_n\in \mathcal{X},y_1,\dots, y_n\in \mathcal{Y}$. 
To facilitate subsequent comparison with the classical setting, suppose that the alternative hypothesis is closed under pinching with respect to the eigenbasis of the CC state $\rho_{AB}$, i.e., 
for all $\sigma_{A^nB^n}\in H_1^{\mathrm{q},n}:$
\begin{align}\label{eq:analogy1}
\sum_{\substack{x_1,\dots, x_n\in \mathcal{X},\\y_1,\dots, y_n\in \mathcal{Y}}}
Q^\sigma_{X^nY^n}(x_1,y_1,\dots, x_n,y_n)
\proj{a_{x_1},b_{y_1},\dots, a_{x_n},b_{y_n}}
\in H_1^{\mathrm{q},n}.
\end{align}
Consider now a sequence of \emph{classical} binary discrimination problems as above with null hypothesis $H_0^{\mathrm{c},n}=\{P_{XY}^{\times n}\}$ 
and alternative hypothesis $H_1^{\mathrm{c},n}$, 
and let the function in~\eqref{eq:alpha-classical} be denoted by $\hat{\alpha}_n^{\mathrm{c}}$. 
Suppose that the alternative hypotheses for the quantum and the classical setting are \emph{analogous} in the sense that
\begin{align}\label{eq:analogy2}
\bigcup_{\sigma_{A^nB^n}\in H_1^{\mathrm{q},n}} \{Q^\sigma_{X^nY^n}\}
=H_1^{\mathrm{c},n}.
\end{align}
Assuming that the conditions in~\eqref{eq:analogy1} and~\eqref{eq:analogy2} are satisfied, 
it is straightforward to derive that 
$\hat{\alpha}_n^{\mathrm{q}}(\mu)=\hat{\alpha}_n^{\mathrm{c}}(\mu)$ for all $\mu\in [0,\infty)$, 
as shown in Appendix~\ref{app:classical}. 
Thus, classical binary discrimination problems can be regarded as special cases of binary quantum state discrimination problems. 

\begin{rem}[Extension of statements in Table~\ref{tab:classical}]\label{rem:extension}
The conditions in~\eqref{eq:analogy1} and~\eqref{eq:analogy2} are satisfied for the examples in Table~\ref{tab:classical} and Table~\ref{tab:quantum}. 
Therefore, the classical discrimination problems in Table~\ref{tab:classical} can be regarded as special cases of the corresponding binary quantum state discrimination problems in Table~\ref{tab:quantum}, so propositions in Table~\ref{tab:quantum} imply corresponding propositions in Table~\ref{tab:classical}. 
This implies that the equality for the direct exponent in the second row of Table~\ref{tab:classical} can be extended to $R\in (0,\infty)$, 
and that the equalities for the strong converse exponent in the second and third row of Table~\ref{tab:classical} can be extended to $R\in [0,\infty)$. 
\end{rem}

\section{Main results}\label{sec:main}

\paragraph*{Correlation detection.} Let $\rho_{AB}$ be a bipartite quantum state that is correlated, i.e., $\rho_{AB}\neq \rho_A\otimes \rho_B$. 
How well can this quantum state be distinguished from any uncorrelated quantum state $\sigma_A\otimes \tau_B$? 
To make this question more precise, we will use the terminology for binary quantum state discrimination, as outlined above in Section~\ref{ssec:hypothesis}. 
Consider the null hypothesis $H_0\coloneqq \{\rho_{AB}\}$ and the alternative hypothesis 
$H_1\coloneqq \{\sigma_A\otimes \tau_B:\sigma_A\in \mathcal{S}(A),\tau_B\in \mathcal{S}(B)\}$. 
Then the minimum type-I error when the type-II error is upper bounded by $\mu\in [0,\infty)$ is
\begin{align}\label{eq:corr-det}
\min_{\substack{T_{AB}\in \mathcal{L} (AB): \\ 0\leq T_{AB}\leq 1}}
\{\tr[\rho_{AB}(1-T_{AB})]: 
\max_{\substack{\sigma_{A} \in \mathcal{S}(A),\\ \tau_B \in \mathcal{S}(B)}}
\tr[\sigma_{A}\otimes \tau_{B}\, T_{AB}]
\leq \mu\}.
\end{align}
\eqref{eq:corr-det} is the minimum probability with which one erroneously decides that the given quantum state is uncorrelated, under the constraint that the probability with which one erroneously decides that the given quantum state is $\rho_{AB}$ is upper bounded by $\mu$. 

The goal of this section is to show that the single-letter formulas for the direct and the strong converse exponent in the third row of Table~\ref{tab:classical}, where the doubly minimized RMI occurs, can be generalized from the classical to the quantum setting by means of the doubly minimized PRMI and the doubly minimized SRMI, respectively. 
Accordingly, we are interested in any one of the following variants of correlation detection as described in~\eqref{eq:corr-det}. 
It should be noted that several variants of the problem are introduced, as all of them will be found to have the same direct and strong converse exponents.

First, one may consider the i.i.d. version of~\eqref{eq:corr-det}, where 
$H_0^n\coloneqq \{\rho_{AB}^{\otimes n}\}$ and 
$H_1^n\coloneqq \{\sigma_A^{\otimes n}\otimes \tau_B^{\otimes n}:\sigma_A\in \mathcal{S}(A),\tau_B\in \mathcal{S}(B)\}$. 
The minimum type-I error is then 
\begin{align}
\hat{\alpha}_{n,\rho}^{\mathrm{iid}}(\mu )
&\coloneqq \min_{\substack{T^n_{A^nB^n}\in \mathcal{L} (A^nB^n): \\ 0\leq T^n_{A^nB^n}\leq 1}}
\{\tr[\rho_{AB}^{\otimes n}(1-T^n_{A^nB^n})]:
\max_{\substack{\sigma_{A} \in \mathcal{S}(A),\\ \tau_B \in \mathcal{S}(B)}}
\tr[\sigma_{A}^{\otimes n}\otimes \tau_{B}^{\otimes n}\, T^n_{A^nB^n}]
\leq \mu\},
\label{eq:def-alpha-otimes}
\end{align}
which shall be defined for any $n\in \mathbb{N}_{>0},\rho_{AB}\in \mathcal{S}(AB),\mu\in [0,\infty)$.

Second, one may impose the i.i.d. assumption on the null hypothesis only, and take the alternative hypothesis to be given by all states that are uncorrelated between $A^n$ and $B^n$, and permutation invariant on both $A^n$ and $B^n$, i.e., 
$H_0^n\coloneqq \{\rho_{AB}^{\otimes n}\}$ and 
$H_1^n\coloneqq \{\sigma_{A^n}\otimes \tau_{B^n}:\sigma_{A^n}\in \mathcal{S}_{\sym}(A^{\otimes n}),\tau_{B^n}\in \mathcal{S}_{\sym}(B^{\otimes n})\}$. 
The minimum type-I error is then 
\begin{align}
\hat{\alpha}_{n,\rho}(\mu)
&\coloneqq \min_{\substack{T^n_{A^nB^n}\in \mathcal{L} (A^nB^n): \\ 0\leq T^n_{A^nB^n}\leq 1}}
\{\tr[\rho_{AB}^{\otimes n}(1-T^n_{A^nB^n})]:
\max_{\substack{\sigma_{A^n} \in \mathcal{S}_{\sym}(A^{\otimes n}),\\ \tau_{B^n} \in \mathcal{S}_{\sym}(B^{\otimes n})}}
\tr[\sigma_{A^n}\otimes \tau_{B^n} T^n_{A^nB^n}]\leq \mu\}.
\label{eq:def-alpha}
\end{align}

Third, the second option may be modified by imposing the permutation invariance constraint on $A^n$ only, i.e., 
$H_0^n\coloneqq \{\rho_{AB}^{\otimes n}\}$ and 
$H_1^n\coloneqq \{\sigma_{A^n}\otimes \tau_{B^n}:\sigma_{A^n}\in \mathcal{S}_{\sym}(A^{\otimes n}),\tau_{B^n}\in \mathcal{S}(B^n)\}$. 
The minimum type-I error then matches that of the second option,
\begin{align}\label{eq:alpha-sym-an}
\hat{\alpha}_{n,\rho}(\mu )
&= \min_{\substack{T^n_{A^nB^n}\in \mathcal{L} (A^nB^n): \\ 0\leq T^n_{A^nB^n}\leq 1}}
\{\tr[\rho_{AB}^{\otimes n}(1-T^n_{A^nB^n})]:
\max_{\substack{\sigma_{A^n} \in \mathcal{S}_{\sym}(A^{\otimes n}),\\ \tau_{B^n} \in \mathcal{S}(B^n)}}
\tr[\sigma_{A^n}\otimes \tau_{B^n} T^n_{A^nB^n}]
\leq \mu\}.
\end{align}
The proof of the equality in~\eqref{eq:alpha-sym-an} is given in Appendix~\ref{app:lemmas}, see Lemma~\ref{lem:order}~(c).

\subsection{Operational interpretation of doubly minimized PRMI from direct exponent}\label{ssec:direct}
\paragraph*{Problem formulation.} The problem we are interested in is to find a single-letter formula for the direct exponent of the quantum state discrimination problems associated with $\hat{\alpha}_{n,\rho}^{\mathrm{iid}}$ and $\hat{\alpha}_{n,\rho}$ as defined in~\eqref{eq:def-alpha-otimes} and~\eqref{eq:def-alpha}.
This is accomplished in the following theorem.
It shows that if the type-II rate $R$ is sufficiently large but below the threshold given by $I(A:B)_\rho$, then the minimum type-I error decreases to $0$ exponentially fast in $n$, and the corresponding exponent is determined by the family of the doubly minimized PRMIs of order $s\in (\frac{1}{2},1)$. 

\begin{thm}[Direct exponent]\label{thm:direct}
Let $\rho_{AB}\in \mathcal{S}(AB)$ and let
\begin{equation}\label{eq:def-r12}
R_{1/2}\coloneqq I_{1/2}^{\downarrow\downarrow}(A:B)_\rho-\frac{1}{4}\frac{\partial}{\partial s^+} I_{s}^{\downarrow\downarrow}(A:B)_\rho\big|_{s=1/2}
\in [I_{0}^{\downarrow\downarrow}(A:B)_\rho,I_{1/2}^{\downarrow\downarrow}(A:B)_\rho].
\end{equation}
Then, for any $R\in (R_{1/2},\infty)$
\begin{equation}\label{eq:direct}
\lim_{n\rightarrow\infty}-\frac{1}{n}\log \hat{\alpha}_{n,\rho}(e^{-nR})
=\sup_{s\in (\frac{1}{2},1)}\frac{1-s}{s}(I_s^{\downarrow\downarrow}(A:B)_\rho - R),
\end{equation}
and the same is true if $\hat{\alpha}_{n,\rho}$ in~\eqref{eq:direct} is replaced by 
$\hat{\alpha}_{n,\rho}^{\mathrm{iid}}$.

Furthermore, for any $R\in (0,\infty)$, the right-hand side of~\eqref{eq:direct} lies in $[0,\max(0,I(A:B)_\rho -R)]$, 
and it is strictly positive iff $R<I(A:B)_\rho$.
\end{thm}
\begin{proof}
See Appendix~\ref{app:lemmas} and Appendix~\ref{app:hoeffding}.
\end{proof}

The proof of Theorem~\ref{thm:direct} is divided into two parts: a proof of achievability and a proof of optimality. 
The proof of achievability uses a quantum Neyman-Pearson test that compares 
$\rho_{AB}^{\otimes n}$ with $\omega_{A^n}^n\otimes \omega_{B^n}^n$, 
and leverages the asymptotic optimality of the universal permutation invariant state, which has been established in \cite[Theorem~3~(l)]{burri2025prmisrmi1}. 
This proof method is an adapted version of an analogous proof of achievability for the singly minimized PRMI~\cite[Section~V.A]{hayashi2016correlation}. 
The proof of optimality employs techniques for classical binary hypothesis testing from~\cite{tomamichel2018operational}, and makes use of several properties of the doubly minimized PRMI, including \cite[Theorem~3~(j), (p), (q)]{burri2025prmisrmi1}.

\begin{rem}[Necessity of permutation invariance of alternative hypothesis]\label{rem:permutation_prmi} 
Consider the following variant of $\hat{\alpha}_{n,\rho}$ where the alternative hypothesis is constrained only by the independence of $A^n$ and $B^n$.
\begin{align}\label{eq:def-alpha-ind}
\hat{\alpha}_{n,\rho}^{\mathrm{ind}}(\mu)
&\coloneqq \min_{\substack{T^n_{A^nB^n}\in \mathcal{L} (A^nB^n): \\ 0\leq T^n_{A^nB^n}\leq 1}}
\{\tr[\rho_{AB}^{\otimes n}(1-T^n_{A^nB^n})]: 
\max_{\substack{\sigma_{A^n} \in \mathcal{S}(A^n),\\ \tau_{B^n} \in \mathcal{S}(B^n)}}
\tr[\sigma_{A^n}\otimes \tau_{B^n} T^n_{A^nB^n}]\leq \mu\}
\end{align}
In light of the equality in~\eqref{eq:alpha-sym-an}, it is natural to inquire whether Theorem~\ref{thm:direct} remains valid when $\hat{\alpha}_{n,\rho}$ is replaced by $\hat{\alpha}^{\mathrm{ind}}_{n,\rho}$. 
This is not the case; an explicit counterexample where $\rho_{AB}$ is a correlated CC state is provided in Appendix~\ref{app:example_prmi}. 
\end{rem}

The proof of achievability for Theorem~\ref{thm:direct} yields the following corollary.

\begin{cor}[Asymptotic minimum type-I error]\label{cor:limit-type1_direct}
Let $\rho_{AB}\in \mathcal{S}(AB)$ and let $R\in (-\infty,I(A:B)_\rho)$. Then 
$\lim_{n\rightarrow\infty}\hat{\alpha}_{n,\rho}(e^{-nR})=0$.
Moreover, the same is true if $\hat{\alpha}_{n,\rho}$ is replaced by 
$\hat{\alpha}_{n,\rho}^{\mathrm{iid}}$.
\end{cor}
\begin{proof}
See Appendix~\ref{app:cor_direct}.
\end{proof}

The proof of Theorem~\ref{thm:direct} implies the following corollary, which can be seen as an alternative formulation of Theorem~\ref{thm:direct}. 
\begin{cor}[Direct exponent]\label{cor:direct}
Let $\rho_{AB}\in \mathcal{S}(AB)$ and let
\begin{align}\label{eq:def-R(s)-direct}
R:(1/2,1]\rightarrow [0,I(A:B)_\rho],\quad 
s\mapsto 
I_s^{\downarrow\downarrow}(A:B)_\rho-s(1-s)\frac{\mathrm{d}}{\mathrm{d} s}I_s^{\downarrow\downarrow}(A:B)_\rho.
\end{align}
Then $R$ is continuous and monotonically increasing. 
Let $R_{1/2}\coloneqq \lim_{s\rightarrow 1/2^+}R(s)$ and 
$R(1/2)\coloneqq R_{1/2}$. 
Let $s_{1/2}\coloneqq \max\{s\in [\frac{1}{2},1]:R(s)=R_{1/2}\}$ and 
$s_1\coloneqq\min\{s\in [\frac{1}{2},1]:R(s)=I(A:B)_\rho\}$.

Suppose $I_{1/2}^{\downarrow\downarrow}(A:B)_\rho\neq I(A:B)_\rho$. 
Then, $\frac{1}{2}= s_{1/2}<s_1\leq 1$ and for any $s\in (s_{1/2},s_1)$
\begin{equation}\label{eq:direct-cor}
\lim_{n\rightarrow\infty}-\frac{1}{n}\log \hat{\alpha}_{n,\rho}(e^{-nR(s)})
=\frac{1-s}{s}(I_s^{\downarrow\downarrow}(A:B)_\rho - R(s))
=(1-s)^2\frac{\mathrm{d}}{\mathrm{d}s} I_s^{\downarrow\downarrow}(A:B)_\rho.
\end{equation}
Moreover, the same is true if $\hat{\alpha}_{n,\rho}$ in~\eqref{eq:direct-cor} is replaced by 
$\hat{\alpha}_{n,\rho}^{\mathrm{iid}}$.
\end{cor}
\begin{proof}
See Appendix~\ref{app:proof-cor_direct}.
\end{proof}

\begin{rem}[Direct exponent, assuming non-zero mutual information variance]\label{rem:direct} 
The formulation of Corollary~\ref{cor:direct} simplifies if $\rho_{AB}$ is assumed to have non-zero mutual information variance. 
More specifically, as proven in Appendix~\ref{app:rem_direct}: 
\emph{For any $\rho_{AB}\in \mathcal{S}(AB)$ such that $V(A:B)_\rho\neq 0$ holds~\eqref{eq:direct-cor} for all $s\in (\frac{1}{2},1)$, 
and the same is true if $\hat{\alpha}_{n,\rho}$ in~\eqref{eq:direct-cor} is replaced by 
$\hat{\alpha}_{n,\rho}^{\mathrm{iid}}$.} 
Note that there exist states whose mutual information variance is zero. 
For instance, $V(A:B)_\rho=0$ 
if $\rho_{AB}$ is a product state, 
if $\rho_{AB}$ is a pure state whose non-zero Schmidt coefficients are all equal to each other, 
or if $\rho_{AB}$ is a copy-CC state with a flat probability distribution (i.e., $\exists x\in \mathcal{X}: \forall y\in \mathcal{X}: P_X(y)=P_X(x) $ or $P_X(y)=0$). 
Characterizing the class of states whose mutual information variance is zero is left as an open problem. 
In particular, we leave open the question of whether any $\rho_{AB}\in \mathcal{S}(AB)$ such that 
$I_{1/2}^{\downarrow\downarrow}(A:B)_\rho\neq I(A:B)_\rho$ has non-zero mutual information variance. 
\end{rem}

The following result is a corollary of Theorem~\ref{thm:direct} and Corollary~\ref{cor:direct}. It provides a single-letter formula for the forward $\beta$-cutoff rate. The concept of cutoff rates in the context of hypothesis testing has been originally introduced in~\cite{csiszar1995generalized} (see also~\cite{alajaji2004csiszars}). 
To define the forward $\beta$-cutoff rate, one may view the direct exponent as a function of $R$ (which is monotonically decreasing) and consider linear functions of $R$ that are lower bounds on this curve. 
The forward $\beta$-cutoff rate then characterizes the best such linear lower bound. 
More explicitly, it identifies the largest $R_0$ such that the direct exponent is bounded below by a line of slope $\beta$ intersecting the $x$-axis at $R=R_0$, i.e., 
the direct exponent behaves no worse than $\beta (R-R_0)$ for all relevant rates $R$. 
The parameter $\beta\in (-1,0)$ controls the steepness of this bound: 
values closer to $0$ yield flatter bounds, while values closer to $-1$ correspond to steeper ones. 
Thus, the forward $\beta$-cutoff rate is the $x$-axis intercept of the best supporting line of slope $\beta$ to the direct exponent.

\begin{cor}[Forward $\beta$-cutoff rate]\label{cor:forward-cutoff}
Let $\rho_{AB}\in \mathcal{S}(AB)$. 
Let $R_{1/2},s_{1/2},s_1$ be defined as in Corollary~\ref{cor:direct}. 
For all $\alpha\in \{1/2,1\}$, let $\beta_\alpha\coloneqq 1-\frac{1}{s_\alpha}$. 
For all $\beta\in (-1,0)$, let
\begin{align}\label{eq:r0-f}
R_0^{(f)}(\beta)_\rho
\coloneqq \sup \{R_0\in [0,\infty):\liminf_{n\rightarrow\infty} -\frac{1}{n}\log \hat{\alpha}_{n,\rho}(e^{-nR}) \geq \beta (R-R_0) \quad \forall R\in (R_{1/2},\infty) \}.
\end{align}

Suppose $I_{1/2}^{\downarrow\downarrow}(A:B)_\rho\neq I(A:B)_\rho$. 
Then, $-1=\beta_{1/2}<\beta_1\leq 0$ and for all $\beta\in (\beta_{1/2},\beta_1)$
\begin{align}\label{eq:r0f}
R_0^{(f)}(\beta)_\rho=I_{\frac{1}{1-\beta}}^{\downarrow\downarrow}(A:B)_\rho .
\end{align}
Moreover, the same is true if $\hat{\alpha}_{n,\rho}$ in~\eqref{eq:r0-f} is replaced by 
$\hat{\alpha}_{n,\rho}^{\mathrm{iid}}$.
\end{cor}
\begin{proof}
See Appendix~\ref{ssec:forward-cutoff}.
\end{proof}

\begin{rem}[Forward $\beta$-cutoff rate, assuming non-zero mutual information variance]\label{rem:cutoff_f} 
The formulation of Corollary~\ref{cor:forward-cutoff} simplifies if $\rho_{AB}$ is assumed to have non-zero mutual information variance. 
For such states, $s_1=1$ (see proof for Remark~\ref{rem:direct} in Appendix~\ref{app:rem_direct}), so $\beta_1=0$. 
Corollary~\ref{cor:forward-cutoff} is therefore simplified to: 
\emph{For any $\rho_{AB}\in \mathcal{S}(AB)$ such that $V(A:B)_\rho\neq 0$ holds~\eqref{eq:r0f} for all $\beta\in (-1,0)$, 
and the same is true if $\hat{\alpha}_{n,\rho}$ in~\eqref{eq:r0-f} is replaced by 
$\hat{\alpha}_{n,\rho}^{\mathrm{iid}}$.} 
\end{rem}

The significance of Corollary~\ref{cor:forward-cutoff} (and Remark~\ref{rem:cutoff_f}) lies in the fact that it provides an operational interpretation for individual members of the family of the doubly minimized PRMI of order $\alpha\in (\frac{1}{2},1)$. 
In contrast, Theorem~\ref{thm:direct} only provides an operational interpretation for the entire family of the doubly minimized PRMI of order $\alpha\in (\frac{1}{2},1)$ due to the presence of a supremum on the right-hand side of~\eqref{eq:direct}.

For completeness, we also study the moderate deviation regime below the threshold $I(A:B)_\rho$. 
Namely, we investigate the minimum type-I error when the type-II error decays as $e^{-nR_n}$, 
where $R_n\coloneqq I(A:B)_\rho-a_n$ and $(a_n)_{n\in \mathbb{N}}$ is a moderate sequence, i.e., a strictly positive sequence satisfying $a_n\rightarrow 0$ and $\sqrt{n}a_n\rightarrow\infty$ as $n\rightarrow\infty$. 
Paralleling the results of~\cite{Cheng2017ModerateDA,Chubb_2017} for binary quantum state discrimination in the i.i.d. setting, we show that the minimum type-I error behaves as $\exp(-\frac{na_n^2}{2V(A:B)_\rho}+o(na_n^2))$, as formalized in the following theorem. 

\begin{thm}[Moderate deviation analysis below threshold]\label{thm:moderate_direct}
Let $(a_n)_{n\in \mathbb{N}}$ be a strictly positive real sequence such that 
$\lim_{n\rightarrow\infty}a_n=0$ and $\lim_{n\rightarrow\infty}\sqrt{n}a_n=\infty$. 
Let $\rho_{AB}\in \mathcal{S}(AB)$ be such that $V(A:B)_\rho>0$. Then,
\begin{align}\label{eq:mod-thm-d}
\lim_{n\rightarrow\infty} -\frac{1}{na_n^2}\log \hat{\alpha}_{n,\rho}(e^{-nR_n})
=\frac{1}{2V(A:B)_\rho},
\end{align}
where $R_n\coloneqq I(A:B)_\rho-a_n$ for all $n\in \mathbb{N}$. 
Moreover, the same is true if $\hat{\alpha}_{n,\rho}$ in~\eqref{eq:mod-thm-d} is replaced by $\hat{\alpha}_{n,\rho}^{\mathrm{iid}}$.
\end{thm}
\begin{proof}
See Appendix~\ref{proof:moderate_direct}.
\end{proof}

The proof of Theorem~\ref{thm:moderate_direct} consists of two parts: a proof of achievability and a proof of optimality. 
The proof of achievability follows by refining the direct exponent analysis in the moderate deviation regime, using a local expansion~\cite{burri2025prmisrmi1} of the doubly minimized PRMI of order $\alpha$ around $\alpha=1$. 
The proof of optimality is a direct consequence of the moderate deviation analysis in the i.i.d. setting~\cite{Cheng2017ModerateDA,Chubb_2017}.

\subsection{Operational interpretation of doubly minimized SRMI from strong converse exponent}\label{ssec:strongconverse}
\paragraph*{Problem formulation.} 
The problem we are interested in is to find a single-letter formula for the strong converse exponent of the binary quantum state discrimination problems associated with $\hat{\alpha}_{n,\rho}^{\mathrm{iid}}$ and $\hat{\alpha}_{n,\rho}$ as defined in~\eqref{eq:def-alpha-otimes} and~\eqref{eq:def-alpha}. 
This is accomplished in the following theorem. 
It shows that if the type-II rate $R$ exceeds the threshold given by $I(A:B)_\rho$, then the minimum type-I error goes to $1$ exponentially fast in $n$, and the corresponding exponent is determined by the family of the doubly minimized SRMIs of order $s\in (1,\infty)$. 

\begin{thm}[Strong converse exponent]\label{thm:strong-converse}
Let $\rho_{AB}\in \mathcal{S}(AB)$ be such that 
$I(A:B)_\rho\neq \widetilde{I}_\infty^{\downarrow\downarrow}(A:B)_\rho$. 
For any $R\in [0,\infty)$ 
\begin{equation}\label{eq:strong-converse}
\lim_{n\rightarrow\infty} -\frac{1}{n}\log (1-\hat{\alpha}_{n,\rho}(e^{-nR}))
=\sup_{s\in (1,\infty)}\frac{s-1}{s}(R-\widetilde{I}_s^{\downarrow\downarrow}(A:B)_\rho ),
\end{equation}
and the same is true if $\hat{\alpha}_{n,\rho}$ in~\eqref{eq:strong-converse} is replaced by 
$\hat{\alpha}_{n,\rho}^{\mathrm{iid}}$.

Furthermore, for any $\rho_{AB}\in \mathcal{S}(AB),R\in [0,\infty)$, the right-hand side of~\eqref{eq:strong-converse} lies in $[0,\max(0,R-I(A:B)_\rho)]$, 
and it is strictly positive iff $R>I(A:B)_\rho$.
\end{thm}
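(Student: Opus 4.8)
The plan is to follow the template of the analogous result for the singly minimized SRMI in~\cite{hayashi2016correlation}, splitting the claim into an achievability (lower bound) part and an optimality (upper bound) part for the strong converse exponent, and then separately verifying the elementary properties of the right-hand side of~\eqref{eq:strong-converse}. Throughout, write $E(R)\coloneqq \sup_{s\in(1,\infty)}\frac{s-1}{s}(R-\widetilde{I}_s^{\downarrow\downarrow}(A:B)_\rho)$ for the claimed exponent. The first reduction is to note that since $\hat{\alpha}_{n,\rho}(\mu)\leq \hat{\alpha}_{n,\rho}^{\mathrm{iid}}(\mu)$ trivially (the i.i.d.\ constraint set is smaller), it suffices to prove the lower bound $\liminf -\frac1n\log(1-\hat{\alpha}_{n,\rho}^{\mathrm{iid}})\geq E(R)$ on the i.i.d.\ version and the upper bound $\limsup -\frac1n\log(1-\hat{\alpha}_{n,\rho})\leq E(R)$ on the permutation-invariant version; both together with $\hat{\alpha}_{n,\rho}\leq\hat{\alpha}_{n,\rho}^{\mathrm{iid}}$ pin down the common limit.

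For the \emph{upper bound} (optimality), I would use that by Theorem~\ref{thm:srmi2}~(j), specifically the second equality in~\eqref{eq:srmi2-omega1}, the doubly minimized SRMI of order $s$ is asymptotically attained by pinching $\rho_{AB}^{\otimes n}$ onto $\omega_{A^n}^n\otimes\omega_{B^n}^n$; choosing for each $n$ the Neyman--Pearson-type test $T^n=\{\rho_{AB}^{\otimes n}\geq e^{na}\,\omega_{A^n}^n\otimes\omega_{B^n}^n\}$ for a suitable threshold $a=a(R)$, and bounding its type-II error against the single product state $\omega_{A^n}^n\otimes\omega_{B^n}^n$ (which, by Proposition~\ref{prop:universal-state}~(b), dominates every permutation-invariant product state up to the subexponential factor $g_{n,d_A}g_{n,d_B}$), one gets a test feasible for~\eqref{eq:def-alpha} whose type-I success probability $1-\alpha_n$ decays at least as fast as $e^{-nE(R)}$, via a standard change-of-measure / Chernoff-bound argument combined with the asymptotic formula for the pinched Petz divergence in~\eqref{eq:srmi2-omega1}. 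For the \emph{lower bound} (achievability of the converse, i.e.\ no test does better), I would lower-bound $\hat{\alpha}_{n,\rho}^{\mathrm{iid}}$ using~\eqref{eq:an-dh} by the hypothesis-testing relative entropy against a well-chosen i.i.d.\ product state $\sigma_A^{\otimes n}\otimes\tau_B^{\otimes n}$, then invoke the known strong converse bound for the sandwiched divergence (the fact that for type-II rate $R$ the quantity $1-\hat\alpha$ cannot decay slower than $\sup_{s>1}\frac{s-1}{s}(R-\widetilde{D}_s(\rho_{AB}\|\sigma_A\otimes\tau_B))$), and finally optimize the bound over $(\sigma_A,\tau_B)$. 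The crucial point is that the minimax exchange must be justified so that $\inf_{\sigma_A,\tau_B}\sup_{s>1}\frac{s-1}{s}(R-\widetilde{D}_s)$ equals $\sup_{s>1}\frac{s-1}{s}(R-\inf_{\sigma_A,\tau_B}\widetilde{D}_s)=E(R)$; here additivity (Theorem~\ref{thm:srmi2}~(d)) is what guarantees the single-letter infimum is the right regularized quantity, and a Sion-type minimax argument on the jointly-quasiconvex/quasiconcave structure (as in the proof of item~(e)) closes the gap.

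The \emph{main obstacle} I expect is precisely this minimax interchange in the lower-bound part, together with the boundary behavior of the $\sup_{s\in(1,\infty)}$: one must control the supremum as $s\downarrow 1$ and $s\to\infty$, ensure the relevant suprema/infima are attained (or approximated by feasible sequences), and handle the hypothesis $I(A:B)_\rho\neq\widetilde{I}_\infty^{\downarrow\downarrow}(A:B)_\rho$, which is exactly what rules out the degenerate case where the exponent would be identically zero. The continuity and monotonicity in $\alpha$ from Theorem~\ref{thm:srmi2}~(m),(n) and the convexity in $\alpha$ from~(o) are the tools I would use to show $s\mapsto\frac{s-1}{s}(R-\widetilde{I}_s^{\downarrow\downarrow}(A:B)_\rho)$ is well-behaved. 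Finally, for the last sentence of the theorem: nonnegativity of $E(R)$ follows by taking $s\downarrow 1$ and using Theorem~\ref{thm:srmi2}~(l) ($\widetilde{I}_1^{\downarrow\downarrow}=I(A:B)_\rho$) together with (m), giving $\frac{s-1}{s}(R-\widetilde{I}_s^{\downarrow\downarrow})\to 0$; the upper bound $E(R)\leq\max(0,R-I(A:B)_\rho)$ follows from $\frac{s-1}{s}\leq 1$ and $\widetilde{I}_s^{\downarrow\downarrow}\geq I(A:B)_\rho$; and strict positivity for $R>I(A:B)_\rho$ follows by expanding the objective to first order at $s=1$ using~(n), where the right derivative of $\frac{s-1}{s}(R-\widetilde{I}_s^{\downarrow\downarrow})$ at $s=1$ equals $R-I(A:B)_\rho>0$, while for $R\leq I(A:B)_\rho$ monotonicity forces the supremum to be $0$.
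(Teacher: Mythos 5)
Your overall skeleton matches the paper's: an upper bound on the exponent via an explicit test built from the universal permutation invariant states, a lower bound via a simple-versus-simple strong converse applied to product alternative states, and the two halves glued together by comparing $\hat{\alpha}_{n,\rho}$ with $\hat{\alpha}^{\mathrm{iid}}_{n,\rho}$. However, there are genuine gaps. First, your comparison inequality is backwards: the i.i.d.\ alternative class is \emph{contained} in the permutation invariant one, so the type-II constraint in~\eqref{eq:def-alpha-otimes} is weaker, more tests are feasible, and hence $\hat{\alpha}^{\mathrm{iid}}_{n,\rho}(\mu)\leq\hat{\alpha}_{n,\rho}(\mu)$, not the reverse. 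With your stated direction the sandwich does not close; with the correct direction your two bounds do suffice, exactly as in the paper. Second, the ``crucial minimax exchange'' you identify as the main obstacle in the lower bound is a phantom. The strong converse bound of~\cite[Lemma~4.7]{mosonyi2014quantum} gives, for \emph{each} fixed $(\sigma_A,\tau_B)$ with $\rho_{AB}\ll\sigma_A\otimes\tau_B$, the lower bound $\sup_{s>1}\frac{s-1}{s}(R-\widetilde{D}_s(\rho_{AB}\|\sigma_A\otimes\tau_B))$, and one simply takes the \emph{supremum} of these lower bounds over $(\sigma_A,\tau_B)$. Two suprema commute trivially, and since $\frac{s-1}{s}>0$ one has $\sup_{\sigma_A,\tau_B}(R-\widetilde{D}_s)=R-\widetilde{I}_s^{\downarrow\downarrow}(A:B)_\rho$. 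There is no infimum over states to exchange with a supremum over $s$, no Sion-type argument, and no role for additivity in this half; Sion's theorem and additivity enter only through Theorem~\ref{thm:srmi2}~(e), (d), (j), which feed the \emph{other} half.

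Third, and most substantively, your test-construction half is incomplete. The Neyman--Pearson test on the pinched state $\mathcal{P}_{\omega_{A^n}^n\otimes\omega_{B^n}^n}(\rho_{AB}^{\otimes n})$ (note: the pinched state, not $\rho_{AB}^{\otimes n}$ itself, so that the problem becomes classical) combined with a G\"artner--Ellis lower bound only works when the supremum over $s$ is attained at a finite $\hat{s}\in(1,\infty)$, which happens precisely for $R\in(I(A:B)_\rho,R_\infty)$ with $R_\infty=\lim_{s\to\infty}(\widetilde{I}_s^{\downarrow\downarrow}(A:B)_\rho+s(s-1)\frac{\mathrm{d}}{\mathrm{d}s}\widetilde{I}_s^{\downarrow\downarrow}(A:B)_\rho)$. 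For $R\geq R_\infty$ (when $R_\infty<\infty$) the supremum is approached only as $s\to\infty$ and the argument breaks down; the paper handles this regime with the randomized tests $e^{-n(R-R')}T^n_{A^nB^n}(R')$ for $R'<R_\infty$, and treats $R\leq I(A:B)_\rho$ as a separate easy case. Your proposal omits this case distinction entirely, which is not a cosmetic point --- the necessity of randomized tests at large rates is one of the structural features of strong converse proofs of this type. Your treatment of the final sentence of the theorem (the elementary bounds on the right-hand side of~\eqref{eq:strong-converse} via Theorem~\ref{thm:srmi2}~(k), (l), (m)) is correct and matches the paper.
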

\begin{proof}
See Appendix~\ref{app:strong-converse}.
\end{proof}

The proof of Theorem~\ref{thm:strong-converse} consists of two parts: a proof of achievability and a proof of optimality. 
The proof of optimality is a direct consequence of a strong converse bound in~\cite{mosonyi2014quantum}. 
The proof of achievability is technically more involved and proceeds via case distinction into two cases, depending on the type-II rate $R$. 
In the case where $R$ is less than a certain threshold $R_\infty$, 
the proof utilizes a quantum Neyman-Pearson test that compares 
$\mathcal{P}_{\omega_{A^n}^n\otimes \omega_{B^n}^n}(\rho_{AB}^{\otimes n})$ with $\omega_{A^n}^n\otimes \omega_{B^n}^n$, 
and uses the asymptotic attainability by pinching of the doubly minimized SRMI~\cite[Theorem~4~(j)]{burri2025prmisrmi1} and further properties of the doubly minimized SRMI~\cite[Theorem~4~(k), (l), (n), (o)]{burri2025prmisrmi1}. 
This part of the proof of achievability is an adapted version of an analogous proof of achievability for the singly minimized SRMI~\cite[Section~VI.C]{hayashi2016correlation}, and employs techniques for classical binary hypothesis testing from~\cite{tomamichel2018operational}. 
In the case where $R$ is greater than $R_\infty$, randomized tests are employed. 
The idea of proving strong converse theorems via case distinction into two regions of rates, along with the realization that randomized tests are necessary in the region of large rates, originates from work on strong converse theorems in classical binary hypothesis testing~\cite{nakagawa1993converse} and has been transferred to the quantum setting in~\cite{mosonyi2015two}. 

\begin{rem}[Necessity of permutation invariance of alternative hypothesis]\label{rem:permutation_srmi}
Consider again $\hat{\alpha}^{\mathrm{ind}}_{n,\rho}$, i.e., the variant of $\hat{\alpha}_{n,\rho}$ defined as in~\eqref{eq:def-alpha-ind}. 
Given the equality in~\eqref{eq:alpha-sym-an}, it is natural to ask: 
Does Theorem~\ref{thm:strong-converse} retain its validity if $\hat{\alpha}_{n,\rho}$ is replaced by $\hat{\alpha}^{\mathrm{ind}}_{n,\rho}$?
This is not the case. 
As further elaborated in Appendix~\ref{app:example_srmi}, explicit counterexamples are given by separable but not independent states $\rho_{AB}$.
\end{rem}

The proof of optimality for Theorem~\ref{thm:strong-converse} yields the following corollary.

\begin{cor}[Asymptotic minimum type-I error]\label{cor:limit-type1_strongconverse}
Let $\rho_{AB}\in \mathcal{S}(AB)$ and let $R\in (I(A:B)_\rho,\infty)$. Then  
$\lim_{n\rightarrow\infty}\hat{\alpha}_{n,\rho}(e^{-nR})=1$.
Moreover, the same is true if $\hat{\alpha}_{n,\rho}$ is replaced by 
$\hat{\alpha}_{n,\rho}^{\mathrm{iid}}$.
\end{cor}
\begin{proof}
See Appendix~\ref{app:cor_strongconverse}.
\end{proof}

The proof of Theorem~\ref{thm:strong-converse} implies the following corollary, which can be seen as another formulation of Theorem~\ref{thm:strong-converse}. 

\begin{cor}[Strong converse exponent]\label{cor:strong-converse}
Let $\rho_{AB}\in \mathcal{S}(AB)$ and let
\begin{align}\label{eq:def-R(s)-strongconverse}
R:(1,\infty)\rightarrow 
[I(A:B)_\rho,\infty),\quad  
s\mapsto 
\widetilde{I}_s^{\downarrow\downarrow}(A:B)_\rho+s(s-1)\frac{\mathrm{d}}{\mathrm{d} s}\widetilde{I}_s^{\downarrow\downarrow}(A:B)_\rho.
\end{align}
Then $R$ is continuous and monotonically increasing. 
Let $R(1)\coloneqq \lim_{s\rightarrow 1^+}R(s)=I(A:B)_\rho$, 
$R_\infty\coloneqq \lim_{s\rightarrow\infty}R(s)\in [\widetilde{I}_\infty^{\downarrow\downarrow}(A:B)_\rho,\infty]$, and 
$R(\infty)\coloneqq R_\infty$. 
Let $s_1\coloneqq \max\{s\in [1,\infty]:R(s)=I(A:B)_\rho\}$ 
and
$s_\infty \coloneqq\min\{s\in [1,\infty]:R(s)=R_\infty\}$. 

Suppose $I(A:B)_\rho\neq \widetilde{I}_\infty^{\downarrow\downarrow}(A:B)_\rho$. 
Then, $1\leq s_1<s_\infty = \infty$ and for any $s\in (s_1,s_\infty)$
\begin{equation}\label{eq:sc1}
\lim_{n\rightarrow\infty} -\frac{1}{n}\log (1-\hat{\alpha}_{n,\rho}(e^{-nR(s)}))
=\frac{s-1}{s}(R(s)-\widetilde{I}_s^{\downarrow\downarrow}(A:B)_\rho )
=(s-1)^2\frac{\mathrm{d}}{\mathrm{d}s} \widetilde{I}_s^{\downarrow\downarrow}(A:B)_\rho,
\end{equation}
and if $R_\infty<\infty$, then for any $R'\in [R_\infty,\infty)$
\begin{align}\label{eq:sc2}
\lim_{n\rightarrow\infty} -\frac{1}{n}\log (1-\hat{\alpha}_{n,\rho}(e^{-nR'}))
=R'-\widetilde{I}_\infty^{\downarrow\downarrow}(A:B)_\rho.
\end{align}
Moreover, the same is true if $\hat{\alpha}_{n,\rho}$ in~\eqref{eq:sc1} and~\eqref{eq:sc2} is replaced by 
$\hat{\alpha}_{n,\rho}^{\mathrm{iid}}$.
\end{cor}
\begin{proof}
See Appendix~\ref{app:proof-cor_strong-converse}.
\end{proof}

\begin{rem}[Strong converse exponent, assuming non-zero mutual information variance]\label{rem:strong-converse} 
The formulation of the main assertion in Corollary~\ref{cor:strong-converse} simplifies if $\rho_{AB}$ is assumed to have non-zero mutual information variance. 
More specifically, as proven in Appendix~\ref{app:rem_strong-converse}: 
\emph{For any $\rho_{AB}\in \mathcal{S}(AB)$ such that $V(A:B)_\rho\neq 0$ holds~\eqref{eq:sc1} for all $s\in (1,\infty)$, 
and the same is true if $\hat{\alpha}_{n,\rho}$ in~\eqref{eq:sc1} is replaced by 
$\hat{\alpha}_{n,\rho}^{\mathrm{iid}}$.} 
Note that there exist states whose mutual information variance is zero, but this class of states is not yet understood very well, cf. Remark~\ref{rem:direct}. 
We leave open the question of whether any $\rho_{AB}\in \mathcal{S}(AB)$ such that 
$I(A:B)_\rho\neq \widetilde{I}_{\infty}^{\downarrow\downarrow}(A:B)_\rho$ has non-zero mutual information variance. 
\end{rem}

The following result is a corollary of Theorem~\ref{thm:strong-converse} and Corollary~\ref{cor:strong-converse}. It provides a single-letter formula for the reverse $\beta$-cutoff rate. 
To define the reverse $\beta$-cutoff rate, one may view the strong converse exponent as a function of $R$ (which is monotonically increasing) and consider linear functions of $R$ that are lower bounds on this curve. 
The reverse $\beta$-cutoff rate then characterizes the best such linear lower bound. 
More explicitly, it identifies the smallest $R_0$ such that the strong converse exponent is bounded below by a line of slope $\beta$ intersecting the $x$-axis at $R=R_0$, i.e., 
the strong converse exponent behaves no worse than $\beta (R-R_0)$ for all rates $R\in (0,\infty)$. 
The parameter $\beta\in (0,1)$ controls the steepness of this bound: 
values closer to $0$ yield flatter bounds, while values closer to $1$ correspond to steeper ones. 
Thus, the reverse $\beta$-cutoff rate is the $x$-axis intercept of the best supporting line of slope $\beta$ to the strong converse exponent.
\begin{cor}[Reverse $\beta$-cutoff rate]\label{cor:reverse-cutoff}
Let $\rho_{AB}\in \mathcal{S}(AB)$. 
Let $s_{1},s_\infty$ be defined as in Corollary~\ref{cor:strong-converse}. 
For all $\alpha\in \{1,\infty\}$, let $\beta_\alpha\coloneqq 1-\frac{1}{s_\alpha}$. 
For all $\beta\in (0,1)$, let
\begin{align}\label{eq:r0-r}
R_0^{(r)}(\beta)_\rho
\coloneqq \inf \{R_0\in [0,\infty):\liminf_{n\rightarrow\infty} -\frac{1}{n}\log (1-\hat{\alpha}_{n,\rho}(e^{-nR})) \geq \beta (R-R_0) \quad \forall R\in (0,\infty) \}.
\end{align}

Suppose $I(A:B)_\rho\neq \widetilde{I}_\infty^{\downarrow\downarrow}(A:B)_\rho$. 
Then, $0\leq \beta_1< \beta_\infty = 1$ and for all $\beta\in (\beta_{1},\beta_\infty )$
\begin{align}\label{eq:r0r}
R_0^{(r)}(\beta)_\rho=\widetilde{I}_{\frac{1}{1-\beta}}^{\downarrow\downarrow}(A:B)_\rho.
\end{align}
Moreover, the same is true if $\hat{\alpha}_{n,\rho}$ in~\eqref{eq:r0-r} is replaced by 
$\hat{\alpha}_{n,\rho}^{\mathrm{iid}}$.
\end{cor}
\begin{proof}
See Appendix~\ref{ssec:reverse-cutoff}.
\end{proof}

\begin{rem}[Reverse $\beta$-cutoff rate, assuming non-zero mutual information variance]\label{rem:cutoff_r} 
The formulation of Corollary~\ref{cor:reverse-cutoff} simplifies if $\rho_{AB}$ is assumed to have non-zero mutual information variance. 
For such states, $s_1=1$ (see proof for Remark~\ref{rem:strong-converse} in Appendix~\ref{app:rem_strong-converse}), so $\beta_1=0$. 
Corollary~\ref{cor:reverse-cutoff} is therefore simplified to: 
\emph{For any $\rho_{AB}\in \mathcal{S}(AB)$ such that $V(A:B)_\rho\neq 0$ holds~\eqref{eq:r0r} for all $\beta\in (0,1)$. 
Moreover, the same is true if $\hat{\alpha}_{n,\rho}$ in~\eqref{eq:r0-r} is replaced by 
$\hat{\alpha}_{n,\rho}^{\mathrm{iid}}$.} 
\end{rem}

The significance of Corollary~\ref{cor:reverse-cutoff} (and Remark~\ref{rem:cutoff_r}) lies in the fact that it provides an operational interpretation for individual members of the family of the doubly minimized SRMI of order $\alpha\in (1,\infty)$. 
In contrast, Theorem~\ref{thm:strong-converse} only provides an operational interpretation for the entire family of the doubly minimized SRMI of order $\alpha\in (1,\infty)$ due to the presence of a supremum over $\alpha$ on the right-hand side of~\eqref{eq:strong-converse}.

For completeness, we also study the moderate deviation regime above the threshold $I(A:B)_\rho$. 
Namely, we investigate the minimum type-I error when the type-II error decays as $e^{-nR_n}$, 
where $R_n\coloneqq I(A:B)_\rho+a_n$ and $(a_n)_{n\in \mathbb{N}}$ is a moderate sequence. 
Paralleling the results of~\cite{Chubb_2017} for binary quantum state discrimination in the i.i.d. setting, we show that the minimum type-I error behaves as $1-\exp(-\frac{na_n^2}{2V(A:B)_\rho}+o(na_n^2))$, as formalized in the following theorem. 

\begin{thm}[Moderate deviation analysis above threshold]\label{thm:moderate_strongconverse}
Let $(a_n)_{n\in \mathbb{N}}$ be a strictly positive real sequence such that 
$\lim_{n\rightarrow\infty}a_n=0$ and $\lim_{n\rightarrow\infty}\sqrt{n}a_n=\infty$. 
Let $\rho_{AB}\in \mathcal{S}(AB)$ be such that $V(A:B)_\rho>0$. Then, 
\begin{align}\label{eq:mod-thm-sc}
\lim_{n\rightarrow\infty} -\frac{1}{na_n^2}\log (1-\hat{\alpha}_{n,\rho}(e^{-nR_n}))
=\frac{1}{2V(A:B)_\rho},
\end{align}
where $R_n\coloneqq I(A:B)_\rho+a_n$ for all $n\in \mathbb{N}$. 
Moreover, the same is true if $\hat{\alpha}_{n,\rho}$ in~\eqref{eq:mod-thm-sc} is replaced by $\hat{\alpha}_{n,\rho}^{\mathrm{iid}}$.
\end{thm}
\begin{proof}
See Appendix~\ref{proof:moderate_strongconverse}.
\end{proof}

The proof of Theorem~\ref{thm:moderate_strongconverse} consists of two parts: a proof of achievability and a proof of optimality. 
The proof of achievability follows by refining the strong converse exponent analysis and employs techniques from large deviation theory adapted to the moderate deviation regime, in particular exponentially tilted distributions and the G\"{a}rtner-Ellis lower bound~\cite{Ellis1975entropy}. 
The proof of optimality is a direct consequence of the moderate deviation analysis in the i.i.d. setting~\cite{Chubb_2017}.

\subsection{Stein exponent and second-order asymptotics}\label{ssec:second}

In Section~\ref{ssec:direct}, we showed that 
$\hat{\alpha}_{n,\rho}(e^{-nR})$ converges to $0$ as $n\rightarrow\infty$ 
for any $R\in (-\infty, I(A:B)_\rho)$, see Corollary~\ref{cor:limit-type1_direct}. 
Similarly, we showed in Section~\ref{ssec:strongconverse} that 
$\hat{\alpha}_{n,\rho}(e^{-nR})$ converges to $1$ as $n\rightarrow\infty$ 
for any $R\in (I(A:B)_\rho,\infty)$, see Corollary~\ref{cor:limit-type1_strongconverse}. 
The combination of these two corollaries results in the following corollary, which is a quantum Stein's lemma. 

\begin{cor}[Stein exponent]\label{cor:stein}
Let $\rho_{AB}\in \mathcal{S}(AB)$. Then
\begin{align}\label{eq:stein}
\sup\{R\in \mathbb{R}:\lim_{n\rightarrow\infty}\hat{\alpha}_{n,\rho}(e^{-nR})=0\}
=I(A:B)_\rho
=\inf\{R\in \mathbb{R}:\lim_{n\rightarrow\infty}\hat{\alpha}_{n,\rho}(e^{-nR})=1\}.
\end{align}
Moreover, the same is true if $\hat{\alpha}_{n,\rho}$ in~\eqref{eq:stein} is replaced by 
$\hat{\alpha}_{n,\rho}^{\mathrm{iid}}$.
\end{cor}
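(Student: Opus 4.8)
The plan is to obtain Corollary~\ref{cor:stein} by combining two one-sided statements: a \emph{direct part}, showing that the minimum type-I error vanishes for type-II rates below $I(A:B)_\rho$, and a \emph{strong converse part}, showing that it tends to one for rates above $I(A:B)_\rho$. The direct part is already available from prior work: by \cite[Corollary~10]{burri2024properties}, for every $R\in(-\infty,I(A:B)_\rho)$ one has $\lim_{n\rightarrow\infty}\hat{\alpha}_{n,\rho}(e^{-nR})=0$, and the same holds with $\hat{\alpha}_{n,\rho}$ replaced by $\hat{\alpha}_{n,\rho}^{\mathrm{iid}}$. The strong converse part is furnished by Corollary~\ref{cor:limit-type1} of the present paper, which is itself a by-product of the optimality half of Theorem~\ref{thm:strong-converse}: for every $R\in(I(A:B)_\rho,\infty)$ one has $\lim_{n\rightarrow\infty}\hat{\alpha}_{n,\rho}(e^{-nR})=1$, again with the analogous statement for $\hat{\alpha}_{n,\rho}^{\mathrm{iid}}$.

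Given these two inputs, I would derive the two equalities in~\eqref{eq:stein} by an elementary argument about the two sets appearing there. For the first equality, the direct part shows $(-\infty,I(A:B)_\rho)\subseteq\{R\in\mathbb{R}:\lim_{n}\hat{\alpha}_{n,\rho}(e^{-nR})=0\}$, so the supremum is at least $I(A:B)_\rho$; conversely, the strong converse part shows that for $R>I(A:B)_\rho$ the limit equals $1\neq 0$, so no such $R$ belongs to this set, whence the supremum is at most $I(A:B)_\rho$. For the second equality one argues symmetrically: the strong converse part gives $(I(A:B)_\rho,\infty)\subseteq\{R\in\mathbb{R}:\lim_{n}\hat{\alpha}_{n,\rho}(e^{-nR})=1\}$, so the infimum is at most $I(A:B)_\rho$, while the direct part shows that for $R<I(A:B)_\rho$ the limit equals $0\neq 1$, so the infimum is at least $I(A:B)_\rho$.

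Finally, the assertion for $\hat{\alpha}_{n,\rho}^{\mathrm{iid}}$ requires no extra work: both cited inputs already contain the corresponding statements for $\hat{\alpha}_{n,\rho}^{\mathrm{iid}}$ (and in any case $\hat{\alpha}_{n,\rho}^{\mathrm{iid}}(\mu)\leq\hat{\alpha}_{n,\rho}(\mu)$, since i.i.d.\ product states form a subset of the permutation invariant product states over which the type-II constraint is maximized, so the ``$\rightarrow 0$'' direction transfers immediately, while the ``$\rightarrow 1$'' direction is exactly the $\hat{\alpha}_{n,\rho}^{\mathrm{iid}}$ version of Corollary~\ref{cor:limit-type1}), so repeating the set-theoretic argument verbatim yields the claim. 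I do not expect any genuine obstacle internal to this corollary; all the substance sits in the two inputs, in particular in the optimality part of Theorem~\ref{thm:strong-converse} underlying Corollary~\ref{cor:limit-type1}. The only point demanding a little care is the bookkeeping of strict versus non-strict inequalities at the endpoint $R=I(A:B)_\rho$, which is harmless because neither a supremum over a set that is open from below nor an infimum over a set that is open from above is affected by the boundary value.
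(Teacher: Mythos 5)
Your proposal is correct and matches the paper's own argument: Corollary~\ref{cor:stein} is obtained there precisely by combining the direct part from~\cite[Corollary~10]{burri2024properties} (type-I error tends to $0$ for $R<I(A:B)_\rho$) with Corollary~\ref{cor:limit-type1} (type-I error tends to $1$ for $R>I(A:B)_\rho$), for both $\hat{\alpha}_{n,\rho}$ and $\hat{\alpha}_{n,\rho}^{\mathrm{iid}}$. Your explicit set-theoretic bookkeeping of the supremum and infimum is just a spelled-out version of the step the paper leaves implicit.
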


This corollary states that the threshold rate (or: Stein exponent) and the strong converse threshold rate coincide, and that the asymptotic minimum type-I error jumps sharply from $0$ to $1$ 
when the type-II rate $R$ surpasses the threshold given by $I(A:B)_\rho$.

For completeness, we also consider the second-order asymptotics of the binary quantum state discrimination problems associated with $\hat{\alpha}_{n,\rho}$ and $\hat{\alpha}_{n,\rho}^{\mathrm{iid}}$. 
The objective is to quantify the behavior of the minimum type-I error in the limit where the type-II rate $R_n$ asymptotically approaches the threshold value $I(A:B)_\rho$. 
For simplicity, we consider the concrete case where the type-II rate depends on $n$ as $R_n\coloneqq I(A:B)_\rho+\frac{r}{\sqrt{n}}$ for some fixed parameter $r\in \mathbb{R}$. 
The resulting second-order asymptotics are as follows.

\begin{thm}[Second-order asymptotics]\label{thm:second}
Let $\rho_{AB}\in \mathcal{S}(AB)$ be such that $V(A:B)_\rho\neq 0$ and let $r\in \mathbb{R}$. 
Then
\begin{align}\label{eq:second-order}
\lim_{n\rightarrow\infty} \hat{\alpha}_{n,\rho}(e^{-nR_n })
=\Phi\left(\frac{r}{\sqrt{V(A:B)_{\rho} }}\right),
\end{align}
where $R_n\coloneqq I(A:B)_\rho +\frac{r}{\sqrt{n}}$, 
and $\Phi$ is the cumulative distribution function of the standard normal distribution, i.e., 
$\mathbb{R}\rightarrow [0,1],x\mapsto \Phi(x)\coloneqq \frac{1}{\sqrt{2\pi}}\int_{-\infty}^x e^{-t^2/2} \mathrm{d}t$.
Moreover, the same is true if $\hat{\alpha}_{n,\rho}$ in~\eqref{eq:second-order} is replaced by $\hat{\alpha}_{n,\rho}^{\mathrm{iid}}$.
\end{thm}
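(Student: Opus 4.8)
The plan is to prove the two matching bounds separately: an achievability bound (the left-hand side is at most the right-hand side) and an optimality bound (the left-hand side is at least the right-hand side), in each case reducing the problem to a pinched classical binary hypothesis test to which a Berry--Esseen-type central limit argument applies. The key idea, borrowed from the analogous result for the singly minimized SRMI in~\cite{hayashi2016correlation} and from~\cite{tomamichel2018operational}, is that the permutation-invariant structure of the alternative hypothesis allows one to replace the family of all product (or permutation-invariant) states by the single pair of universal permutation invariant states $\omega_{A^n}^n\otimes\omega_{B^n}^n$, at the cost of a subexponential factor $g_{n,d_A}g_{n,d_B}$ that vanishes at second order after division by $\sqrt{n}$.

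\emph{Optimality.} First I would lower-bound $\hat{\alpha}_{n,\rho}(e^{-nI(A:B)_\rho-\sqrt{n}r})$ using~\eqref{eq:an-dh}, which gives $\hat{\alpha}_{n,\rho}(\mu)\geq \exp(-D_H^\mu(\sigma_{A^n}\otimes\tau_{B^n}\|\rho_{AB}^{\otimes n}))$ for any fixed permutation-invariant $\sigma_{A^n},\tau_{B^n}$; choosing $\sigma_{A^n}=\rho_A^{\otimes n}$, $\tau_{B^n}=\rho_B^{\otimes n}$ reduces the right-hand side to a standard i.i.d.\ hypothesis-testing quantity for $\rho_{AB}$ versus $\rho_A\otimes\rho_B$. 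The second-order (central limit) behavior of $D_H^\mu$ in the i.i.d.\ regime is classical~\cite{tomamichel2013hierarchy,li2014second}: with $\mu=e^{-nI(A:B)_\rho-\sqrt n r}$ one gets $-\tfrac1n\log$ of the hypothesis-testing error converging, at the $\tfrac1{\sqrt n}$ scale, to a Gaussian tail governed by the variance $V(\rho_{AB}\|\rho_A\otimes\rho_B)=V(A:B)_\rho$. This yields $\liminf_n \hat{\alpha}_{n,\rho}(\cdot)\geq \Phi(r/\sqrt{V(A:B)_\rho})$.

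\emph{Achievability.} For the upper bound I would construct an explicit test. Using Theorem~\ref{thm:srmi2}~(j), specifically the pinching identity in~\eqref{eq:srmi2-omega1} and its second-order refinement~\eqref{eq:srmi2-omega3}, replace the true state by its pinched version $\mathcal{P}_{\omega_{A^n}^n\otimes\omega_{B^n}^n}(\rho_{AB}^{\otimes n})$ and the alternative-hypothesis family by the single state $\omega_{A^n}^n\otimes\omega_{B^n}^n$; the pinching map commutes with the symmetric projections so for any permutation-invariant $\sigma_{A^n}\otimes\tau_{B^n}$ one has $\tr[\sigma_{A^n}\otimes\tau_{B^n}\,T]\leq g_{n,d_A}g_{n,d_B}\tr[\omega_{A^n}^n\otimes\omega_{B^n}^n\,T]$ for $T$ in the pinched algebra, by Proposition~\ref{prop:universal-state}~(b). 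Taking $T$ to be a Neyman--Pearson test between $\mathcal{P}_{\omega_{A^n}^n\otimes\omega_{B^n}^n}(\rho_{AB}^{\otimes n})$ and $\omega_{A^n}^n\otimes\omega_{B^n}^n$ with a threshold tuned so that the type-II constraint $\leq e^{-nI(A:B)_\rho-\sqrt n r}$ holds after absorbing the factor $\log(g_{n,d_A}g_{n,d_B})=o(\sqrt n)$ (Proposition~\ref{prop:universal-state}~(b)), the type-I error is again a classical hypothesis-testing quantity for a sequence of classical distributions (the eigenvalue distributions of the pinched states) whose relevant first and second moments, by~\eqref{eq:srmi2-omega3}, converge to $I(A:B)_\rho$ and $V(A:B)_\rho$. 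A Berry--Esseen estimate then gives $\limsup_n\hat{\alpha}_{n,\rho}(\cdot)\leq\Phi(r/\sqrt{V(A:B)_\rho})$.

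\emph{Main obstacle.} The delicate point is the achievability side: one must verify that the pinched classical distributions satisfy a uniform Berry--Esseen bound, i.e.\ that their third absolute moments stay bounded (or grow at most polynomially in $n$) so that the $O(1/\sqrt n)$ correction from the non-Gaussianity is negligible relative to the $1/\sqrt n$ scaling of the rate perturbation, \emph{and} that the $o(\sqrt n)$ correction coming from $g_{n,d_A}g_{n,d_B}$ and from the difference between $\mathcal{P}_{\omega_{A^n}^n\otimes\omega_{B^n}^n}(\rho_{AB}^{\otimes n})$ and $\rho_{AB}^{\otimes n}$ does not interfere. This requires care because $\omega_{A^n}^n\otimes\omega_{B^n}^n$ is not a tensor power, so the pinched state is not exactly i.i.d.; here one leans on the fact (implicit in~\eqref{eq:srmi2-omega3} and its proof, which mirrors~\cite{hayashi2016correlation}) that the pinched state is close in the relevant sense to $\mathcal{P}_{\rho_A^{\otimes n}\otimes\rho_B^{\otimes n}}(\rho_{AB}^{\otimes n})$ up to errors controlled by $\log g_{n,d}$, reducing everything to the genuinely i.i.d.\ Gaussian asymptotics for $\rho_{AB}$ versus $\rho_A\otimes\rho_B$. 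The condition $V(A:B)_\rho\neq 0$ is exactly what makes the limiting Gaussian nondegenerate so that $\Phi(r/\sqrt{V(A:B)_\rho})$ is well defined. The statement for $\hat{\alpha}_{n,\rho}^{\mathrm{iid}}$ follows from the sandwich $\hat{\alpha}_{n,\rho}^{\mathrm{iid}}(\mu)\leq\hat{\alpha}_{n,\rho}(\mu)$ (fewer constraints) together with the matching lower bound above, which already used only product alternatives.
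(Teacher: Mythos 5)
Your overall architecture coincides with the paper's: optimality is obtained by restricting the alternative hypothesis to the single product state $\rho_A^{\otimes n}\otimes\rho_B^{\otimes n}$ via \eqref{eq:an-dh} and invoking the known i.i.d.\ second-order asymptotics of \cite{tomamichel2013hierarchy,li2014second}, and achievability uses the quantum Neyman--Pearson test between $\mathcal{P}_{\omega_{A^n}^n\otimes \omega_{B^n}^n}(\rho_{AB}^{\otimes n})$ and $\omega_{A^n}^n\otimes \omega_{B^n}^n$ with a threshold shifted by $\log(g_{n,d_A}g_{n,d_B})=o(\sqrt n)$. The optimality half, the handling of the type-II constraint via Proposition~\ref{prop:universal-state}~(b), and the transfer to $\hat{\alpha}_{n,\rho}^{\mathrm{iid}}$ via $\hat{\alpha}_{n,\rho}^{\mathrm{iid}}\leq\hat{\alpha}_{n,\rho}$ are all fine as you describe them.

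The gap is in the final limiting step of achievability. You propose a Berry--Esseen estimate and correctly flag its hypotheses as the delicate point, but your suggested resolution --- that the pinched distribution is ``close'' to the i.i.d.\ one up to errors controlled by $\log g_{n,d}$ --- is not an argument: closeness of R\'enyi divergences of order $1+t/\sqrt n$ does not control third absolute moments, and the log-likelihood ratio $\log P_n(X_n)-\log Q_n(X_n)$ is not a sum of i.i.d.\ terms (since $\omega_{A^n}^n\otimes\omega_{B^n}^n$ is not a tensor power), so there is no decomposition to which Berry--Esseen applies. The paper closes this step differently, and more cheaply, using exactly the ingredient you already cite: setting $Y_n\coloneqq\frac{1}{\sqrt n}\bigl(\log P_n(X_n)-\log Q_n(X_n)-nI(A:B)_\rho-\log g_{n,d_A}-\log g_{n,d_B}\bigr)$, one computes $\log\mathbb{E}[e^{tY_n}]=t\sqrt n\bigl(\frac1n D_{1+t/\sqrt n}(P_n\|Q_n)-I(A:B)_\rho\bigr)-\frac{t}{\sqrt n}\log(g_{n,d_A}g_{n,d_B})$, so \eqref{eq:srmi2-omega3} in Theorem~\ref{thm:srmi2}~(j) says precisely that the moment generating functions converge pointwise on $[0,\infty)$ to that of a centered Gaussian with variance $V(A:B)_\rho$. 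A one-sided version of Curtiss' theorem \cite{mukherjea2006note} (convergence of moment generating functions on an open interval of the positive real axis implies weak convergence of the distribution functions) then yields $\Pr[Y_n<r]\rightarrow\Phi(r/\sqrt{V(A:B)_\rho})$ with no moment conditions to verify. Replacing your Berry--Esseen step by this argument completes the proof.
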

\begin{proof}
See Appendix~\ref{app:second}.
\end{proof}
This theorem implies that the asymptotic minimum type-I error 
increases smoothly from $0$ to $1$ as $r$ increases from $-\infty$ to $+\infty$
(rather than exhibiting a discontinuous jump from $0$ to $1$ at $r=0$). 
The proof of Theorem~\ref{thm:second} consists of an achievability and an optimality part. 
The achievability part is accomplished by adapting the techniques used in~\cite[Theorem~19]{hayashi2016correlation} to prove a similar statement related to the singly minimized PRMI/SRMI, 
and by using the asymptotic optimality of universal permutation invariant state for the doubly minimized SRMI~\cite[Theorem~4~(j)]{burri2025prmisrmi1}. 
The optimality part follows immediately from previous results~\cite{li2014second,tomamichel2013hierarchy} on the second-order asymptotics of i.i.d. quantum hypothesis testing.

\begin{acknowledgments}
The author thanks Renato Renner for valuable discussions and comments. 
This work was supported by 
the Swiss National Science Foundation via grant No.\ 200021\_188541
and the National Centre of Competence in Research SwissMAP, 
and the Quantum Center at ETH Zurich.
\end{acknowledgments}

\appendix
\section{Proof for Section~\ref{ssec:hypothesis}}\label{app:classical}
\begin{proof}
Let $n\in \mathbb{N}_{>0},\mu\in [0,\infty)$. 
Consider the optimization problem that defines $\hat{\alpha}_n^{\mathrm{q}}(\mu)$. 
Let $T^n_{A^nB^n}\in \mathcal{L}(A^nB^n)$ be in the feasible set of this optimization problem, i.e., 
$0\leq T^n_{A^nB^n}\leq 1$ and $\sup_{\sigma_{A^nB^n}\in H_1^{\mathrm{q},n}}\tr[\sigma_{A^nB^n}T^n_{A^nB^n}]\leq\mu$. Let 
\begin{align}
\hat{T}^n&(x_1,y_1,\dots, x_n,y_n)\coloneqq \bra{a_{x_1},b_{y_1},\dots, a_{x_n},b_{y_n}} T_{A^nB^n}^n \ket{a_{x_1},b_{y_1},\dots, a_{x_n},b_{y_n}}
\end{align}
for all $x_1,\dots, x_n\in \mathcal{X},y_1,\dots, y_n\in \mathcal{Y}$, and let 
\begin{align}
\hat{T}_{A^nB^n}^n&\coloneqq 
\sum_{\substack{x_1,\dots, x_n\in \mathcal{X},\\ y_1,\dots, y_n\in \mathcal{Y}}}
\hat{T}^n(x_1,y_1,\dots, x_n,y_n)\proj{a_{x_1},b_{y_1},\dots, a_{x_n},b_{y_n}}.
\label{eq:tn-tanbn}
\end{align}
Since $\rho_{AB}$ is a CC state, 
$\tr[\rho_{AB}^{\otimes n}(1-T^n_{A^nB^n})]=\tr[\rho_{AB}^{\otimes n}(1-\hat{T}^n_{A^nB^n})]$. 
By~\eqref{eq:analogy1}, we have 
\begin{align}
\sup_{\sigma_{A^nB^n}\in H_1^{\mathrm{q},n}}\tr[\sigma_{A^nB^n}\hat{T}^n_{A^nB^n}]
&\leq \sup_{\sigma_{A^nB^n}\in H_1^{\mathrm{q},n}}\tr[\sigma_{A^nB^n}T^n_{A^nB^n}]\leq \mu .
\end{align}
Therefore, 
\begin{align}\label{eq:qc0}
\hat{\alpha}_n^{\mathrm{q}}(\mu)
&=\inf_{\hat{T}^n}\{\tr[\rho_{AB}^{\otimes n}(1-\hat{T}^n_{A^nB^n})]:
\sup_{\sigma_{A^nB^n}\in H_1^{\mathrm{q},n}}\tr[\sigma_{A^nB^n}\hat{T}^n_{A^nB^n}]\leq \mu \},
\end{align}
where the minimization is over all functions $\hat{T}^n:(\mathcal{X}\times\mathcal{Y})^{\times n}\rightarrow [0,1]$, and the expressions inside the curly brackets are evaluated for $\hat{T}^n_{A^nB^n}$ as in~\eqref{eq:tn-tanbn}.
Then, 
\begin{align}
\hat{\alpha}_n^{\mathrm{q}}(\mu)
&=\inf_{\hat{T}^n}\{\tr[\rho_{AB}^{\otimes n}(1-\hat{T}^n_{A^nB^n})]:
\nonumber\\
&\qquad 
\sup_{\sigma_{A^nB^n}\in H_1^{\mathrm{q},n}}
\sum_{\substack{x_1,\dots, x_n\in \mathcal{X},\\ y_1,\dots, y_n\in \mathcal{Y}}}
Q^\sigma_{X^nY^n}(x_1,y_1,\dots, x_n,y_n)\hat{T}^n(x_1,y_1,\dots, x_n,y_n)\leq \mu \}
\label{eq:q-c0}\\
&=\hat{\alpha}_n^{\mathrm{c}}(\mu).
\label{eq:q-c}
\end{align}
\eqref{eq:q-c0} follows from~\eqref{eq:qc0}. 
\eqref{eq:q-c} follows from~\eqref{eq:analogy2}.
\end{proof}

\section{Proofs for Section~\ref{ssec:direct}}
\subsection{Lemmas for Theorem~\ref{thm:direct}}\label{app:lemmas}

Let us define the following function of $\mu\in [0,\infty)$ 
for any $\rho_{AB}\in \mathcal{S}(AB),n\in \mathbb{N}_{>0}$.
\begin{align}
\hat{\alpha}_{n,\rho}'(\mu )
&= \min_{\substack{T^n_{A^nB^n}\in \mathcal{L} (A^nB^n): \\ 0\leq T^n_{A^nB^n}\leq 1}}
\{\tr[\rho_{AB}^{\otimes n}(1-T^n_{A^nB^n})]:
\max_{\substack{\sigma_{A^n} \in \mathcal{S}_{\sym}(A^{\otimes n}),\\ \tau_{B^n} \in \mathcal{S}(B^n)}}
\tr[\sigma_{A^n}\otimes \tau_{B^n} T^n_{A^nB^n}]
\leq \mu\}
\label{eq:def-alpha-a}
\end{align} 
The following lemma describes some basic properties of the functions 
$\hat{\alpha}_{n,\rho}^{\mathrm{iid}},\hat{\alpha}_{n,\rho},\hat{\alpha}_{n,\rho}^{\mathrm{ind}},\hat{\alpha}_{n,\rho}'$, 
as defined in~\eqref{eq:def-alpha-otimes}, \eqref{eq:def-alpha},~\eqref{eq:def-alpha-ind},~\eqref{eq:def-alpha-a}.
\begin{lem}[Minimum type-I errors]\label{lem:order}
Let $\rho_{AB}\in \mathcal{S}(AB),n\in \mathbb{N}_{>0}$. Then all of the following hold.
\begin{enumerate}[label=(\alph*)]
\item The function $[0,\infty)\rightarrow \mathbb{R},\mu\mapsto \hat{\alpha}_{n,\rho}(\mu)$ is monotonically decreasing, and the same is true for 
$\hat{\alpha}_{n,\rho}^{\mathrm{iid}},\hat{\alpha}_{n,\rho}^{\mathrm{ind}},$ and $\hat{\alpha}_{n,\rho}'$.
\item 
$0\leq \hat{\alpha}_{n,\rho}^{\mathrm{iid}} (\mu)
\leq \hat{\alpha}_{n,\rho}(\mu)
\leq \hat{\alpha}_{n,\rho}'(\mu)
\leq \hat{\alpha}_{n,\rho}^{\mathrm{ind}}(\mu)
\leq \max(0,1-\mu) \leq 1$
for all $\mu\in [0,\infty)$.
\item $\hat{\alpha}_{n,\rho}(\mu)=\hat{\alpha}_{n,\rho}'(\mu)$ 
for all $\mu\in [0,\infty)$.
\end{enumerate}
\end{lem}
\begin{proof}[Proof of (a)]
This assertion follows directly from the definitions of the functions.
\end{proof}
\begin{proof}[Proof of (b)]
Let $\mu\in [0,\infty)$.  
$\mathcal{S}_{\sym}(A^{\otimes n})\otimes \mathcal{S}(B^{\otimes n})\subseteq \mathcal{S}(A^{\otimes n})\otimes \mathcal{S}(B^{\otimes n})$ 
implies that 
$\hat{\alpha}_{n,\rho}'(\mu)\leq \hat{\alpha}_{n,\rho}^{\mathrm{ind}}(\mu)$.
$\mathcal{S}_{\sym}(A^{\otimes n})\otimes \mathcal{S}_{\sym}(B^{\otimes n})\subseteq \mathcal{S}_{\sym}(A^{\otimes n})\otimes \mathcal{S}(B^n)$
implies that 
$\hat{\alpha}_{n,\rho}(\mu)\leq \hat{\alpha}_{n,\rho}'(\mu)$.
$\mathcal{S}(A)^{\otimes n}\otimes \mathcal{S}(B)^{\otimes n}\subseteq \mathcal{S}_{\sym}(A^{\otimes n})\otimes \mathcal{S}_{\sym}(B^{\otimes n})$
implies that 
$\hat{\alpha}_{n,\rho}^{\mathrm{iid}}(\mu)\leq \hat{\alpha}_{n,\rho}(\mu)$.

Let $T^n_{A^nB^n}\in \mathcal{L}(A^nB^n)$ be in the feasible set of the optimization problem that defines $\hat{\alpha}_{n,\rho}^{\mathrm{iid}}$. 
Then, $0\leq T^n_{A^nB^n}\leq 1$. Hence, $1-T^n_{A^nB^n}\geq 0$, which implies that 
$\hat{\alpha}_{n,\rho}^{\mathrm{iid}}(\mu)\geq 0$. 

If $\mu\in [0,1]$, then choosing the test $T^n_{A^nB^n}\coloneqq \mu 1$ implies that 
$\hat{\alpha}_{n,\rho}^{\mathrm{ind}}(\mu)\leq 1-\mu$.
If $\mu\in (1,\infty)$, then choosing the test $T^n_{A^nB^n}\coloneqq 1$ implies that 
$\hat{\alpha}_{n,\rho}^{\mathrm{ind}}(\mu)\leq 0$. 
Hence, $\hat{\alpha}_{n,\rho}^{\mathrm{ind}}(\mu)\leq \max(0,1-\mu)$.
\end{proof}
\begin{proof}[Proof of (c)]
Let $\mu\in [0,\infty)$. 
By~(b), it suffices to prove that $\hat{\alpha}_{n,\rho}'(\mu)\leq \hat{\alpha}_{n,\rho}(\mu)$. 

Let $T^n_{A^nB^n}\in \mathcal{L}(A^nB^n)$ be in the feasible set of the optimization problem that defines $\hat{\alpha}_{n,\rho}(\mu)$. 
Let $\hat{T}^n_{A^nB^n}\coloneqq \frac{1}{|S_n|}\sum_{\pi \in S_n}U(\pi)_{A^n}\otimes U(\pi)_{B^n} T^n_{A^nB^n}U(\pi)_{A^n}^\dagger\otimes U(\pi)_{B^n}^\dagger$. 
Then, for any $\sigma_{A^n} \in \mathcal{S}_{\sym}(A^{\otimes n}),\tau_{B^n} \in \mathcal{S}_{\sym}(B^{\otimes n})$
\begin{align}
\tr[\sigma_{A^n}\otimes \tau_{B^n}T^n_{A^nB^n}]
&=\frac{1}{|S_n|}\sum_{\pi \in S_n}\tr[(U(\pi)_{A^n}^\dagger \sigma_{A^n}U(\pi)_{A^n})\otimes ( U(\pi)_{B^n}^\dagger\tau_{B^n}U(\pi)_{B^n})T^n_{A^nB^n}]
\\
&=\frac{1}{|S_n|}\sum_{\pi \in S_n}\tr[\sigma_{A^n}\otimes\tau_{B^n} U(\pi)_{A^n}\otimes U(\pi)_{B^n} T^n_{A^nB^n}U(\pi)_{A^n}^\dagger\otimes U(\pi)_{B^n}^\dagger]
\\
&=\tr[\sigma_{A^n}\otimes \tau_{B^n}\hat{T}^n_{A^nB^n}].
\end{align}
Hence, 
\begin{align}
\max_{\substack{\sigma_{A^n} \in \mathcal{S}_{\sym}(A^{\otimes n}),\\ \tau_{B^n} \in \mathcal{S}_{\sym}(B^{\otimes n})}} 
\tr[\sigma_{A^n}\otimes \tau_{B^n}T^n_{A^nB^n}] 
=\max_{\substack{\sigma_{A^n} \in \mathcal{S}_{\sym}(A^{\otimes n}),\\ \tau_{B^n} \in \mathcal{S}_{\sym}(B^{\otimes n})}} 
\tr[\sigma_{A^n}\otimes \tau_{B^n}\hat{T}^n_{A^nB^n}] .
\end{align}
Since $\rho_{AB}^{\otimes n}\in \mathcal{S}_{\sym}((AB)^{\otimes n})$, 
$\tr[\rho_{AB}^{\otimes n}(1-T^n_{A^nB^n})]=\tr[\rho_{AB}^{\otimes n}(1-\hat{T}^n_{A^nB^n})]$.
Since $\hat{T}^n_{A^nB^n}$ is permutation invariant, 
it follows that
\begin{align}\label{eq:a-t-sym}
\hat{\alpha}_{n,\rho}(\mu)
&=\min_{\substack{T^n_{A^nB^n}\in \mathcal{L}_{\sym}((AB)^{\otimes n}) : \\ 0\leq T^n_{A^nB^n}\leq 1}}
\{\tr[\rho_{AB}^{\otimes n}(1-T^n_{A^nB^n})]: 
\max_{\substack{\sigma_{A^n} \in \mathcal{S}_{\sym}(A^{\otimes n}),\\ \tau_{B^n} \in \mathcal{S}_{\sym}(B^{\otimes n})}} 
\tr[\sigma_{A^n}\otimes \tau_{B^n}T^n_{A^nB^n}] \leq \mu \}.
\end{align}
Let $\widetilde{T}^n_{A^nB^n}\in \mathcal{L}_{\sym}((AB)^{\otimes n})$ be positive semidefinite.
Then, for all 
$\sigma_{A^n} \in \mathcal{S}_{\sym}(A^{\otimes n}), \tau_{B^n} \in \mathcal{S}(B^n)$
\begin{align}
\tr[\sigma_{A^n}\otimes \tau_{B^n} \widetilde{T}^n_{A^nB^n}]
&=\frac{1}{|S_n|}\sum_{\pi\in S_n}
\tr[\sigma_{A^n}\otimes \tau_{B^n} (U(\pi)_{A^n}\otimes U(\pi)_{B^n} \widetilde{T}^n_{A^nB^n}  U(\pi)_{A^n}^\dagger\otimes U(\pi)_{B^n}^\dagger)]
\\
&=\frac{1}{|S_n|}\sum_{\pi\in S_n}
\tr[\underbrace{(U(\pi)_{A^n}^\dagger\sigma_{A^n} U(\pi)_{A^n})}_{=\sigma_{A^n}} \otimes (U(\pi)_{B^n}^\dagger\tau_{B^n} U(\pi)_{B^n}) \widetilde{T}^n_{A^nB^n}]
\\
&=
\tr[\sigma_{A^n}\otimes \underbrace{\frac{1}{|S_n|}\sum_{\pi\in S_n}(U(\pi)_{B^n}^\dagger\tau_{B^n} U(\pi)_{B^n})}_{\in \mathcal{S}_{\sym}(B^{\otimes n})} \widetilde{T}^n_{A^nB^n}].
\end{align} 
Hence,
\begin{equation}\label{eq:max-symt}
\max_{\substack{\sigma_{A^n} \in \mathcal{S}_{\sym}(A^{\otimes n}),\\ \tau_{B^n} \in \mathcal{S}(B^n)}}
\tr[\sigma_{A^n}\otimes \tau_{B^n} \widetilde{T}^n_{A^nB^n}]
=\max_{\substack{\sigma_{A^n} \in \mathcal{S}_{\sym}(A^{\otimes n}),\\ \tau_{B^n} \in \mathcal{S}_{\sym}(B^{\otimes n})}}
\tr[\sigma_{A^n}\otimes \tau_{B^n} \widetilde{T}^n_{A^nB^n}].
\end{equation}
We conclude that 
\begin{align}
\hat{\alpha}_{n,\rho}'(\mu)
&\leq \min_{\substack{T^n_{A^nB^n}\in \mathcal{L}_{\sym}((AB)^{\otimes n}): \\ 0\leq T^n_{A^nB^n}\leq 1}}
\{\tr[\rho_{AB}^{\otimes n}(1-T^n_{A^nB^n})]:
\max_{\substack{\sigma_{A^n} \in \mathcal{S}_{\sym}(A^{\otimes n}),\\ \tau_{B^n} \in \mathcal{S}(B^n)}} 
\tr[\sigma_{A^n}\otimes \tau_{B^n}T^n_{A^nB^n}] \leq \mu \}
\\
&= \min_{\substack{T^n_{A^nB^n}\in \mathcal{L}_{\sym}((AB)^{\otimes n}): \\ 0\leq T^n_{A^nB^n}\leq 1}}
\{\tr[\rho_{AB}^{\otimes n}(1-T^n_{A^nB^n})]:
\max_{\substack{\sigma_{A^n} \in \mathcal{S}_{\sym}(A^{\otimes n}),\\ \tau_{B^n} \in \mathcal{S}_{\sym}(B^{\otimes n})}} 
\tr[\sigma_{A^n}\otimes \tau_{B^n}T^n_{A^nB^n}] \leq \mu \}
\label{eq:alpha-proof2}\\
&= \hat{\alpha}_{n,\rho}(\mu).
\label{eq:alpha-proof3}
\end{align}
\eqref{eq:alpha-proof2} follows from~\eqref{eq:max-symt}.
\eqref{eq:alpha-proof3} follows from~\eqref{eq:a-t-sym}.
\end{proof}

The following lemma is a variant of~\cite[Lemma~20]{tomamichel2018operational} that does not require $\phi$ to be differentiable.
\begin{lem}[A property of convex functions]\label{lem:convex}
Let $a,b\in \mathbb{R}$ be such that $0\leq a<b$ and let $\phi:(a,b)\rightarrow\mathbb{R}$ be a convex function. 
Let $\psi(s)\coloneqq s\frac{\partial}{\partial s^+}\phi(s)-\phi(s)$ for all $s\in (a,b)$. 
Then, $\psi$ is monotonically increasing.
\end{lem}
\begin{proof}
Let $s_0,s_1\in \mathbb{R}$ be such that $a<s_0<s_1<b$. Then, 
\begin{align}
\psi(s_1)-\psi(s_0)
&=s_1\frac{\partial}{\partial s^+}\phi(s_1)-s_0\frac{\partial}{\partial s^+}\phi(s_0)-(\phi(s_1)-\phi(s_0)) 
\\
&\geq s_1\frac{\partial}{\partial s^+}\phi(s_1)-s_0\frac{\partial}{\partial s^+}\phi(s_0)-(s_1-s_0) \frac{\partial}{\partial s^+}\phi(s_1)
\\
&=s_0\left(\frac{\partial}{\partial s^+}\phi(s_1)-\frac{\partial}{\partial s^+}\phi(s_0) \right)\geq 0,
\end{align}
where the inequalities follow from the convexity of $\phi$.
\end{proof}

\subsection{Proof of Theorem~\ref{thm:direct}}\label{app:hoeffding}

First, we prove the bounds on $R_{1/2}$. 
It follows from \cite[Theorem~3~(q)]{burri2025prmisrmi1} that $I_s^{\downarrow\downarrow}(A:B)_\rho$ is right-differentiable on $s\in (0,1)$. 
By the monotonicity of the doubly minimized PRMI in the R\'enyi order, see \cite[Theorem~3~(n)]{burri2025prmisrmi1},
\begin{align}\label{eq:r12-0}
\frac{\partial}{\partial s^+}I_s^{\downarrow\downarrow}(A:B)_\rho\geq 0\qquad \forall s\in (0,1).
\end{align}
For any $s\in (0,1)$ and any fixed
$(\sigma_A,\tau_B)\in \argmin_{(\sigma_A',\tau_B')\in \mathcal{S}(A)\times \mathcal{S}(B)}D_s (\rho_{AB}\| \sigma_A'\otimes \tau_B')$, 
the right derivative at $s$ is upper bounded as 
$\frac{\partial}{\partial s^+}I_s^{\downarrow\downarrow}(A:B)_\rho 
\leq \frac{\partial}{\partial s^+}D_s (\rho_{AB}\| \sigma_A\otimes \tau_B)$, 
which implies that  
$0\leq \liminf_{s\rightarrow 0^+}\frac{\partial}{\partial s^+}I_s^{\downarrow\downarrow}(A:B)_\rho
\leq \limsup_{s\rightarrow 0^+}\frac{\partial}{\partial s^+}I_s^{\downarrow\downarrow}(A:B)_\rho <\infty$. Hence, 
\begin{align}
I_0^{\downarrow\downarrow}(A:B)_\rho
&=\lim_{s\rightarrow 0^+} (I_{s}^{\downarrow\downarrow}(A:B)_\rho-s(1-s)\frac{\partial}{\partial s^+} I_{s}^{\downarrow\downarrow}(A:B)_\rho) 
\label{eq:r12-1}\\
&\leq (I_{s}^{\downarrow\downarrow}(A:B)_\rho-s(1-s)\frac{\partial}{\partial s^+} I_{s}^{\downarrow\downarrow}(A:B)_\rho )\big|_{s=1/2}
=R_{1/2}
\label{eq:r12-2}\\
&\leq I_{1/2}^{\downarrow\downarrow}(A:B)_\rho.
\label{eq:r12-3}
\end{align}
\eqref{eq:r12-2} follows from Lemma~\ref{lem:convex} due to the convexity of $(s-1)I_s^{\downarrow\downarrow}(A:B)_\rho$ on $s\in (0,1)$, see \cite[Theorem~3~(q)]{burri2025prmisrmi1}. 
\eqref{eq:r12-3} follows from~\eqref{eq:r12-0}.

Next, we prove the bounds on the right-hand side of~\eqref{eq:direct}. 
For any $R\in (0,\infty)$
\begin{subequations}\label{eq:direct-rhs}
\begin{align}
0=\lim_{s\rightarrow 1^-}\frac{1-s}{s}(I_s^{\downarrow\downarrow}(A:B)_\rho - R)
&\leq \sup_{s\in (\frac{1}{2},1)}\frac{1-s}{s}(I_s^{\downarrow\downarrow}(A:B)_\rho - R)
\\
&\leq \sup_{s\in (\frac{1}{2},1)}\frac{1-s}{s}(I_1^{\downarrow\downarrow}(A:B)_\rho - R)
=\max(0,I(A:B)_\rho -R),
\end{align}
\end{subequations}
where we have used the monotonicity and continuity of the doubly minimized PRMI in the R\'enyi order 
and $I_1^{\downarrow\downarrow}(A:B)_\rho=I(A:B)_\rho$, 
see \cite[Theorem~3~(m), (n), (o)]{burri2025prmisrmi1}. 
On the one hand, the bounds in~\eqref{eq:direct-rhs} imply that for any $R\in [I(A:B)_\rho,\infty)$, the right-hand side of~\eqref{eq:direct} vanishes. 
On the other hand, if $R\in (0,I(A:B)_\rho)$, then 
the right-hand side of~\eqref{eq:direct} is strictly positive 
due to \cite[Theorem~3~(m), (n), (o)]{burri2025prmisrmi1}.

We will now prove the equality in~\eqref{eq:direct}. 
The proof of~\eqref{eq:direct} is divided into two parts: 
a proof of achievability for $\hat{\alpha}_{n,\rho}$ and a proof of optimality for $\hat{\alpha}_{n,\rho}^{\mathrm{iid}}$. 
The assertion in Theorem~\ref{thm:direct} follows from these two parts because 
$\hat{\alpha}_{n,\rho}^{\mathrm{iid}} (\mu)\leq \hat{\alpha}_{n,\rho}(\mu)$ 
for all $\mu\in [0,\infty)$, see Lemma~\ref{lem:order}~(b).

\subsubsection{Proof of achievability}\label{sec:achievability}
Below, we will prove that for any $R\in (0,\infty)$
\begin{align}
\liminf_{n\rightarrow\infty}-\frac{1}{n}\log \hat{\alpha}_{n,\rho}(e^{-nR})
&\geq \sup_{s\in (0,1)}\frac{1-s}{s}(I_s^{\downarrow\downarrow}(A:B)_\rho - R)
\label{eq:ach-01}\\
&\geq \sup_{s\in (\frac{1}{2},1)}\frac{1-s}{s}(I_s^{\downarrow\downarrow}(A:B)_\rho - R).
\label{eq:ach-12}
\end{align}
The inequality in~\eqref{eq:ach-12} is trivially true, so it suffices to prove~\eqref{eq:ach-01}.
\begin{proof}
Let $R\in (0,\infty)$ and $s\in (0,1)$ be arbitrary but fixed. For all $n\in \mathbb{N}_{>0}$, we define
\begin{align}
\lambda_n\coloneqq \frac{1}{s} 
\left(\log g_{n,d_A}+\log g_{n,d_B}+nR-(1-s)D_s(\rho_{AB}^{\otimes n}\| \omega_{A^n}^n \otimes \omega_{B^n}^n )\right)
\label{eq:ach-lambda}
\end{align}
and the test
$T^n_{A^nB^n}\coloneqq \{\rho_{AB}^{\otimes n}\geq e^{\lambda_n}\omega_{A^n}^n\otimes \omega_{B^n}^n\}$.
For this test holds
\begin{subequations}\label{eq:ach-two}
\begin{align}
\max_{\substack{\sigma_{A^n} \in \mathcal{S}_{\sym}(A^{\otimes n}),\\ \tau_{B^n} \in \mathcal{S}_{\sym}(B^{\otimes n})}}
\tr[\sigma_{A^n}\otimes \tau_{B^n} T^n_{A^nB^n}]
&\leq g_{n,d_A}g_{n,d_B} \tr[\omega_{A^n}^n\otimes\omega_{B^n}^n \{e^{-\lambda_n} \rho_{AB}^{\otimes n}\geq \omega_{A^n}^n\otimes\omega_{B^n}^n \} ] 
\label{eq:ach-b0}\\
&\leq g_{n,d_A}g_{n,d_B}\tr[(e^{-\lambda_n} \rho_{AB}^{\otimes n})^s ( \omega_{A^n}^n\otimes\omega_{B^n}^n )^{1-s}]
\label{eq:ach-b1}\\
&=g_{n,d_A}g_{n,d_B}e^{-s\lambda_n} 
\exp\left(-(1-s)D_s(\rho_{AB}^{\otimes n}\| \omega_{A^n}^n\otimes\omega_{B^n}^n )\right)
\\
&=e^{-nR}.
\label{eq:ach-b3}
\end{align}
\end{subequations}
\eqref{eq:ach-b0} follows from \cite[Remark~1~(b)]{burri2025prmisrmi1}.  
\eqref{eq:ach-b1} follows from \cite[Eq.~(2.2)]{hayashi2016correlation}. 
\eqref{eq:ach-b3} follows from~\eqref{eq:ach-lambda}.
Furthermore,
\begin{subequations}\label{eq:ach-one}
\begin{align}
\tr[\rho_{AB}^{\otimes n}(1-T^n_{A^nB^n})]
&=\tr[\rho_{AB}^{\otimes n} \{\rho_{AB}^{\otimes n}<e^{\lambda_n}\omega_{A^n}^n\otimes \omega_{B^n}^n\}]\\
&\leq \tr[(\rho_{AB}^{\otimes n})^s (e^{\lambda_n}\omega_{A^n}^n\otimes \omega_{B^n}^n)^{1-s}]
\label{eq:alpha-lambda}\\
&=e^{\lambda_n (1-s)} 
\exp\left(-(1-s)D_s(\rho_{AB}^{\otimes n} \| \omega_{A^n}^n\otimes \omega_{B^n}^n)\right)\\
&=\exp\left(\frac{1-s}{s}\left(\log g_{n,d_A}+\log g_{n,d_B}
-(D_s(\rho_{AB}^{\otimes n}\| \omega_{A^n}^n \otimes \omega_{B^n}^n) - nR)\right)\right).
\label{eq:alpha-qn}
\end{align}
\end{subequations}
\eqref{eq:alpha-lambda} follows from~\cite[Eq.~(2.2)]{hayashi2016correlation}.
\eqref{eq:alpha-qn} follows from~\eqref{eq:ach-lambda}. 
We conclude that 
\begin{subequations}\label{eq:ach-a}
\begin{align}
\liminf_{n\rightarrow\infty} -\frac{1}{n}\log \hat{\alpha}_{n,\rho}(e^{-nR})
&\geq \liminf_{n\rightarrow\infty} -\frac{1}{n}\log \tr[\rho_{AB}^{\otimes n}(1-T^n_{A^nB^n})]
\label{eq:ach-a0}\\
&\geq \frac{1-s}{s}\left(\liminf_{n\rightarrow\infty} \frac{1}{n} D_s(\rho_{AB}^{\otimes n} \| \omega_{A^n}^n\otimes \omega_{B^n}^n) -R\right)
\label{eq:ach-a1}\\
&= \frac{1-s}{s}(I_s^{\downarrow\downarrow}(A:B)_\rho -R).
\label{eq:ach-a2}
\end{align}
\end{subequations}
\eqref{eq:ach-a0} follows from~\eqref{eq:ach-two}.
\eqref{eq:ach-a1} follows from~\eqref{eq:ach-one} and \cite[Remark~1~(b)]{burri2025prmisrmi1}.
\eqref{eq:ach-a2} follows from \cite[Theorem~3~(l)]{burri2025prmisrmi1}. 
Since $s\in (0,1)$ can be chosen arbitrarily, the assertion in~\eqref{eq:ach-01} follows from~\eqref{eq:ach-a}.
\end{proof}

\subsubsection{Proof of optimality}
Below, we will prove that for any $R\in (R_{1/2},\infty)$
\begin{align}\label{eq:opt3}
\limsup_{n\rightarrow\infty}-\frac{1}{n}\log \hat{\alpha}_{n,\rho}^{\mathrm{iid}} (e^{-nR})
\leq \sup_{s\in (\frac{1}{2},1)}\frac{1-s}{s}(I_s^{\downarrow\downarrow}(A:B)_\rho - R).
\end{align}

\begin{proof}
Let $R\in (R_{1/2},\infty)$ be arbitrary but fixed.
Since $R>R_{1/2}\geq 0$, the converse quantum Hoeffding bound~\cite{nagaoka2006converse,audenaert2008asymptotic} implies that for any $(\sigma_A,\tau_B)\in \mathcal{S}(A)\times \mathcal{S}(B)$
\begin{align}\label{eq:converse-hoeffding}
\limsup_{n\rightarrow\infty} -\frac{1}{n}\log \hat{\alpha}_{n,\rho}^{\mathrm{iid}} (e^{-nR})
&\leq \sup_{s\in (0,1)} \frac{1-s}{s}(D_s(\rho_{AB}\| \sigma_A\otimes \tau_B)-R) .
\end{align}
Case 1: $R\in [I(A:B)_\rho,\infty)$. Then,
\begin{align}
\limsup_{n\rightarrow\infty} -\frac{1}{n}\log \hat{\alpha}_{n,\rho}^{\mathrm{iid}} (e^{-nR})
&\leq \sup_{s\in (0,1)} \frac{1-s}{s}(I_s^{\uparrow\uparrow}(A:B)_\rho-R) 
\label{eq:optimal0}
\\
&=0
\label{eq:optimal1}
\\
&=\sup_{s\in (\frac{1}{2},1)}\frac{1-s}{s}(I_s^{\downarrow\downarrow}(A:B)_\rho - R).
\label{eq:optimal2}
\end{align}
\eqref{eq:optimal0} follows from the evaluation of~\eqref{eq:converse-hoeffding} for $(\sigma_A,\tau_B)=(\rho_A,\rho_B)$. 
\eqref{eq:optimal1} follows from 
$I_1^{\uparrow\uparrow}(A:B)_\rho=I(A:B)_\rho \leq R$ 
and the monotonicity in the R\'enyi order of the non-minimized PRMI. 
\eqref{eq:optimal2} follows from~\eqref{eq:direct-rhs}. 

Case 2: $R\in (R_{1/2},I(A:B)_\rho)$. 
Let us define the following functions of $s\in (\frac{1}{2},1)$.
\begin{align}
\phi(s)&\coloneqq (s-1)I_s^{\downarrow\downarrow}(A:B)_\rho \\
\psi(s)&\coloneqq s\phi'(s)-\phi(s)
=I_s^{\downarrow\downarrow}(A:B)_\rho-s(1-s)\frac{\mathrm{d}}{\mathrm{d} s}I_s^{\downarrow\downarrow}(A:B)_\rho \\
g(s)&\coloneqq \frac{1}{s}((s-1)R-\phi(s))=\frac{1-s}{s}(I_s^{\downarrow\downarrow}(A:B)_\rho -R)
\end{align}
Note that $\lim_{s\rightarrow 1/2^+}\psi (s)=R_{1/2}$ and 
$\lim_{s\rightarrow 1^-}\psi (s)=I_{1}^{\downarrow\downarrow}(A:B)_\rho=I(A:B)_\rho $ due to 
\cite[Theorem~3~(m)]{burri2025prmisrmi1}. 
$\phi$ is convex and continuously differentiable due to \cite[Theorem~3~(p), (q)]{burri2025prmisrmi1}. 
Since $\phi$ is continuously differentiable, $\psi$ is continuous. 
The convexity of $\phi$ implies that $\psi$ is monotonically increasing, see~\cite[Lemma~20]{tomamichel2018operational} or Lemma~\ref{lem:convex}. 
Since $\phi$ is continuously differentiable, also $g$ is continuously differentiable, so 
$g'(s)=\frac{1}{s^2}(R-\psi(s))$ is continuous. 
Since $\psi$ is monotonically increasing, $g'$ is monotonically decreasing. 
Furthermore, $\lim_{s\rightarrow 1/2^+}g'(s)=4(R-R_{1/2})>0$ and 
$\lim_{s\rightarrow 1^-}g'(s)=R-I(A:B)_\rho<0$. 
Thus, $g'(s)=0$ iff $g$ is maximal at $s$.

Let $\hat{s}\in (\frac{1}{2},1)$ be such that $g'(\hat{s})=0$ and 
\begin{align}\label{eq:optimal-0}
\sup_{s\in (\frac{1}{2},1)} \frac{1-s}{s}(I_s^{\downarrow\downarrow}(A:B)_\rho-R) 
=\frac{1-\hat{s}}{\hat{s}}(I_{\hat{s}}^{\downarrow\downarrow}(A:B)_\rho-R).
\end{align}
Such an $\hat{s}$ exists because $R\in (R_{1/2},I(A:B)_\rho)$, 
see also~\cite[Lemma~21]{tomamichel2018operational}.

Let $(\sigma_A^{(\hat{s})},\tau_B^{(\hat{s})})\in \argmin_{(\sigma_A,\tau_B)\in\mathcal{S}(A)\times\mathcal{S}(B)} D_{\hat{s}}(\rho_{AB}\| \sigma_A\otimes\tau_B)$ be the unique minimizer, see \cite[Theorem~3~(j)]{burri2025prmisrmi1}. 
Let us define the following functions of $s\in (0,1)$. 
\begin{align}
\bar{\phi}(s)&\coloneqq (s-1)D_s(\rho_{AB}\| \sigma_A^{(\hat{s})}\otimes \tau_B^{(\hat{s})} ) \\
\bar{\psi}(s)&\coloneqq s\bar{\phi}'(s)-\bar{\phi}(s)
=D_s(\rho_{AB}\| \sigma_A^{(\hat{s})}\otimes \tau_B^{(\hat{s})} )-s(1-s)\frac{\mathrm{d}}{\mathrm{d} s}D_s(\rho_{AB}\| \sigma_A^{(\hat{s})}\otimes \tau_B^{(\hat{s})} ) \\
\bar{g}(s)&\coloneqq \frac{1}{s}((s-1)R-\bar{\phi}(s))
=\frac{1-s}{s}(D_s(\rho_{AB}\| \sigma_A^{(\hat{s})}\otimes \tau_B^{(\hat{s})} ) -R)
\end{align}
$\bar{\phi}$ is convex and continuously differentiable due to the corresponding properties of the Petz divergence (see \cite[Remark~2]{burri2025prmisrmi1}). 
The convexity of $\bar{\phi}$ implies that $\bar{\psi}$ is monotonically increasing, see~\cite[Lemma~20]{tomamichel2018operational} or Lemma~\ref{lem:convex}. 
Since $\bar{\phi}$ is continuously differentiable, also $\bar{g}$ is continuously differentiable.
Hence, $\bar{g}'(s)=\frac{1}{s^2}(R-\bar{\psi}(s))$ is continuous and monotonically decreasing. 
Therefore, if $\bar{g}'(s)=0$ for some $s\in (0,1)$, then $\bar{g}$ is maximal at $s$.

By the definition of $(\sigma_A^{(\hat{s})},\tau_B^{(\hat{s})})$, it is clear that 
$D_{\hat{s}}(\rho_{AB}\| \sigma_A^{(\hat{s})}\otimes \tau_B^{(\hat{s})} )=I_{\hat{s}}^{\downarrow\downarrow}(A:B)_\rho$. 
According to \cite[Theorem~3~(p)]{burri2025prmisrmi1}, 
$\frac{\mathrm{d}}{\mathrm{d} s} D_{s}(\rho_{AB}\| \sigma_A^{(\hat{s})}\otimes \tau_B^{(\hat{s})} )|_{s=\hat{s}}=\frac{\mathrm{d}}{\mathrm{d} s}I_{s}^{\downarrow\downarrow}(A:B)_\rho|_{s=\hat{s}}$. 
Importantly, this implies that $\phi$ and $\bar{\phi}$ are the same up to first order at $\hat{s}$, i.e.,
\begin{align}
\bar{\phi}(\hat{s})&=(\hat{s}-1)I_{\hat{s}}^{\downarrow\downarrow}(A:B)_\rho = \phi(\hat{s})
\qquad\text{and}\qquad 
\bar{\phi}'(\hat{s})=\phi'(\hat{s}).
\label{eq:opt-bar1}
\end{align}
\eqref{eq:opt-bar1} implies that $\bar{g}'(\hat{s})=g'(\hat{s})=0$, so $\bar{g}$ achieves its maximum at $\hat{s}$. 
Therefore,
\begin{subequations}\label{eq:optimal-1}
\begin{align}
\sup_{s\in (0,1)} \frac{1-s}{s}(D_s(\rho_{AB}\| \sigma_A^{(\hat{s})}\otimes \tau_B^{(\hat{s})} )-R) 
&=\frac{1-\hat{s}}{\hat{s}}(D_{\hat{s}}(\rho_{AB}\| \sigma_A^{(\hat{s})}\otimes \tau_B^{(\hat{s})} )-R)
\\
&=\frac{1-\hat{s}}{\hat{s}}(I_{\hat{s}}^{\downarrow\downarrow}(A:B)_\rho -R).
\end{align}
\end{subequations}
By evaluating~\eqref{eq:converse-hoeffding} for $(\sigma_A,\tau_B)=(\sigma_A^{(\hat{s})},\tau_B^{(\hat{s})})$, it follows that
\begin{align}\label{eq:optimal-2}
\limsup_{n\rightarrow\infty} -\frac{1}{n}\log \hat{\alpha}_{n,\rho}^{\mathrm{iid}} (e^{-nR})
&\leq \sup_{s\in (0,1)} \frac{1-s}{s}(D_s(\rho_{AB}\| \sigma_A^{(\hat{s})}\otimes\tau_B^{(\hat{s})} )-R).
\end{align}
The assertion then follows from the combination of~\eqref{eq:optimal-0},~\eqref{eq:optimal-1},  and~\eqref{eq:optimal-2}.
\end{proof}

\subsection{Example for Remark~\ref{rem:permutation_prmi}}\label{app:example_prmi}
Before we begin constructing a counterexample, let us note that for any 
$\rho_{AB}\in \mathcal{S}(AB),n\in \mathbb{N}_{>0},\mu\in [0,\infty)$, 
\begin{align}\label{eq:a-all2}
\hat{\alpha}_{n,\rho}^{\mathrm{ind}}(\mu)
=\min_{\substack{T^n_{A^nB^n}\in \mathcal{L} (A^nB^n): \\ 0\leq T^n_{A^nB^n}\leq 1}}
&\{\tr[\rho_{AB}^{\otimes n}(1-T^n_{A^nB^n})]:
\nonumber \\
&\max_{m\in \mathbb{N}_{>0}}
\max_{\substack{\sigma_{A^n}^{(i)} \in \mathcal{S}(A^n),\tau_{B^n}^{(i)} \in \mathcal{S}(B^n), \\
(p_i)_{i\in [m]}\in [0,1]^{\times m}: \sum\limits_{i\in [m]}p_i=1 }}
\sum_{i\in [m]}p_i
\tr[\sigma_{A^n}^{(i)}\otimes \tau_{B^n}^{(i)} T^n_{A^nB^n}]\leq \mu\}.
\end{align}

We will now construct a counterexample for the consideration in Remark~\ref{rem:permutation_prmi}.
Suppose $d_A\geq 2,d_B\geq 2$, 
and let $\{\ket{i}_A\}_{i=0}^1,\{\ket{i}_B\}_{i=0}^1$ be orthonormal vectors in $A,B$.
Let $p\in (\frac{1}{2},1)$ and let 
$\rho_{AB}\coloneqq p \proj{0,0}_{AB}+(1-p)\proj{1,1}_{AB}$. 
By \cite[Theorem~3~(v)]{burri2025prmisrmi1},
\begin{align}
R_{1/2}
&\coloneqq I_{1/2}^{\downarrow\downarrow}(A:B)_\rho-\frac{1}{4}\frac{\partial}{\partial s^+}I_{s}^{\downarrow\downarrow}(A:B)_\rho \big|_{s=\frac{1}{2}}
=-\log p -\frac{1}{4}(-4\log \max(p,1-p))=0.
\end{align}

Consider now the left-hand side of~\eqref{eq:direct} with $\hat{\alpha}_{n,\rho}$ replaced by 
$\hat{\alpha}^{\mathrm{ind}}_{n,\rho}$.
Since $\rho_{AB}$ is separable between $A$ and $B$, also $\rho_{AB}^{\otimes n}$ is separable between $A^n$ and $B^n$ for any $n\in \mathbb{N}_{>0}$. 
Hence,~\eqref{eq:a-all2} implies that $\hat{\alpha}^{\mathrm{ind}}_{n,\rho}(\mu)\geq 1-\mu$ for all $\mu\in [0,1]$. 
On the other hand, due to Lemma~\ref{lem:order}~(b), we have 
$\hat{\alpha}^{\mathrm{ind}}_{n,\rho}(\mu)\leq 1-\mu$ for all $\mu\in [0,1]$. 
Therefore, $\hat{\alpha}^{\mathrm{ind}}_{n,\rho}(\mu)=1-\mu$ for all $\mu\in [0,1]$. 
This implies that $\lim_{n\rightarrow\infty}\hat{\alpha}_{n,\rho}^{\mathrm{ind}}(e^{-nR})=1$ 
for any $R\in (0,\infty)$. 
Hence, for any $R\in (0,\infty)$
\begin{equation}\label{eq:ex-permutation1}
\lim_{n\rightarrow\infty}-\frac{1}{n}\log \hat{\alpha}_{n,\rho}^{\mathrm{ind}}(e^{-nR})=0.
\end{equation}
Consider now the right-hand side of~\eqref{eq:direct}. 
For any 
$R\in (0,H(A)_\rho)$
\begin{align}\label{eq:ex-permutation2}
\sup_{s\in (\frac{1}{2},1)}\frac{1-s}{s}(I_s^{\downarrow\downarrow}(A:B)_\rho - R)
&=\sup_{s\in (\frac{1}{2},1)}\frac{1-s}{s}(H_{\frac{s}{2s-1}}(A)_\rho - R)
>0.
\end{align} 
The equality in~\eqref{eq:ex-permutation2} follows from \cite[Theorem~3~(v)]{burri2025prmisrmi1}. 
The strict inequality in~\eqref{eq:ex-permutation2} follows from the continuity of the R\'enyi entropy in the R\'enyi order and  $R<H(A)_\rho=H_1(A)_\rho$. 
A comparison of~\eqref{eq:ex-permutation1} and~\eqref{eq:ex-permutation2} reveals that the equality in~\eqref{eq:direct} is violated if $\hat{\alpha}_{n,\rho}$ is replaced by 
$\hat{\alpha}^{\mathrm{ind}}_{n,\rho}$.
Therefore, Theorem~\ref{thm:direct} does not hold if $\hat{\alpha}_{n,\rho}$ is replaced by $\hat{\alpha}_{n,\rho}^{\mathrm{ind}}$.

\subsection{Proof of Corollary~\ref{cor:limit-type1_direct}}\label{app:cor_direct}
\begin{cor:limit-type1_direct_repeated}[Asymptotic minimum type-I error]
Let $\rho_{AB}\in \mathcal{S}(AB)$ and let $R\in (-\infty,I(A:B)_\rho)$. Then 
$\lim_{n\rightarrow\infty}\hat{\alpha}_{n,\rho}(e^{-nR})=0$.
Moreover, the same is true if $\hat{\alpha}_{n,\rho}$ is replaced by 
$\hat{\alpha}_{n,\rho}^{\mathrm{iid}}$.
\end{cor:limit-type1_direct_repeated}

\begin{proof}
Let $\rho_{AB}\in \mathcal{S}(AB)$ and let $R\in (-\infty,I(A:B)_\rho)$. 
By Lemma~\ref{lem:order}~(b), 
\begin{equation}\label{eq:app-cor_direct}
0\leq \liminf_{n\rightarrow\infty}\hat{\alpha}_{n,\rho}(e^{-nR})
\leq \limsup_{n\rightarrow\infty}\hat{\alpha}_{n,\rho}(e^{-nR}).
\end{equation}
We will now prove that $\lim_{n\rightarrow\infty}\hat{\alpha}_{n,\rho}(e^{-nR})=0$ by cases.

Case 1: $R\in (0,I(A:B)_\rho)$. 
By the proof of achievability for Theorem~\ref{thm:direct}, see~\eqref{eq:ach-12},
\begin{equation}\label{eq:proof-cor-liminf_direct}
\liminf_{n\rightarrow\infty}-\frac{1}{n}\log \hat{\alpha}_{n,\rho}(e^{-nR})
\geq \sup_{s\in (\frac{1}{2},1)}\frac{1-s}{s}(I_s^{\downarrow\downarrow}(A:B)_\rho -R)>0,
\end{equation}
where the strict inequality follows from Theorem~\ref{thm:direct}. 
\eqref{eq:proof-cor-liminf_direct} implies that 
$\limsup_{n\rightarrow\infty}\hat{\alpha}_{n,\rho}(e^{-nR})=0$.
By~\eqref{eq:app-cor_direct}, it follows that 
$\lim_{n\rightarrow\infty}\hat{\alpha}_{n,\rho}(e^{-nR})=0$.

Case 2: $R\in (-\infty, 0]$. By Lemma~\ref{lem:order}~(a),~(b),
\begin{align}
\limsup_{n\rightarrow\infty}\hat{\alpha}_{n,\rho}(e^{-nR})
\leq \limsup_{n\rightarrow\infty}\hat{\alpha}_{n,\rho}(e^{0})
=0.
\end{align}
By~\eqref{eq:app-cor_direct}, it follows that 
$\lim_{n\rightarrow\infty}\hat{\alpha}_{n,\rho}(e^{-nR})=0$. 

This completes the proof of $\lim_{n\rightarrow\infty}\hat{\alpha}_{n,\rho}(e^{-nR})=0$. 
The assertion regarding $\hat{\alpha}_{n,\rho}^{\mathrm{iid}}$ follows from this due to Lemma~\ref{lem:order}~(b).
\end{proof}

\subsection{Proof of Corollary~\ref{cor:direct}}\label{app:proof-cor_direct}
\begin{lem}\label{lem:constant_prmi}
Let $\rho_{AB}\in \mathcal{S}(AB)$ and let $R$ be defined as in~\eqref{eq:def-R(s)-direct}. 
Let $a,b\in [\frac{1}{2},1]$ be such that $a<b$. 
Then, $(a,b)\rightarrow [0,\infty),s\mapsto I_s^{\downarrow\downarrow}(A:B)_\rho$ is constant iff 
$(a,b)\rightarrow [0,\infty),s\mapsto R(s)$ is constant. 
\end{lem}
\begin{proof}
Suppose $(a,b)\rightarrow [0,\infty),s\mapsto I_s^{\downarrow\downarrow}(A:B)_\rho$ is constant, i.e., $\exists c\in \mathbb{R}$ such that 
$I_s^{\downarrow\downarrow}(A:B)_\rho=c$ for all $s\in (a,b)$. 
Then, $\frac{\mathrm{d}}{\mathrm{d}s}I_s^{\downarrow\downarrow}(A:B)_\rho=0$ for all $s\in (a,b)$, so $R(s)=I_s^{\downarrow\downarrow}(A:B)_\rho=c$ for all $s\in (a,b)$. 

Now, suppose $(a,b)\rightarrow [0,\infty),s\mapsto R(s)$ is constant instead, 
i.e., $\exists c\in \mathbb{R}$ such that 
$I_s^{\downarrow\downarrow}(A:B)_\rho-s(1-s)\frac{\mathrm{d}}{\mathrm{d}s}I_s^{\downarrow\downarrow}(A:B)_\rho=c$ for all $s\in (a,b)$. 
The solution of this first-order linear ordinary differential equation is 
$I_s^{\downarrow\downarrow}(A:B)_\rho=\frac{c+ks}{1-s}$ where $k\in \mathbb{R}$ is a constant. 
We have
\begin{align}
0&=\lim_{s\rightarrow 1^-}(1-s)I_s^{\downarrow\downarrow}(A:B)_\rho 
=\lim_{s\rightarrow 1^-}(c+ks)
=c+k.
\end{align}
Hence, $k=-c$, which implies that $I_s^{\downarrow\downarrow}(A:B)_\rho =c$ for all $s\in (a,b)$.
\end{proof}

\begin{cor:direct_repeated}[Direct exponent]
Let $\rho_{AB}\in \mathcal{S}(AB)$ and let
\begin{align}\label{eq:def-R(s)-direct}
R:(1/2,1]\rightarrow [0,I(A:B)_\rho],\quad 
s\mapsto 
I_s^{\downarrow\downarrow}(A:B)_\rho-s(1-s)\frac{\mathrm{d}}{\mathrm{d} s}I_s^{\downarrow\downarrow}(A:B)_\rho.
\end{align}
Then $R$ is continuous and monotonically increasing. 
Let $R_{1/2}\coloneqq \lim_{s\rightarrow 1/2^+}R(s)$ and 
$R(1/2)\coloneqq R_{1/2}$. 
Let $s_{1/2}\coloneqq \max\{s\in [\frac{1}{2},1]:R(s)=R_{1/2}\}$ and 
$s_1\coloneqq\min\{s\in [\frac{1}{2},1]:R(s)=I(A:B)_\rho\}$.

Suppose $I_{1/2}^{\downarrow\downarrow}(A:B)_\rho\neq I(A:B)_\rho$. 
Then, $\frac{1}{2}= s_{1/2}<s_1\leq 1$ and for any $s\in (s_{1/2},s_1)$
\begin{equation}\label{eq:direct-cor}
\lim_{n\rightarrow\infty}-\frac{1}{n}\log \hat{\alpha}_{n,\rho}(e^{-nR(s)})
=\frac{1-s}{s}(I_s^{\downarrow\downarrow}(A:B)_\rho - R(s))
=(1-s)^2\frac{\mathrm{d}}{\mathrm{d}s} I_s^{\downarrow\downarrow}(A:B)_\rho.
\end{equation}
Moreover, the same is true if $\hat{\alpha}_{n,\rho}$ in~\eqref{eq:direct-cor} is replaced by 
$\hat{\alpha}_{n,\rho}^{\mathrm{iid}}$.
\end{cor:direct_repeated}

\begin{proof}
Let $\rho_{AB}\in \mathcal{S}(AB)$. 
The proof of Theorem~\ref{thm:direct} then implies that 
$R$ is continuous and monotonically increasing. 

From now on, suppose 
$I_{1/2}^{\downarrow\downarrow}(A:B)_\rho\neq I(A:B)_\rho$. 
Then, Lemma~\ref{lem:constant_prmi} implies that $R(1/2)\neq R(1)$. 
Hence, $\frac{1}{2}\leq s_{1/2}<s_1\leq 1$. 
The proof of Theorem~\ref{thm:direct} then implies that~\eqref{eq:direct-cor} holds for any $s\in (s_{1/2},s_1)$, and that the same is true if $\hat{\alpha}_{n,\rho}$ in~\eqref{eq:direct-cor} is replaced by 
$\hat{\alpha}_{n,\rho}^{\mathrm{iid}}$.

It remains to show that $s_{1/2}=\frac{1}{2}$. We will prove this by contradiction. 
Suppose $s_{1/2}>\frac{1}{2}$. 
Then, $R(s)=R_{1/2}$ for all $s\in [\frac{1}{2},s_{1/2}]$. 
By Lemma~\ref{lem:constant_prmi}, it follows that
\begin{align}\label{eq:proof_const1_prmi}
I_s^{\downarrow\downarrow}(A:B)_\rho=I_{1/2}^{\downarrow\downarrow}(A:B)_\rho
\qquad \forall s\in [1/2,s_{1/2}].
\end{align} 
Since $I_{1/2}^{\downarrow\downarrow}(A:B)_\rho\neq I(A:B)_\rho$, it follows from \cite[Theorem~3~(p)]{burri2025prmisrmi1} that there exists $s_*\in (s_{1/2},s_1)$ such that 
$\frac{\mathrm{d}}{\mathrm{d}s}I_s^{\downarrow\downarrow}(A:B)_\rho|_{s=s_*}\neq 0$. 
Then,
\begin{align}
0
&=\lim_{s\rightarrow s_{1/2}}\frac{\mathrm{d}}{\mathrm{d}s} I_s^{\downarrow\downarrow}(A:B)_\rho
\label{eq:proof_const2_prmi}\\
&=\lim_{s\rightarrow s_{1/2}^+} (1-s)^2 \frac{\mathrm{d}}{\mathrm{d}s} I_s^{\downarrow\downarrow}(A:B)_\rho
\\
&=\lim_{s\rightarrow s_{1/2}^+}\lim_{n\rightarrow\infty}-\frac{1}{n}\log \hat{\alpha}_{n,\rho}(e^{-nR(s)})
\label{eq:proof_const3_prmi}\\
&\geq \lim_{n\rightarrow\infty}-\frac{1}{n}\log \hat{\alpha}_{n,\rho}(e^{-nR(s_*)})
\label{eq:proof_const4_prmi}
\\
&=(1-s_*)^2 \frac{\mathrm{d}}{\mathrm{d}s} I_s^{\downarrow\downarrow}(A:B)_\rho \big|_{s=s^*}
>0.
\label{eq:proof_const5_prmi}
\end{align}
\eqref{eq:proof_const2_prmi} follows from~\eqref{eq:proof_const1_prmi} and \cite[Theorem~3~(p)]{burri2025prmisrmi1}. 
The equalities in~\eqref{eq:proof_const3_prmi} and~\eqref{eq:proof_const5_prmi} follow from above because~\eqref{eq:direct-cor} holds for any $s\in (s_{1/2},s_1)$. 
\eqref{eq:proof_const4_prmi} holds because $R$ is monotonically increasing, $\hat{\alpha}_{n,\rho}$ is monotonically decreasing (see Lemma~\ref{lem:order}), and $s_{1/2}\leq s_*$. 
Since~\eqref{eq:proof_const5_prmi} yields a contradiction, we can conclude that $s_{1/2}=\frac{1}{2}$.
\end{proof}

\subsection{Proof for Remark~\ref{rem:direct}}\label{app:rem_direct}
\begin{proof}
Let $\rho_{AB}\in \mathcal{S}(AB)$ be such that $V(A:B)_\rho\neq 0$. 
By \cite[Theorem~3~(n), (p)]{burri2025prmisrmi1} it follows that there exists $\alpha_0\in (\frac{1}{2},1)$ such that 
$I_\alpha^{\downarrow\downarrow}(A:B)_\rho$ is strictly monotonically increasing for $\alpha\in [\alpha_0,1]$. 
Let $R$ be defined as in~\eqref{eq:def-R(s)-direct}. 
By Lemma~\ref{lem:constant_prmi} and Corollary~\ref{cor:direct}, it follows that $R(\alpha)$ is strictly monotonically increasing for $\alpha\in [\alpha_0,1]$. 
Therefore, the parameter $s_1$ as defined Corollary~\ref{cor:direct} is given by $s_1=1$. 
The assertion in Remark~\ref{rem:direct} then follows from Corollary~\ref{cor:direct}.
\end{proof}

\subsection{Proof of Corollary~\ref{cor:forward-cutoff}}\label{ssec:forward-cutoff}
For simplicity, we will prove the assertion for $\hat{\alpha}_{n,\rho}$ only. 
The proof for $\hat{\alpha}_{n,\rho}^{\mathrm{iid}}$ is analogous due to Lemma~\ref{lem:order} and Corollary~\ref{cor:direct}. 
The proof is divided into two parts: 
a proof of achievability and a proof of optimality. 

\subsubsection{Proof of achievability}
Let $\rho_{AB}\in \mathcal{S}(AB)$.  
In the following, we will show that 
\begin{align}\label{eq:fc-geq-forward}
R_0^{(f)}(\beta)_\rho\geq I_{\frac{1}{1-\beta}}^{\downarrow\downarrow}(A:B)_\rho\quad \forall\beta\in (-1,0).
\end{align}
\begin{proof}
Let $\beta\in (-1,0)$ be arbitrary but fixed. 
Let $R_0\coloneqq I_{\frac{1}{1-\beta}}^{\downarrow\downarrow}(A:B)_\rho$. 
Then, for all $R\in (R_{1/2},\infty)$
\begin{subequations}\label{eq:fc12-forward}
\begin{align}
\liminf_{n\rightarrow\infty}-\frac{1}{n}\log \hat{\alpha}_{n,\rho}(e^{-nR})
&\geq \sup_{s\in (\frac{1}{2},1)}\frac{1-s}{s}(I_s^{\downarrow\downarrow}(A:B)_\rho - R)
\label{eq:fc1-forward}\\
&=\sup_{b\in (-1,0)}b(R-I_{\frac{1}{1-b}}^{\downarrow\downarrow}(A:B)_\rho )
\geq \beta (R-R_0).
\label{eq:fc2-forward}
\end{align}
\end{subequations}
\eqref{eq:fc1-forward} follows from the proof of achievability for Theorem~\ref{thm:direct}, see~\eqref{eq:ach-12}. 
The equality in~\eqref{eq:fc2-forward} follows from~\eqref{eq:fc1-forward} by defining $b\coloneqq \frac{s-1}{s}$. 
\eqref{eq:fc12-forward} implies that 
$R_0^{(f)}(\beta)_\rho\geq R_0 
= I_{\frac{1}{1-\beta}}^{\downarrow\downarrow}(A:B)_\rho$.
\end{proof}

\subsubsection{Proof of optimality}
Let $\rho_{AB}\in \mathcal{S}(AB)$ be such that $I_{1/2}^{\downarrow\downarrow}(A:B)_\rho \neq I(A:B)_\rho$. 
Then, $-1=\beta_{1/2}<\beta_1\leq 0$ follows from Corollary~\ref{cor:direct}. 
In the following, we will show that 
\begin{align}
R_0^{(f)}(\beta)_\rho\leq I_{\frac{1}{1-\beta}}^{\downarrow\downarrow}(A:B)_\rho \quad\forall \beta\in (\beta_{1/2},\beta_1).
\end{align}
\begin{proof}
Let $\beta\in (\beta_{1/2},\beta_1)$ be arbitrary but fixed. 
Let $s\coloneqq \frac{1}{1-\beta}\in (s_{1/2},s_1)$, and let $R(s)\coloneqq I_s^{\downarrow\downarrow}(A:B)_\rho-s(1-s)\frac{\mathrm{d}}{\mathrm{d} s}I_s^{\downarrow\downarrow}(A:B)_\rho$. 
Then, $R(s)\in (R_{1/2},I(A:B)_\rho)\subseteq(R_{1/2},\infty)$ follows from Corollary~\ref{cor:direct}. 
For all $R_0\in (I_{\frac{1}{1-\beta}}^{\downarrow\downarrow}(A:B)_\rho,\infty)$
\begin{subequations}\label{eq:fc34-forward}
\begin{align}
\liminf_{n\rightarrow\infty} -\frac{1}{n}\log \hat{\alpha}_{n,\rho}(e^{-nR(s)})
&= \frac{1-s}{s}(I_s^{\downarrow\downarrow}(A:B)_\rho -R(s))
\label{eq:fc3-forward}\\
&=\beta (R(s)-I_{\frac{1}{1-\beta}}^{\downarrow\downarrow}(A:B)_\rho )
<\beta (R(s)-R_0 ).
\label{eq:fc4-forward}
\end{align}
\end{subequations}
\eqref{eq:fc3-forward} follows from Corollary~\ref{cor:direct}. 
\eqref{eq:fc34-forward} implies that 
$R_0^{(f)}(\beta)_\rho\leq I_{\frac{1}{1-\beta}}^{\downarrow\downarrow}(A:B)_\rho$.
\end{proof}

\subsection{Proof of Theorem~\ref{thm:moderate_direct}}\label{proof:moderate_direct}
The proof of Theorem~\ref{thm:moderate_direct} is divided into two parts: 
a proof of achievability for $\hat{\alpha}_{n,\rho}$ and 
a proof of optimality for $\hat{\alpha}_{n,\rho}^{\mathrm{iid}}$. 
The assertion in Theorem~\ref{thm:moderate_direct} follows from these two parts because 
$\hat{\alpha}_{n,\rho}^{\mathrm{iid}}(\mu)\leq \hat{\alpha}_{n,\rho}(\mu)$ for all $\mu\in [0,\infty)$, see Lemma~\ref{lem:order}.

\subsubsection*{Proof of achievability}
We will show that 
\begin{align}\label{eq:mod-ach}
\liminf_{n\rightarrow\infty}-\frac{1}{na_n^2}\log \hat{\alpha}_{n,\rho}(e^{-nR_n})\geq \frac{1}{2V(A:B)_\rho}.
\end{align}
\begin{proof}
Let $I\coloneqq I(A:B)_\rho$ and $V\coloneqq V(A:B)_\rho$. 
Since $V>0$ by assumption, also $I>0$. 
Thus, for all sufficiently large $n\in \mathbb{N}$, we have $R_n=I-a_n>0$. 
Henceforth, we assume that this is the case.

Consider the first-order Taylor expansion of $I_{1/(1+s)}^{\downarrow\downarrow}(A:B)_\rho$ around $s=0$, and denote the corresponding error term by
\begin{align}\label{eq:mod-eps-def}
\varepsilon(s)
&\coloneqq I_{\frac{1}{1+s}}^{\downarrow\downarrow}(A:B)_\rho
-I_{1}^{\downarrow\downarrow}(A:B)_\rho
+s\frac{\partial}{\partial \alpha}I_{\alpha}^{\downarrow\downarrow}(A:B)_\rho\big|_{\alpha=1} 
\end{align}
for $s\in (0,1)$. 
For the second term, we have $I_1^{\downarrow\downarrow}(A:B)_\rho=I$~\cite{burri2025prmisrmi1}. 
For the last term, we have $\frac{\partial}{\partial \alpha}I_\alpha^{\downarrow\downarrow}(A:B)_\rho|_{\alpha=1}=V/2$~\cite{burri2025prmisrmi1}. 
Using that the doubly minimized PRMI of order $\alpha$ is continuously differentiable on $\alpha\in (1/2,2)$~\cite{burri2025prmisrmi1}, the first term can be expressed as
\begin{align}
I_{\frac{1}{1+s}}^{\downarrow\downarrow}(A:B)_\rho
&=I_{1-\frac{s}{1+s}}^{\downarrow\downarrow}(A:B)_\rho
=I -\int_{1-\frac{s}{1+s}}^1 \frac{\partial}{\partial \alpha} I_{\alpha}^{\downarrow\downarrow}(A:B)_\rho \,  \mathrm{d}\alpha
\\
&=I
- \int_{1-\frac{s}{1+s}}^1 \left(\frac{\partial}{\partial \alpha}I_{\alpha}^{\downarrow\downarrow}(A:B)_\rho -  \frac{\partial}{\partial \beta}I_{\beta}^{\downarrow\downarrow}(A:B)_\rho\big|_{\beta=1} \right) \mathrm{d}\alpha
-\frac{V}{2} \frac{s}{1+s}.
\label{eq:mod-taylors}
\end{align}
To estimate the absolute value of the second term in~\eqref{eq:mod-taylors}, we define for $s\in (0,1)$
\begin{align}\label{eq:mod-delta}
\delta(s)
&\coloneqq \sup_{\alpha\in (1-\frac{s}{1+s},1)} \Big| \frac{\partial}{\partial\alpha}I_\alpha^{\downarrow\downarrow}(A:B)_\rho - \frac{\partial}{\partial\beta}I_\beta^{\downarrow\downarrow}(A:B)_\rho\big|_{\beta=1} \Big| .
\end{align}
Combining these results, 
we conclude that 
\begin{align}\label{eq:eps-abs}
|\varepsilon(s)| 
&\leq \frac{s}{1+s} \delta(s) +\frac{V}{2} \Big|s-\frac{s}{1+s} \Big|
= \frac{s}{1+s}\left(\delta(s)+\frac{sV}{2}\right)
\leq s\left(\delta (s)+\frac{sV}{2}\right) .
\end{align}

We have
\begin{align}
&-\frac{1}{na_n^2}\log \hat{\alpha}_{n,\rho}(e^{-nR_n})
\nonumber\\
&\geq \frac{1}{a_n^2} \sup_{\alpha\in (\frac{1}{2},1)}
\frac{1-\alpha}{\alpha} \left(\frac{1}{n}D_\alpha (\rho_{AB}^{\otimes n}\| \omega_{A^n}^n\otimes \omega_{B^n}^n) -R_n-\frac{\log g_{n,d_A}}{n} -\frac{\log g_{n,d_B}}{n} \right)
\label{eq:mod-dir1}\\
&\geq \frac{1}{a_n^2} \sup_{s\in (0,1)} 
s
\left( I_{\frac{1}{1+s}}^{\downarrow\downarrow}(A:B)_\rho -R_n- \frac{\log g_{n,d_A}}{n} -\frac{\log g_{n,d_B}}{n} \right)
\label{eq:mod-dir2}\\
&\geq \frac{1}{a_n^2} \sup_{s\in (0,1)} 
\left(-\frac{s^2V}{2} -s^2 \left(\delta (s)+\frac{sV}{2}\right)
+sa_n- s\frac{\log g_{n,d_A}}{n}-s\frac{\log g_{n,d_B}}{n} \right)
\label{eq:mod-dir3}
\\
&\geq\frac{1}{2V}
-\frac{1}{V^2}\delta\left(\frac{a_n}{V}\right)
-\frac{a_n}{2V^2}
-\frac{1}{V}\frac{1}{\sqrt{n}a_n} \left(\frac{\log g_{n,d_A}}{\sqrt{n}} + \frac{\log g_{n,d_B}}{\sqrt{n}} \right).
\label{eq:mod-dir4}
\end{align}
\eqref{eq:mod-dir1} follows from the same arguments as in~\eqref{eq:ach-two}--\eqref{eq:ach-one}. 
\eqref{eq:mod-dir2} follows from the previous line by setting $s\coloneqq\frac{1-\alpha}{\alpha}$ and by noting that $\frac{1}{n}D_\alpha (\rho_{AB}^{\otimes n}\| \omega_{A^n}^n\otimes \omega_{B^n}^n) \geq \frac{1}{n} I_\alpha^{\downarrow\downarrow}(A^n:B^n)_{\rho^{\otimes n}}=I_\alpha^{\downarrow\downarrow}(A:B)_\rho$ for $\alpha\in (\frac{1}{2},1)$, where the last equality holds by additivity~\cite{burri2025prmisrmi1}. 
\eqref{eq:mod-dir3} follows from the definition of $R_n$, $I_{1/(1+s)}^{\downarrow\downarrow}(A:B)_\rho =I-sV/2+\varepsilon(s)$, see~\eqref{eq:mod-eps-def}, and from~\eqref{eq:eps-abs}. 
\eqref{eq:mod-dir4} follows from the previous line by setting $s\coloneqq a_n/V$. 

Let us consider the limit of~\eqref{eq:mod-dir4} as $n\rightarrow\infty$. 
The second term in~\eqref{eq:mod-dir4} vanishes because $\delta$ is continuous (as $I_\alpha^{\downarrow\downarrow}(A:B)_\rho$ is continuously differentiable on $\alpha\in (\frac{1}{2},2)$~\cite{burri2025prmisrmi1}) and $\lim_{n\rightarrow\infty}a_n=0$. 
The third term in~\eqref{eq:mod-dir4} vanishes because $\lim_{n\rightarrow\infty}a_n= 0$. 
The last term in~\eqref{eq:mod-dir4} vanishes because $\lim_{n\rightarrow\infty}\sqrt{n}a_n= \infty$ by assumption and $\lim_{n\rightarrow\infty}(\log g_{n,d_A})/\sqrt{n}=0=\lim_{n\rightarrow\infty}(\log g_{n,d_A})/\sqrt{n}$, see~\cite[Remark~1~(b)]{burri2025prmisrmi1}. 
Thus, the claim in~\eqref{eq:mod-ach} follows from~\eqref{eq:mod-dir4}.
\end{proof}

\subsubsection*{Proof of optimality}
We will show that 
\begin{align}
\limsup_{n\rightarrow\infty}-\frac{1}{na_n^2}\log \hat{\alpha}_{n,\rho}^{\mathrm{iid}}(e^{-nR_n})\leq \frac{1}{2V(A:B)_\rho}.
\end{align}
\begin{proof}
Let us define the following function of $\mu\in [0,\infty)$.
\begin{align}\label{eq:def-alpha-mar}
\hat{\alpha}_{n,\rho}^{\mathrm{mar}}(\mu )
&\coloneqq \min_{\substack{T^n_{A^nB^n}\in \mathcal{L} (A^nB^n): \\ 0\leq T^n_{A^nB^n}\leq 1}}
\{\tr[\rho_{AB}^{\otimes n}(1-T^n_{A^nB^n})]:
\tr[\rho_{A}^{\otimes n}\otimes \rho_{B}^{\otimes n}\, T^n_{A^nB^n}]
\leq \mu\}
\end{align}
Then 
$\hat{\alpha}_{n,\rho}^{\mathrm{iid}}(\mu )\geq 
\hat{\alpha}_{n,\rho}^{\mathrm{mar}}(\mu )$
for all $\mu\in [0,\infty)$. Therefore, 
\begin{align}
\limsup_{n\rightarrow\infty}-\frac{1}{na_n^2}\log \hat{\alpha}_{n,\rho}^{\mathrm{iid}}(e^{-nR_n})
&\leq \limsup_{n\rightarrow\infty}-\frac{1}{na_n^2}\log \hat{\alpha}_{n,\rho}^{\mathrm{mar}}(e^{-nR_n})
= \frac{1}{2V(A:B)_\rho}.
\end{align}
The equality follows from~\cite[Theorems~10 and~11]{Cheng2017ModerateDA} (see also~\cite{Chubb_2017}).
\end{proof}

\section{Proofs for Section~\ref{ssec:strongconverse}}

\subsection{Proof of Theorem~\ref{thm:strong-converse}}\label{app:strong-converse}

First, we will derive the bounds on the right-hand side of~\eqref{eq:strong-converse}. 
Let $\rho_{AB}\in \mathcal{S}(AB)$. Then, for any $R\in [0,\infty)$
\begin{align}
0&=\lim_{s\rightarrow 1^+}\frac{s-1}{s}(R-\widetilde{I}_s^{\downarrow\downarrow}(A:B)_\rho )
\leq \sup_{s\in (1,\infty)}\frac{s-1}{s}(R-\widetilde{I}_s^{\downarrow\downarrow}(A:B)_\rho )
\\
&\leq \max(0, \sup_{s\in (1,\infty)}(R-\widetilde{I}_s^{\downarrow\downarrow}(A:B)_\rho ))
= \max(0,R-I(A:B)_\rho)
\end{align}
due to the monotonicity and continuity of the doubly minimized SRMI in the R\'enyi order 
and $\widetilde{I}_1^{\downarrow\downarrow}(A:B)_\rho=I(A:B)_\rho$, 
see \cite[Theorem~4~(k), (l), (m)]{burri2025prmisrmi1}. 
These bounds imply that for any $R\in [0, I(A:B)_\rho]$, the right-hand side of~\eqref{eq:strong-converse} vanishes. 
If $R\in (I(A:B)_\rho,\infty)$ instead, then the right-hand side of~\eqref{eq:strong-converse} is strictly positive 
due to \cite[Theorem~4~(k), (l), (m)]{burri2025prmisrmi1}.

We will now prove the equality in~\eqref{eq:strong-converse}. 
The proof of~\eqref{eq:strong-converse} is divided into two parts: 
a proof of achievability for $\hat{\alpha}_{n,\rho}$ and a proof of optimality for $\hat{\alpha}_{n,\rho}^{\mathrm{iid}}$. 
The assertion follows from these two parts because 
$\hat{\alpha}_{n,\rho}^{\mathrm{iid}}(\mu)\leq \hat{\alpha}_{n,\rho}(\mu)$ for all $\mu\in [0,\infty)$, see Lemma~\ref{lem:order}.
Below, we first give the proof of achievability, followed by the proof of optimality.

\subsubsection{Proof of achievability}\label{proof:achievability}
Let $\rho_{AB}\in \mathcal{S}(AB)$ be such that 
$I(A:B)_\rho\neq \widetilde{I}_\infty^{\downarrow\downarrow}(A:B)_\rho$. 
In the following, we will show that for any $R\in [0,\infty)$ 
\begin{align}\label{eq:ach-0}
\limsup_{n\rightarrow\infty}-\frac{1}{n}\log (1- \hat{\alpha}_{n,\rho}(e^{-nR}))
\leq \sup_{s\in (1,\infty)}\frac{s-1}{s}(R-\widetilde{I}_s^{\downarrow\downarrow}(A:B)_\rho ).
\end{align}
\begin{proof}
Let $R\in [0,\infty)$ be arbitrary but fixed. 
Let 
$R_\infty \coloneqq \lim_{s\rightarrow\infty} ( \widetilde{I}_s^{\downarrow\downarrow}(A:B)_\rho +s(s-1)\frac{\mathrm{d}}{\mathrm{d}s}\widetilde{I}_s^{\downarrow\downarrow}(A:B)_\rho )$. 
Then $R_\infty\in [\widetilde{I}_\infty^{\downarrow\downarrow} (A:B)_\rho,\infty]$ due to the monotonicity in the R\'enyi order of the doubly minimized SRMI, see \cite[Theorem~4~(l)]{burri2025prmisrmi1}.

Case 1: $R\in (I(A:B)_\rho,R_\infty)$. 
First, we will analyze the right-hand side of~\eqref{eq:ach-0}. 
Let us define the following functions of $s\in (1,\infty)$.
\begin{align}
\phi(s)&\coloneqq (s-1)\widetilde{I}_s^{\downarrow\downarrow}(A:B)_\rho 
\label{eq:def-phi}\\
\psi(s)&\coloneqq s\phi'(s)-\phi(s)
=\widetilde{I}_s^{\downarrow\downarrow}(A:B)_\rho+s(s-1)\frac{\mathrm{d}}{\mathrm{d} s}\widetilde{I}_s^{\downarrow\downarrow}(A:B)_\rho
\label{eq:def-psi}\\
g(s)&\coloneqq \frac{1}{s} ((s-1)R-\phi(s))
=\frac{s-1}{s}(R-\widetilde{I}_s^{\downarrow\downarrow}(A:B)_\rho )
\label{eq:def-gs}
\end{align}
$\phi$ is continuously differentiable and convex due to \cite[Theorem~4~(n), (o)]{burri2025prmisrmi1}. 
This implies that $\psi$ is continuous and monotonically increasing~\cite[Lemma~20]{tomamichel2018operational}. 
As a consequence, $g'(s)=\frac{1}{s^2}(R-\psi(s))$ is continuous and monotonically decreasing. 

On the one hand, $\lim_{s\rightarrow 1^+}\psi(s)=\widetilde{I}_1^{\downarrow\downarrow}(A:B)_\rho=I(A:B)_\rho$.
Hence, $\lim_{s\rightarrow 1^+}g'(s)=R-I(A:B)_\rho>0$.
On the other hand, $\lim_{s\rightarrow\infty}\psi(s)=R_\infty$. 
Hence, there exists $t_0\in (1,\infty)$ such that $R<\psi(t_0)$. 
As a consequence, $g'(t_0)=\frac{1}{t_0^2}(R-\psi(t_0))<0$. 
By the continuity and monotonicity of $g'$, we can conclude that there exists $\hat{s}\in (1,t_0)$ such that $g'(\hat{s})=0$ and 
\begin{align}\label{eq:shat-max}
\sup_{s\in (1,\infty)}\frac{s-1}{s}(R-\widetilde{I}_s^{\downarrow\downarrow}(A:B)_\rho )
=\frac{\hat{s}-1}{\hat{s}}(R-\widetilde{I}_{\hat{s}}^{\downarrow\downarrow}(A:B)_\rho ).
\end{align}

Let us define the following function of $t\in (0,\infty)$. 
\begin{align}\label{eq:def-lambda}
\Lambda(t)&\coloneqq 
\phi(t+1)-\frac{t}{\hat{s}}\left(\phi(\hat{s})+R\right)
=\phi(t+1)-t\phi'(\hat{s})
\end{align}
For the second equality in~\eqref{eq:def-lambda}, we have used that $g'(\hat{s})=0$. 
For any $t\in (1,\infty)$, the derivative of $\Lambda$ at $t-1$ is given by
\begin{align}
\Lambda'(t-1)
&=\phi'(t)-\frac{1}{\hat{s}}(\phi(\hat{s})+R)
=\widetilde{I}_t^{\downarrow\downarrow}(A:B)_\rho+(t-1)\frac{\mathrm{d}}{\mathrm{d}t}\widetilde{I}_t^{\downarrow\downarrow}(A:B)_\rho-\frac{1}{\hat{s}}(\phi(\hat{s})+R)
\label{eq:lambda-derivative1}\\
&=\phi'(t)-\frac{1}{\hat{s}}(\hat{s}\phi'(\hat{s})-R+R)
=\phi'(t)-\phi'(\hat{s})
=\frac{1}{t}(\psi(t)+\phi(t))-\frac{1}{\hat{s}}(\psi(\hat{s})+\phi(\hat{s})) ,
\label{eq:lambda-derivative2}
\end{align}
where we have used in the second line that $g'(\hat{s})=0$. 
We have 
\begin{subequations}\label{eq:ge-conditions}
\begin{align}
\lim_{t\rightarrow 1^+}\Lambda'(t-1)
&=\widetilde{I}_1^{\downarrow\downarrow}(A:B)_\rho-\frac{1}{\hat{s}}(\phi(\hat{s})+R)
<\frac{\hat{s}-1}{\hat{s}}(I_1^{\downarrow\downarrow}(A:B)_\rho-I_{\hat{s}}^{\downarrow\downarrow}(A:B)_\rho)\leq 0,
\label{eq:lambda-1}
\\
\Lambda'(t_0-1)&=\frac{1}{t_0}(\psi(t_0)+\phi(t_0))-\frac{1}{\hat{s}}(\psi(\hat{s})+\phi(\hat{s}) ) 
\label{eq:lambda-2}
\\
&>(R-g(t_0))-(R-g(\hat{s}))=g(\hat{s})-g(t_0)\geq 0.
\label{eq:lambda-3}
\end{align}
\end{subequations}
\eqref{eq:lambda-1} follows from~\eqref{eq:lambda-derivative1} and 
$R>I(A:B)_\rho=I_1^{\downarrow\downarrow}(A:B)_\rho$, see \cite[Theorem~4~(k), (l)]{burri2025prmisrmi1}. 
\eqref{eq:lambda-2} follows from~\eqref{eq:lambda-derivative2}. 
The first inequality in~\eqref{eq:lambda-3} follows from $\psi(t_0)>R$ and $g'(\hat{s})=0$. 
The second inequality in~\eqref{eq:lambda-3} follows from~\eqref{eq:shat-max}. 

We will now analyze the left-hand side of~\eqref{eq:ach-0}. 
For any $n\in \mathbb{N}_{>0}$, let us define the test
\begin{align}\label{eq:test-np}
T^n_{A^nB^n}\coloneqq \{\mathcal{P}_{\omega_{A^n}^n\otimes \omega_{B^n}^n}(\rho_{AB}^{\otimes n})
\geq e^{\mu_n}\omega_{A^n}^n\otimes \omega_{B^n}^n\},
\end{align}
where $\mu_n\in \mathbb{R}$ is a trade-off parameter that will be specified later on.  
Let $\{\ket{\phi_{x_n}}\}_{x_n\in [d_A^nd_B^n]}$ be an orthonormal basis of $A^{\otimes n}\otimes B^{\otimes n}$ that diagonalizes both $\mathcal{P}_{\omega_{A^n}^n\otimes \omega_{B^n}^n}(\rho_{AB}^{\otimes n})$ and $\omega_{A^n}^n\otimes \omega_{B^n}^n$, and let us define the PMFs $P_n$ and $Q_n$ as follows.
\begin{align}
[d_A^nd_B^n]\rightarrow [0,1],\quad 
x_n\mapsto
P_n(x_n)&\coloneqq \bra{\phi_{x_n}} \mathcal{P}_{\omega_{A^n}^n\otimes \omega_{B^n}^n}(\rho_{AB}^{\otimes n}) \ket{\phi_{x_n}}
\label{eq:test-p}\\
[d_A^nd_B^n]\rightarrow (0,1],\quad 
x_n\mapsto
Q_n(x_n)&\coloneqq \bra{\phi_{x_n}} \omega_{A^n}^n\otimes \omega_{B^n}^n \ket{\phi_{x_n}}
\label{eq:test-q}
\end{align}

Let $X_n$ be the random variable over the alphabet $[d_A^nd_B^n]$ whose PMF is $P_n$. 
Then
\begin{subequations}\label{eq:ach-at}
\begin{align}
&\tr[\rho_{AB}^{\otimes n}(1-T^n_{A^nB^n})]
\\
&=\tr[\mathcal{P}_{\omega_{A^n}^n\otimes \omega_{B^n}^n}(\rho_{AB}^{\otimes n}) \{ \mathcal{P}_{\omega_{A^n}^n\otimes \omega_{B^n}^n}(\rho_{AB}^{\otimes n}) <e^{\mu_n}\omega_{A^n}^n\otimes \omega_{B^n}^n\}]\\
&= \sum_{x_n\in [d_A^nd_B^n]}\bra{\phi_{x_n}} \mathcal{P}_{\omega_{A^n}^n\otimes \omega_{B^n}^n}(\rho_{AB}^{\otimes n}) \proj{\phi_{x_n}}  \{ \mathcal{P}_{\omega_{A^n}^n\otimes \omega_{B^n}^n}(\rho_{AB}^{\otimes n}) <e^{\mu_n}\omega_{A^n}^n\otimes \omega_{B^n}^n\} \ket{\phi_{x_n}} \\
&=\sum_{x_n\in [d_A^nd_B^n]}P_n(x_n) \delta(P_n(x_n)<e^{\mu_n} Q_n(x_n)) \\
&=\mathrm{Pr}[P_n(X_n)<e^{\mu_n}Q_n(X_n)].
\end{align}
\end{subequations}
It follows from~\eqref{eq:ach-at} that
\begin{subequations}\label{eq:ach-alpha}
\begin{align}
\tr[\rho_{AB}^{\otimes n}T^n_{A^nB^n}]
&=\mathrm{Pr}[P_n(X_n)\geq e^{\mu_n}Q_n(X_n)]\\
&=\mathrm{Pr}\left[\frac{1}{n} (\log P_n(X_n)- \log Q_n(X_n)-\mu_n) \geq 0\right] 
=\mathrm{Pr}[Z_n\geq 0].
\end{align}
\end{subequations}
For the last equality, we defined the random variable
\begin{equation}\label{eq:def-zn}
Z_n\coloneqq \frac{1}{n} (\log P_n(X_n)- \log Q_n(X_n)-\mu_n),
\end{equation}
where we use the convention that $\log P_n(x_n)=-\infty$ if $P_n(x_n)=0$.

Let $X_n'$ be the random variable over the alphabet $[d_A^nd_B^n]$ whose PMF is $Q_n$. Then
\begin{subequations}\label{eq:ach-gg0}
\begin{align}
\sup_{\substack{\sigma_{A^n}\in \mathcal{S}_{\sym}(A^{\otimes n}),\\ \tau_{B^n}\in \mathcal{S}_{\sym}(B^{\otimes n})}} \tr[\sigma_{A^n}\otimes \tau_{B^n} T^n_{A^nB^n}]
&\leq g_{n,d_A}g_{n,d_B}\tr[\omega_{A^n}^n\otimes \omega_{B^n}^nT^n_{A^nB^n}]\\
&= g_{n,d_A}g_{n,d_B} \sum_{x_n\in [d_A^nd_B^n]}\bra{\phi_{x_n}}\omega_{A^n}^n\otimes \omega_{B^n}^n \proj{\phi_{x_n}} T^n_{A^nB^n} \ket{\phi_{x_n}}\\
&= g_{n,d_A}g_{n,d_B} \sum_{x_n\in [d_A^nd_B^n]} Q_n(x_n)\delta(P_n(x_n)\geq e^{\mu_n}Q_n(x_n) )\\
&= g_{n,d_A}g_{n,d_B}\mathrm{Pr}[e^{-\mu_n}P_n(X_n')\geq Q_n(X_n')].
\label{eq:ach-gg}
\end{align}
\end{subequations}

Let us now define 
\begin{align}
\mu_n\coloneqq \frac{1}{\hat{s}} (\log g_{n,d_A}+\log g_{n,d_B}+nR+(\hat{s}-1)D_{\hat{s}}(P_n\| Q_n) ).
\label{eq:def-mun}
\end{align}
Then, 
\begin{subequations}\label{eq:ach-beta}
\begin{align}
\sup_{\substack{\sigma_{A^n}\in \mathcal{S}_{\sym}(A^{\otimes n}),\\ \tau_{B^n}\in \mathcal{S}_{\sym}(B^{\otimes n})}} \tr[\sigma_{A^n}\otimes \tau_{B^n} T^n_{A^nB^n}]
&\leq g_{n,d_A}g_{n,d_B}\sum_{x_n\in [d_A^nd_B^n]}(e^{-\mu_n} P_n(x_n))^{\hat{s}} Q_n(x_n)^{1-\hat{s}}
\label{eq:ach-beta1}\\
&=g_{n,d_A}g_{n,d_B}e^{-\hat{s}\mu_n} \exp((\hat{s}-1)D_{\hat{s}}(P_n\| Q_n)) 
=e^{-nR}.
\label{eq:ach-beta2}
\end{align}
\end{subequations}
\eqref{eq:ach-beta1} follows from~\eqref{eq:ach-gg0} and~\cite[Eq.~(2.2)]{hayashi2016correlation}. 
\eqref{eq:ach-beta2} follows from~\eqref{eq:def-mun}. 
\eqref{eq:ach-alpha} and~\eqref{eq:ach-beta} imply that
\begin{equation}\label{eq:alpha-prz}
\limsup_{n\rightarrow\infty}-\frac{1}{n}\log (1- \hat{\alpha}_{n,\rho}(e^{-nR}))
\leq \limsup_{n\rightarrow\infty}-\frac{1}{n}\log \tr[\rho_{AB}^{\otimes n}T^n_{A^nB^n}]
= \limsup_{n\rightarrow\infty} -\frac{1}{n}\log \mathrm{Pr}[Z_n\geq 0] .
\end{equation}

We will now show that the asymptotic cumulant generating function of $Z_n$ coincides with $\Lambda$. 
For any $t\in (0,\infty)$,
\begin{subequations}\label{eq:z0}
\begin{align}
\Lambda(t)
&=t \left( \widetilde{I}_{1+t}^{\downarrow\downarrow}(A:B)_\rho -\frac{\hat{s}-1}{\hat{s}} \widetilde{I}_{\hat{s}}^{\downarrow\downarrow}(A:B)_\rho \right) - \frac{t}{\hat{s}}R
\\
&=t\lim_{n\rightarrow\infty} \left(\frac{1}{n} D_{1+t}(P_n\| Q_n)-\frac{\hat{s}-1}{\hat{s}}\frac{1}{n} D_{\hat{s}}(P_n\| Q_n)\right) - \frac{t}{\hat{s}}R
\label{eq:z1}\\
&=\lim_{n\rightarrow\infty}\left(\frac{1}{n}\log \mathbb{E}\left[\frac{P_n(X_n)^t}{Q_n(X_n)^t}\right]
-\frac{t}{\hat{s}}\frac{\log(g_{n,d_A}g_{n,d_B})}{n}-\frac{t}{\hat{s}} \frac{(\hat{s}-1)}{n}D_{\hat{s}}(P_n\| Q_n) \right)-\frac{t}{\hat{s}}R 
\label{eq:z11}\\
&= \lim_{n\rightarrow\infty}\frac{1}{n}\log \mathbb{E}[\exp(ntZ_n)].
\label{eq:z2}
\end{align}
\end{subequations}
\eqref{eq:z1} follows from the asymptotic optimality of universal permutation invariant state for the doubly minimized SRMI \cite[Theorem~4~(j)]{burri2025prmisrmi1}. 
\eqref{eq:z11} follows from \cite[Remark~1~(b)]{burri2025prmisrmi1}.
\eqref{eq:z2} follows from~\eqref{eq:def-zn}.

Next, we apply the G\"{a}rtner-Ellis lower bound from~\cite[Proposition 17]{hayashi2016correlation} (see also~\cite[Theorem 3.6]{chen2000generalization}).
The conditions for applying this proposition are fulfilled since 
$\lim_{t\rightarrow 0^+}\Lambda'(t)<0$ and $\Lambda'(t_0-1)>0$, see~\eqref{eq:ge-conditions}.
Thus, we can infer from the combination of \cite[Proposition 17]{hayashi2016correlation} with~\eqref{eq:z0} that 
\begin{subequations}\label{eq:ge}
\begin{align}
\limsup_{n\rightarrow\infty} -\frac{1}{n}\log \mathrm{Pr}[Z_n\geq 0] 
&\leq \sup_{t\in (0,t_0-1)} -\Lambda(t)
=\sup_{t\in (1,t_0)} -\Lambda(t-1)
\label{eq:ge1}\\
&=\sup_{t\in (1,t_0)}(-\phi(t)+(t-1)\phi'(\hat{s}))
\label{eq:ge2}\\
&=-\phi(\hat{s})+(\hat{s}-1)\phi'(\hat{s})
\label{eq:ge3}
\\
&=-\phi(\hat{s})+\frac{\hat{s}-1}{\hat{s}} (R+\phi(\hat{s}))
=\frac{\hat{s}-1}{\hat{s}} (R-\widetilde{I}_{\hat{s}}^{\downarrow\downarrow}(A:B)_\rho ).
\label{eq:ge4}
\end{align}
\end{subequations}
\eqref{eq:ge2} follows from~\eqref{eq:def-lambda}.
\eqref{eq:ge3} holds because the objective function $t\mapsto (t-1)\phi'(s)-\phi(t)$ 
is concave in $t\in (1,\infty)$, 
and its first derivative at $t=\hat{s}\in (1,t_0)$ is zero. 
\eqref{eq:ge4} follows from $g'(\hat{s})=0$. 
The combination of~\eqref{eq:shat-max},~\eqref{eq:alpha-prz}, and~\eqref{eq:ge} implies the assertion in~\eqref{eq:ach-0}. 
This completes the proof for case 1.

Case 2: $R_\infty<\infty$ and $R\in [R_\infty,\infty)$. 
Let $R'\in (I(A:B)_\rho,R_\infty)$.
Let $T^n_{A^n B^n}(R')$ denote the test that was defined in case 1 (where $R$ in case 1 is replaced by $R'$). 
Let us define the test
\begin{align}\label{eq:def-trr}
T^n_{A^n B^n}(R,R')\coloneqq e^{-n(R-R')}T^n_{A^n B^n}(R').
\end{align}
Since $R'\leq R_\infty\leq R$, we have $e^{-n(R-R')}\in [0,1]$, so $0\leq T^n_{A^n B^n}(R,R')\leq 1$. 
By~\eqref{eq:ach-beta},
\begin{align}\label{eq:c20}
\sup_{\substack{\sigma_{A^n}\in \mathcal{S}_{\sym}(A^{\otimes n}),\\ \tau_{B^n}\in \mathcal{S}_{\sym}(B^{\otimes n})}} &\tr[\sigma_{A^n}\otimes \tau_{B^n} T^n_{A^nB^n}(R,R')]
\leq e^{-n(R-R')} e^{-nR'}=e^{-nR}.
\end{align}
Therefore,
\begin{align}
\limsup_{n\rightarrow\infty}-\frac{1}{n}\log (1- \hat{\alpha}_{n,\rho}(e^{-nR}))
&\leq \limsup_{n\rightarrow\infty}-\frac{1}{n}\log \tr[\rho_{AB}^{\otimes n}T^n_{A^nB^n}(R,R')]
\label{eq:c21}\\
&= \limsup_{n\rightarrow\infty}-\frac{1}{n}\log \tr[\rho_{AB}^{\otimes n}T^n_{A^nB^n}(R')]
+R-R'
\label{eq:c22}\\
&\leq \sup_{s\in (1,\infty)}\frac{s-1}{s}(R'-\widetilde{I}_s^{\downarrow\downarrow}(A:B)_\rho )+R-R'.
\label{eq:c23}
\end{align}
\eqref{eq:c21} follows from~\eqref{eq:c20}. 
\eqref{eq:c22} follows from~\eqref{eq:def-trr}. 
\eqref{eq:c23} follows from the proof for case 1. 
By the proof for case 1, the supremum in~\eqref{eq:c23} is achieved by $s\rightarrow\infty$ in the limit where $R'\rightarrow R_\infty$ from below. 
Therefore, 
\begin{align}
\limsup_{n\rightarrow\infty}-\frac{1}{n}\log (1- \hat{\alpha}_{n,\rho}(e^{-nR}))
&\leq\lim_{s\rightarrow\infty}\frac{s-1}{s}(R_\infty-\widetilde{I}_s^{\downarrow\downarrow}(A:B)_\rho ) +R-R_\infty
\\
&=R_\infty-\widetilde{I}_{\infty}^{\downarrow\downarrow}(A:B)_\rho +R-R_\infty 
=R-\widetilde{I}_{\infty}^{\downarrow\downarrow}(A:B)_\rho
\\
&=\lim_{s\rightarrow\infty}\frac{s-1}{s}(R-\widetilde{I}_s^{\downarrow\downarrow}(A:B)_\rho ) 
\label{eq:c25}\\
&\leq \sup_{s\in (1,\infty)}\frac{s-1}{s}(R-\widetilde{I}_s^{\downarrow\downarrow}(A:B)_\rho ), 
\label{eq:c26}
\end{align}
where we have used the continuity of the doubly minimized SRMI in the R\'enyi order, see \cite[Theorem~4~(m)]{burri2025prmisrmi1}.

Case 3: $R\in [0,I(A:B)_\rho]$. Then,
\begin{align}
\limsup_{n\rightarrow\infty}-\frac{1}{n}\log (1- \hat{\alpha}_{n,\rho}(e^{-nR}))
&\leq \inf_{R'\in (I(A:B)_\rho,\widetilde{I}_\infty^{\downarrow\downarrow}(A:B)_\rho)} \limsup_{n\rightarrow\infty}-\frac{1}{n}\log (1- \hat{\alpha}_{n,\rho}(e^{-nR'}))
\label{eq:c3-1}\\
&\leq \inf_{R'\in (I(A:B)_\rho,\widetilde{I}_\infty^{\downarrow\downarrow}(A:B)_\rho)} 
\sup_{s\in (1,\infty)} \frac{s-1}{s} (R'-\widetilde{I}_{s}^{\downarrow\downarrow}(A:B)_\rho )
\label{eq:c3-2}\\
&\leq \inf_{R'\in (I(A:B)_\rho,\widetilde{I}_\infty^{\downarrow\downarrow}(A:B)_\rho)} 
\sup_{s\in (1,\infty)} (R'-\widetilde{I}_{s}^{\downarrow\downarrow}(A:B)_\rho )
=0
\label{eq:c3-22}\\
&= \lim_{s\rightarrow 1^+}\frac{s-1}{s}(R-\widetilde{I}_s^{\downarrow\downarrow}(A:B)_\rho )
\label{eq:c3-3}\\
&\leq \sup_{s\in (1,\infty)}\frac{s-1}{s}(R-\widetilde{I}_s^{\downarrow\downarrow}(A:B)_\rho ).
\end{align}
\eqref{eq:c3-1} follows from the monotonicity of the minimum type-I error, see Lemma~\ref{lem:order}. 
\eqref{eq:c3-2} follows from case 1 because 
$I(A:B)_\rho\neq \widetilde{I}_\infty^{\downarrow\downarrow}(A:B)_\rho$. 
\eqref{eq:c3-22} and~\eqref{eq:c3-3} follow from 
$\widetilde{I}_1^{\downarrow\downarrow}(A:B)_\rho=I(A:B)_\rho$ 
and the monotonicity and continuity of the doubly minimized SRMI in the R\'enyi order, see \cite[Theorem~4~(k), (l), (m)]{burri2025prmisrmi1}.
\end{proof}

\subsubsection{Proof of optimality}\label{sec:optimality}
Let $\rho_{AB}\in \mathcal{S}(AB)$. 
In the following, we will show that for any $R\in [0,\infty)$ 
\begin{equation}\label{eq:sc-optimality}
\liminf_{n\rightarrow\infty}-\frac{1}{n}\log (1- \hat{\alpha}_{n,\rho}^{\mathrm{iid}}(e^{-nR}))
\geq \sup_{s\in (1,\infty)}\frac{s-1}{s}(R-\widetilde{I}_s^{\downarrow\downarrow}(A:B)_\rho ).
\end{equation}

\begin{proof}
Let $R\in [0,\infty)$ be arbitrary but fixed.

Case 1: $\rho_{AB}\neq \rho_A\otimes \rho_B$. 
According to the strong converse bound in~\cite[Lemma~4.7]{mosonyi2014quantum}, we have for any $(\sigma_A,\tau_B)\in \mathcal{S}(A)\times \mathcal{S}(B)$ such that $\rho_{AB}\ll \sigma_A\otimes \tau_B$
\begin{align}
\liminf_{n\rightarrow\infty}-\frac{1}{n}\log (1- \hat{\alpha}_{n,\rho}^{\mathrm{iid}}(e^{-nR}))
\geq \sup_{s\in (1,\infty)}\frac{s-1}{s}(R-\widetilde{D}_s(\rho_{AB}\| \sigma_A\otimes \tau_B) ).
\end{align}
By taking the supremum over all such states, it follows that~\eqref{eq:sc-optimality} holds.

Case 2: $\rho_{AB}= \rho_A\otimes \rho_B$. Then 
$\hat{\alpha}_{n,\rho}^{\mathrm{iid}}(\mu)=1-\mu$ for all $\mu\in [0,1]$. 
Hence,
\begin{align}
\liminf_{n\rightarrow\infty}-\frac{1}{n}\log (1- \hat{\alpha}_{n,\rho}^{\mathrm{iid}}(e^{-nR}))
=R.
\end{align}
By \cite[Theorem~4~(p)]{burri2025prmisrmi1}, 
$\widetilde{I}_s^{\downarrow\downarrow}(A:B)_\rho=0$ for all $s\in (1,\infty)$. Hence, 
\begin{align}
\sup_{s\in (1,\infty)}\frac{s-1}{s}(R-\widetilde{I}_s^{\downarrow\downarrow}(A:B)_\rho )
=\sup_{s\in (1,\infty)}\frac{s-1}{s}R
=R.
\end{align}
\end{proof}

\subsection{Example for Remark~\ref{rem:permutation_srmi}}\label{app:example_srmi}

Suppose $d_A\geq 2,d_B\geq 2$. 
Let $\rho_{AB}\in \mathcal{S}(AB)$ be separable and such that $\rho_{AB}\neq \rho_A\otimes \rho_B$ and 
$I(A:B)_\rho\neq \widetilde{I}_\infty^{\downarrow\downarrow}(A:B)_\rho$. 

Consider now the left-hand side of~\eqref{eq:strong-converse} with $\hat{\alpha}_{n,\rho}$ replaced by $\hat{\alpha}^{\mathrm{ind}}_{n,\rho}$.
Since $\rho_{AB}$ is separable with respect to $A$ and $B$, also $\rho_{AB}^{\otimes n}$ is separable with respect to $A^n$ and $B^n$ for any $n\in \mathbb{N}_{>0}$.
Thus, $\hat{\alpha}^{\mathrm{ind}}_{n,\rho}(\mu)= 1-\mu$ for all $\mu\in [0,1]$, see Appendix~\ref{app:example_prmi}. 
This implies that for any $R\in [0,\infty)$
\begin{equation}
\lim_{n\rightarrow\infty}-\frac{1}{n}\log (1-\hat{\alpha}_{n,\rho}^{\mathrm{ind}}(e^{-nR}))=R.
\end{equation}
Consider now the right-hand side of~\eqref{eq:strong-converse}.
By Theorem~\ref{thm:strong-converse}, for any $R\in [I(A:B)_\rho,\infty)$
\begin{align}\label{eq:ex_rem}
\sup_{s\in (1,\infty)}\frac{s-1}{s}(R-\widetilde{I}_s^{\downarrow\downarrow}(A:B)_\rho)
&\leq R-I(A:B)_\rho
< R.
\end{align} 
The strict inequality follows from $\rho_{AB}\neq \rho_A\otimes\rho_B$. 
Therefore, the equality in~\eqref{eq:strong-converse} is violated if $\hat{\alpha}_{n,\rho}$ is replaced by 
$\hat{\alpha}^{\mathrm{ind}}_{n,\rho}$.
Thus, Theorem~\ref{thm:strong-converse} does not hold if $\hat{\alpha}_{n,\rho}$ is replaced by 
$\hat{\alpha}^{\mathrm{ind}}_{n,\rho}$.

\subsection{Proof of Corollary~\ref{cor:limit-type1_strongconverse}}\label{app:cor_strongconverse}
\begin{cor:limit-type1_strongconverse_repeated}[Asymptotic minimum type-I error]
Let $\rho_{AB}\in \mathcal{S}(AB)$ and let $R\in (I(A:B)_\rho,\infty)$. Then  
$\lim_{n\rightarrow\infty}\hat{\alpha}_{n,\rho}(e^{-nR})=1$.
Moreover, the same is true if $\hat{\alpha}_{n,\rho}$ is replaced by 
$\hat{\alpha}_{n,\rho}^{\mathrm{iid}}$.
\end{cor:limit-type1_strongconverse_repeated}

\begin{proof}
Let $\rho_{AB}\in \mathcal{S}(AB)$ and let $R\in (I(A:B)_\rho,\infty)$. Then
\begin{equation}\label{eq:app-cor_strongconverse}
1\geq \limsup_{n\rightarrow\infty}\hat{\alpha}_{n,\rho}^{\mathrm{iid}}(e^{-nR})
\geq \liminf_{n\rightarrow\infty}\hat{\alpha}_{n,\rho}^{\mathrm{iid}}(e^{-nR}).
\end{equation}
By the proof of optimality for Theorem~\ref{thm:strong-converse}, see~\eqref{eq:sc-optimality},
\begin{equation}\label{eq:proof-cor-liminf_strongconverse}
\liminf_{n\rightarrow\infty}-\frac{1}{n}\log (1-\hat{\alpha}_{n,\rho}^{\mathrm{iid}}(e^{-nR}))
\geq \sup_{s\in (1,\infty)}\frac{s-1}{s}(R-\widetilde{I}_s^{\downarrow\downarrow}(A:B)_\rho )>0,
\end{equation}
where the strict inequality follows from Theorem~\ref{thm:strong-converse} because $R>I(A:B)_\rho$. 
\eqref{eq:proof-cor-liminf_strongconverse} implies that 
$\liminf_{n\rightarrow\infty}\hat{\alpha}_{n,\rho}^{\mathrm{iid}}(e^{-nR})=1$.
By~\eqref{eq:app-cor_strongconverse}, this implies that 
$\lim_{n\rightarrow\infty}\hat{\alpha}_{n,\rho}^{\mathrm{iid}}(e^{-nR})=1$. 

The assertion regarding $\hat{\alpha}_{n,\rho}$ follows from this 
because 
$\hat{\alpha}_{n,\rho}^{\mathrm{iid}}(\mu)\leq \hat{\alpha}_{n,\rho}(\mu)\leq 1$ for all $\mu\in [0,\infty)$ and $n\in \mathbb{N}_{>0}$, see Lemma~\ref{lem:order}.
\end{proof}

\subsection{Proof of Corollary~\ref{cor:strong-converse}}\label{app:proof-cor_strong-converse}
\begin{lem}\label{lem:constant_srmi}
Let $\rho_{AB}\in \mathcal{S}(AB)$ and let $R$ be defined as in~\eqref{eq:def-R(s)-strongconverse}. 
Let $a,b\in [1,\infty]$ be such that $a<b$. 
Then, $(a,b)\rightarrow [0,\infty),s\mapsto \widetilde{I}_s^{\downarrow\downarrow}(A:B)_\rho$ is constant iff 
$(a,b)\rightarrow [0,\infty),s\mapsto R(s)$ is constant. 
\end{lem}
\begin{proof}
Suppose $(a,b)\rightarrow [0,\infty),s\mapsto \widetilde{I}_s^{\downarrow\downarrow}(A:B)_\rho$ is constant, 
i.e., $\exists c\in \mathbb{R}$ such that 
$\widetilde{I}_s^{\downarrow\downarrow}(A:B)_\rho=c$ for all $s\in (a,b)$. 
Then, $\frac{\mathrm{d}}{\mathrm{d}s}\widetilde{I}_s^{\downarrow\downarrow}(A:B)_\rho=0$ for all $s\in (a,b)$, so $R(s)=\widetilde{I}_s^{\downarrow\downarrow}(A:B)_\rho=c$ for all $s\in (a,b)$. 

Now, suppose $(a,b)\rightarrow [0,\infty),s\mapsto R(s)$ is constant instead, 
i.e., $\exists c\in \mathbb{R}$ such that 
$\widetilde{I}_s^{\downarrow\downarrow}(A:B)_\rho+s(s-1)\frac{\mathrm{d}}{\mathrm{d}s}\widetilde{I}_s^{\downarrow\downarrow}(A:B)_\rho=c$ for all $s\in (a,b)$. 
The solution of this first-order linear ordinary differential equation is 
$\widetilde{I}_s^{\downarrow\downarrow}(A:B)_\rho=\frac{c+ks}{1-s}$ where $k\in \mathbb{R}$ is a constant. 
We have
\begin{align}
0&=\lim_{s\rightarrow 1^+}(1-s)\widetilde{I}_s^{\downarrow\downarrow}(A:B)_\rho 
=\lim_{s\rightarrow 1^+}(c+ks)
=c+k.
\end{align}
Hence, $k=-c$, which implies that $\widetilde{I}_s^{\downarrow\downarrow}(A:B)_\rho =c$ for all $s\in (a,b)$.
\end{proof}

\begin{cor:strong-converse_repeated}[Strong converse exponent]
Let $\rho_{AB}\in \mathcal{S}(AB)$ and let
\begin{align}\label{eq:def-R(s)-strongconverse}
R:(1,\infty)\rightarrow 
[I(A:B)_\rho,\infty),\quad  
s\mapsto 
\widetilde{I}_s^{\downarrow\downarrow}(A:B)_\rho+s(s-1)\frac{\mathrm{d}}{\mathrm{d} s}\widetilde{I}_s^{\downarrow\downarrow}(A:B)_\rho.
\end{align}
Then $R$ is continuous and monotonically increasing. 
Let $R(1)\coloneqq \lim_{s\rightarrow 1^+}R(s)=I(A:B)_\rho$, 
$R_\infty\coloneqq \lim_{s\rightarrow\infty}R(s)\in [\widetilde{I}_\infty^{\downarrow\downarrow}(A:B)_\rho,\infty]$, and 
$R(\infty)\coloneqq R_\infty$. 
Let $s_1\coloneqq \max\{s\in [1,\infty]:R(s)=I(A:B)_\rho\}$ 
and
$s_\infty \coloneqq\min\{s\in [1,\infty]:R(s)=R_\infty\}$. 

Suppose $I(A:B)_\rho\neq \widetilde{I}_\infty^{\downarrow\downarrow}(A:B)_\rho$. 
Then, $1\leq s_1<s_\infty = \infty$ and for any $s\in (s_1,s_\infty)$
\begin{equation}\label{eq:sc1}
\lim_{n\rightarrow\infty} -\frac{1}{n}\log (1-\hat{\alpha}_{n,\rho}(e^{-nR(s)}))
=\frac{s-1}{s}(R(s)-\widetilde{I}_s^{\downarrow\downarrow}(A:B)_\rho )
=(s-1)^2\frac{\mathrm{d}}{\mathrm{d}s} \widetilde{I}_s^{\downarrow\downarrow}(A:B)_\rho,
\end{equation}
and if $R_\infty<\infty$, then for any $R'\in [R_\infty,\infty)$
\begin{align}\label{eq:sc2}
\lim_{n\rightarrow\infty} -\frac{1}{n}\log (1-\hat{\alpha}_{n,\rho}(e^{-nR'}))
=R'-\widetilde{I}_\infty^{\downarrow\downarrow}(A:B)_\rho.
\end{align}
Moreover, the same is true if $\hat{\alpha}_{n,\rho}$ in~\eqref{eq:sc1} and~\eqref{eq:sc2} is replaced by 
$\hat{\alpha}_{n,\rho}^{\mathrm{iid}}$.
\end{cor:strong-converse_repeated}

\begin{proof}
Let $\rho_{AB}\in \mathcal{S}(AB)$. 
The proof of Theorem~\ref{thm:strong-converse} then implies that 
$R$ is continuous and monotonically increasing. 

From now on, suppose 
$I(A:B)_\rho\neq \widetilde{I}_{\infty}^{\downarrow\downarrow}(A:B)_\rho$. 
Then, Lemma~\ref{lem:constant_srmi} implies that $R(1)\neq R(\infty)$. 
Hence, $1\leq s_{1}<s_\infty\leq \infty$. 
The proof of Theorem~\ref{thm:strong-converse} then implies that~\eqref{eq:sc1} holds for any $s\in (s_1,s_\infty)$, and that the same is true if $\hat{\alpha}_{n,\rho}$ in~\eqref{eq:sc1} is replaced by 
$\hat{\alpha}_{n,\rho}^{\mathrm{iid}}$.

It remains to show that $s_\infty= \infty$. We will prove this by contradiction. 
Suppose $s_{\infty}<\infty$. 
Then, $R(s)=R_{\infty}$ for all $s\in [s_\infty, \infty]$. 
By Lemma~\ref{lem:constant_srmi}, it follows that
\begin{align}\label{eq:proof_const1_srmi}
\widetilde{I}_s^{\downarrow\downarrow}(A:B)_\rho=\widetilde{I}_{\infty}^{\downarrow\downarrow}(A:B)_\rho
\qquad \forall s\in [s_{\infty},\infty].
\end{align} 
Since $I(A:B)_\rho\neq \widetilde{I}_{\infty}^{\downarrow\downarrow}(A:B)_\rho$, 
it follows from \cite[Theorem~4~(n)]{burri2025prmisrmi1} that there exists $s_*\in (s_1,s_\infty)$ such that 
$\frac{\mathrm{d}}{\mathrm{d}s}\widetilde{I}_s^{\downarrow\downarrow}(A:B)_\rho|_{s=s_*}\neq 0$. 
Then,
\begin{align}
0
&=\lim_{s\rightarrow s_\infty}\frac{\mathrm{d}}{\mathrm{d}s} \widetilde{I}_s^{\downarrow\downarrow}(A:B)_\rho
\label{eq:proof_const2_srmi}\\
&=\lim_{s\rightarrow s_\infty^-} (s-1)^2 \frac{\mathrm{d}}{\mathrm{d}s} \widetilde{I}_s^{\downarrow\downarrow}(A:B)_\rho
\\
&=\lim_{s\rightarrow s_\infty^-}\lim_{n\rightarrow\infty}-\frac{1}{n}\log (1- \hat{\alpha}_{n,\rho}(e^{-nR(s)}))
\label{eq:proof_const3_srmi}\\
&\geq \lim_{n\rightarrow\infty}-\frac{1}{n}\log (1- \hat{\alpha}_{n,\rho}(e^{-nR(s_*)}))
\label{eq:proof_const4_srmi}
\\
&=(s_*-1)^2 \frac{\mathrm{d}}{\mathrm{d}s} \widetilde{I}_s^{\downarrow\downarrow}(A:B)_\rho \big|_{s=s^*}
>0.
\label{eq:proof_const5_srmi}
\end{align}
\eqref{eq:proof_const2_srmi} follows from~\eqref{eq:proof_const1_srmi} and \cite[Theorem~4~(n)]{burri2025prmisrmi1}. 
The equalities in~\eqref{eq:proof_const3_srmi} and~\eqref{eq:proof_const5_srmi} follow from above because~\eqref{eq:sc1} holds for any $s\in (s_1,s_\infty)$. 
\eqref{eq:proof_const4_srmi} holds because $R$ is monotonically increasing, $\hat{\alpha}_{n,\rho}$ is monotonically decreasing (see Lemma~\ref{lem:order}), and $s_\infty\geq s_*$. 
Since~\eqref{eq:proof_const5_srmi} yields a contradiction, we can conclude that $s_{\infty}=\infty$.
\end{proof}

\subsection{Proof for Remark~\ref{rem:strong-converse}}\label{app:rem_strong-converse}
\begin{proof}
Let $\rho_{AB}\in \mathcal{S}(AB)$ be such that $V(A:B)_\rho\neq 0$. 
By \cite[Theorem~4~(l), (n)]{burri2025prmisrmi1} it follows that there exists $\alpha_0\in (1,\infty)$ such that 
$I_\alpha^{\downarrow\downarrow}(A:B)_\rho$ is strictly monotonically increasing for $\alpha\in [1,\alpha_0]$. 
Let $R$ be defined as in~\eqref{eq:def-R(s)-strongconverse} and let $R(1)\coloneqq I(A:B)_\rho$. 
By Lemma~\ref{lem:constant_srmi} and Corollary~\ref{cor:strong-converse}, it follows that $R(\alpha)$ is strictly monotonically increasing for $\alpha\in [1,\alpha_0]$. 
Therefore, the parameter $s_1$ as defined Corollary~\ref{cor:strong-converse} is given by $s_1=1$. 
The assertion in Remark~\ref{rem:strong-converse} then follows from Corollary~\ref{cor:strong-converse}.
\end{proof}

\subsection{Proof of Corollary~\ref{cor:reverse-cutoff}}\label{ssec:reverse-cutoff}
For simplicity, we will prove the assertion for $\hat{\alpha}_{n,\rho}$ only. 
The proof for $\hat{\alpha}_{n,\rho}^{\mathrm{iid}}$ is analogous due to Corollary~\ref{cor:strong-converse}.  
The proof is divided into two parts: 
a proof of achievability and a proof of optimality. 

\subsubsection{Proof of achievability}
Let $\rho_{AB}\in \mathcal{S}(AB)$. 
In the following, we will show that 
\begin{align}\label{eq:fc-geq-reverse}
R_0^{(r)}(\beta)_\rho\leq \widetilde{I}_{\frac{1}{1-\beta}}^{\downarrow\downarrow}(A:B)_\rho
\quad \forall\beta\in (0,1).
\end{align}
\begin{proof}
Let $\beta\in (0,1)$ be arbitrary but fixed. 
Let $R_0\coloneqq \widetilde{I}_{\frac{1}{1-\beta}}^{\downarrow\downarrow}(A:B)_\rho$. 
Then, for all $R\in (0,\infty)$
\begin{subequations}\label{eq:fc12-reverse}
\begin{align}
\liminf_{n\rightarrow\infty}-\frac{1}{n}\log (1-\hat{\alpha}_{n,\rho}(e^{-nR}))
&\geq \sup_{s\in (1,\infty)}\frac{s-1}{s}(R-\widetilde{I}_s^{\downarrow\downarrow}(A:B)_\rho)
\label{eq:fc1-reverse}\\
&=\sup_{b\in (0,1)}b(R-\widetilde{I}_{\frac{1}{1-b}}^{\downarrow\downarrow}(A:B)_\rho )
\geq \beta (R-R_0).
\label{eq:fc2-reverse}
\end{align}
\end{subequations}
\eqref{eq:fc1-reverse} follows from the proof of optimality for Theorem~\ref{thm:strong-converse}, see~\eqref{eq:sc-optimality}, and Lemma~\ref{lem:order}. 
\eqref{eq:fc2-reverse} follows from~\eqref{eq:fc1-reverse} by defining $b\coloneqq \frac{s-1}{s}$. 
\eqref{eq:fc12-reverse} implies that 
$R_0^{(r)}(\beta)_\rho\leq R_0
=\widetilde{I}_{\frac{1}{1-\beta}}^{\downarrow\downarrow}(A:B)_\rho$. 
\end{proof}

\subsubsection{Proof of optimality}
Let $\rho_{AB}\in \mathcal{S}(AB)$ be such that $I(A:B)_\rho\neq \widetilde{I}_\infty^{\downarrow\downarrow}(A:B)_\rho$. 
Then, $0\leq \beta_1 <\beta_\infty =1$ follows from Corollary~\ref{cor:strong-converse}. 
In the following, we will show that 
\begin{align}
R_0^{(r)}(\beta)_\rho\geq \widetilde{I}_{\frac{1}{1-\beta}}^{\downarrow\downarrow}(A:B)_\rho \quad\forall \beta\in (\beta_{1},\beta_\infty).
\end{align}
\begin{proof}
Since $I(A:B)_\rho\neq \widetilde{I}_\infty^{\downarrow\downarrow}(A:B)_\rho$, \cite[Theorem~4]{burri2025prmisrmi1} implies that $\rho_{AB}$ is not a product state and $0<I(A:B)_\rho\leq \widetilde{I}_{\frac{1}{1-\beta}}^{\downarrow\downarrow}(A:B)_\rho \leq \widetilde{I}_\infty^{\downarrow\downarrow}(A:B)_\rho$ for all $\beta\in (0,1)$.

Let $\beta\in (\beta_{1},\beta_\infty)$ be arbitrary but fixed. 
Let $s\coloneqq \frac{1}{1-\beta}\in (s_{1},s_\infty)$, and let 
$R(s)\coloneqq \widetilde{I}_s^{\downarrow\downarrow}(A:B)_\rho+s(s-1)\frac{\mathrm{d}}{\mathrm{d} s}\widetilde{I}_s^{\downarrow\downarrow}(A:B)_\rho$. 
Then, $R(s)\in (I(A:B)_\rho,\infty)\subseteq (0,\infty)$ follows from Corollary~\ref{cor:strong-converse}. 
For all $R_0\in (0,\widetilde{I}_{\frac{1}{1-\beta}}^{\downarrow\downarrow}(A:B)_\rho)$
\begin{subequations}\label{eq:fc34-reverse}
\begin{align}
\liminf_{n\rightarrow\infty} -\frac{1}{n}\log (1- \hat{\alpha}_{n,\rho}(e^{-nR(s)}))
&=\frac{s-1}{s}(R(s)-\widetilde{I}_s^{\downarrow\downarrow}(A:B)_\rho )
\label{eq:fc3-reverse}\\
&=\beta (R(s)-\widetilde{I}_{\frac{1}{1-\beta}}^{\downarrow\downarrow}(A:B)_\rho )
<\beta (R(s)-R_0 ).
\label{eq:fc4-reverse}
\end{align}
\end{subequations}
\eqref{eq:fc3-reverse} follows from Corollary~\ref{cor:strong-converse}. 
\eqref{eq:fc34-reverse} implies that 
$R_0^{(r)}(\beta)_\rho\geq \widetilde{I}_{\frac{1}{1-\beta}}^{\downarrow\downarrow}(A:B)_\rho$. 
\end{proof}

\subsection{Proof of Theorem~\ref{thm:moderate_strongconverse}}\label{proof:moderate_strongconverse}

The proof of Theorem~\ref{thm:moderate_strongconverse} is divided into two parts: 
a proof of achievability for $\hat{\alpha}_{n,\rho}$ and 
a proof of optimality for $\hat{\alpha}_{n,\rho}^{\mathrm{iid}}$. 
The assertion in Theorem~\ref{thm:moderate_strongconverse} follows from these two parts because 
$\hat{\alpha}_{n,\rho}^{\mathrm{iid}}(\mu)\leq \hat{\alpha}_{n,\rho}(\mu)$ for all $\mu\in [0,\infty)$, see Lemma~\ref{lem:order}.

\subsubsection*{Proof of achievability}

We will show that 
\begin{align}
\limsup_{n\rightarrow\infty}-\frac{1}{na_n^2}\log (1-\hat{\alpha}_{n,\rho}(e^{-nR_n}))
\leq \frac{1}{2V(A:B)_\rho}.
\end{align}

\begin{proof}
Let $I\coloneqq I(A:B)_\rho$ and $V\coloneqq V(A:B)_\rho$. 

Let $n \in \mathbb{N}_{>0}$ be fixed. 
In the following, we assume that $n$ is sufficiently large whenever required. 
Let $P_n$ and $Q_n$ be defined as in \eqref{eq:test-p}--\eqref{eq:test-q}. 
Let $s_n\in (1,\infty)$ to be specified later. 
Then, by the same arguments as in~\eqref{eq:ach-at}--\eqref{eq:ach-beta} (with $s_n$ instead of $\hat{s}$), 
\begin{align}\label{eq:mod-ach-ln}
1-\hat{\alpha}_{n,\rho}(e^{-nR_n}) 
&\geq \mathrm{Pr}_{x_n\sim P_n}[\mu_n\leq L_n(x_n)],
\end{align}
where $L_n(x_n)\coloneqq \log P_n(x_n) -\log Q_n(x_n)$ for $x_n\in \mathcal{X}_n\coloneqq \{y_n\in [d_A^nd_B^n]:P_n(y_n)\neq 0\}$,  
\begin{align}\label{eq:mod-mun-def}
\mu_n\coloneqq \frac{1}{s_n}(nR_n+\phi_n(s_n)+\log g_{n,d_A}+\log g_{n,d_B}), 
\end{align}
and 
$\phi_n(s)\coloneqq (s-1)D_s(P_n\|Q_n)$ for $s\in (1,\infty)$. 
Moreover, let us define the exponentially tilted probability distributions 
\begin{align}
P_{n,s}(x_n)
&\coloneqq e^{(s-1)L_n(x_n)-\phi_n(s)}P_n(x_n)
=\frac{P_n(x_n)^sQ_n(x_n)^{1-s}}{\sum\limits_{y_n\in [d_A^nd_B^n]}P_n(y_n)^{s}Q_n(y_n)^{1-s}}
\label{eq:pns-pn}
\end{align}
for $s\in (1,\infty)$. 

Let $\Delta_n\in (0,\infty)$ to be specified later. 
For all $s\in (1,\infty)$ 
\begin{align}
1-\hat{\alpha}_{n,\rho}(e^{-nR_n})
&\geq \mathrm{Pr}_{x_n\sim P_n}[\mu_n\leq L_n(x_n)< \mu_n+\Delta_n]
\label{eq:l-mu0}\\
&=\sum_{\substack{x_n\in \mathcal{X}_n:\\ \mu_n\leq L_n(x_n)<\mu_n+\Delta_n}}
P_n(x_n)
\\
&=\sum_{\substack{x_n\in \mathcal{X}_n:\\ \mu_n\leq L_n(x_n)<\mu_n+\Delta_n}}
P_{n,s}(x_n) e^{-(s-1)L_n(x_n)+\phi_n(s)}
\label{eq:l-mu2}\\
&\geq \sum_{\substack{x_n\in \mathcal{X}_n:\\ \mu_n\leq L_n(x_n)<\mu_n+\Delta_n}}
P_{n,s}(x_n) e^{-(s-1)(\mu_n+\Delta_n)+\phi_n(s)}
\\
&=e^{-(s-1)(\mu_n+\Delta_n)+\phi_n(s)} 
\mathrm{Pr}_{x_n\sim P_{n,s}}[\mu_n\leq L_n(x_n)<\mu_n+\Delta_n].
\label{eq:l-mu4}
\end{align}
\eqref{eq:l-mu0} follows from~\eqref{eq:mod-ach-ln}.
\eqref{eq:l-mu2} follows from~\eqref{eq:pns-pn}.  
Let us denote the probability on the right-hand side of~\eqref{eq:l-mu4} as 
\begin{align}\label{eq:mod-def-pns}
\mathcal{P}_{n,s}\coloneqq \mathrm{Pr}_{x_n\sim P_{n,s}}[\mu_n\leq L_n(x_n)<\mu_n+\Delta_n].
\end{align}

We now refine our choices of $s_n$ and $\Delta_n$. 
Let $s_n\coloneqq 1+c_n\frac{a_n}{V}$ where $c_n\coloneqq 1+\delta_n$, $\delta_n\in (0,1)$ and $\delta_n\rightarrow 0$ as $n\rightarrow\infty$. 
(The choice of $\delta_n$ will be specified later.) 
Let $\varepsilon\in (0,\infty)$ be arbitrary but fixed 
and let $\Delta_n\coloneqq 2\varepsilon na_n$. 
Then, 
\begin{align}
&-\frac{1}{n a_n^2}\log\bigl(1-\hat{\alpha}_{n,\rho}(e^{-nR_n})\bigr)
\nonumber\\
&\leq -\frac{1}{na_n^2} (-(s_n-1)\mu_n+\phi_n(s_n)-(s_n-1)\Delta_n +\log \mathcal{P}_{n,s_n})
\label{eq:log1-1}\\
&=\frac{c_n}{V s_n a_n}
\left(
R_n + \frac{\log g_{n,d_A}}{n} + \frac{\log g_{n,d_B}}{n}
- \frac{1}{n}D_{1 + c_n a_n / V}(P_n \| Q_n)
\right)
+\frac{c_n \Delta_n}{n a_n V}
-\frac{1}{n a_n^2}\log \mathcal{P}_{n,s_n}
\label{eq:log1-2}\\
&=
\frac{c_n}{V s_n}
\left(
1 - \frac{c_n}{2}
\right)
+\frac{c_n}{V s_n a_n}
\left(
o(a_n) + O\left(\frac{\log n}{n}\right)
\right)
+\frac{c_n \Delta_n}{n a_n V}
-\frac{1}{n a_n^2}\log \mathcal{P}_{n,s_n}
\label{eq:log1-3}\\
&\leq
\frac{c_n}{2V}
+\frac{o(a_n)}{a_n}
+O\left(\frac{1}{\sqrt{n}a_n} \frac{\log n}{\sqrt{n}}\right) 
+\frac{2 \varepsilon c_n}{V}
-\frac{1}{n a_n^2}\log \mathcal{P}_{n,s_n}.
\label{eq:log1-4}
\end{align}
\eqref{eq:log1-1} follows from~\eqref{eq:l-mu4}. 
\eqref{eq:log1-2} follows from the definitions of $s_n$, $\mu_n$, and $\phi_n$. 
For~\eqref{eq:log1-3}, we have used the definition of $R_n$, 
$\log g_{n,d_A}=O(\log n)$~\cite{burri2025prmisrmi1}, 
and we have used that
\begin{align}\label{eq:dpq-i1}
\frac{1}{n}D_{s_n}(P_n\| Q_n)
&=\frac{1}{n}\widetilde{I}_{s_n}^{\downarrow\downarrow}(A:B)_{\rho} 
+O\left( \frac{\log n}{n}\right)
\\
&=I+\frac{V}{2}(s_n-1)+o(s_n-1)+O\left(\frac{\log n}{n}\right).
\label{eq:d-anv4}
\end{align}
\eqref{eq:dpq-i1} holds because for $\alpha\in [1,\infty)$
\begin{align}
\frac{1}{n}D_\alpha (P_n\| Q_n)
&=\frac{1}{n}\widetilde{D}_\alpha (\mathcal{P}_{\omega_{A^n}^n\otimes \omega_{B^n}^n}(\rho_{AB}^{\otimes n})\|\omega_{A^n}^n\otimes \omega_{B^n}^n )
=\widetilde{I}_\alpha^{\downarrow\downarrow}(A:B)_\rho +O\left(\frac{\log n}{n}\right)
\label{eq:dpq-i}
\end{align}
as $n\rightarrow\infty$, 
where the last term is independent of $\alpha$~\cite[Proof of Theorem~4~(j)]{burri2025prmisrmi1}. 
\eqref{eq:d-anv4} holds because 
$\widetilde{I}_\alpha^{\downarrow\downarrow}(A:B)_\rho$ is continuously differentiable on $\alpha\in [1,\infty)$~\cite[Theorem~4~(n)]{burri2025prmisrmi1}, which implies for $\alpha\in [1,\infty)$
\begin{align}\label{eq:i-v2}
\widetilde{I}_\alpha^{\downarrow\downarrow}(A:B)_\rho=I+\frac{V}{2}(\alpha-1)+o(\alpha-1).
\end{align}
\eqref{eq:log1-4} follows from $s_n\geq 1$, $c_n\in (1,2)$, $c_n\rightarrow 1$ and $s_n\rightarrow 1$ as $n\rightarrow\infty$, and the identity $\Delta_n=2\varepsilon na_n$.

We proceed by analyzing the individual terms appearing in~\eqref{eq:log1-4}. 
The second and the third term in~\eqref{eq:log1-4} vanish as $n\rightarrow\infty$. 
In the remainder of this proof, we will show that also the last term in~\eqref{eq:log1-4} is negligible as $n\rightarrow\infty$, i.e., 
\begin{align}\label{eq:limsup_p_n}
\liminf_{n\rightarrow\infty}\frac{1}{na_n^2}\log \mathcal{P}_{n,s_n} \stackrel{?}{\geq} 0.
\end{align}
The claim then follows from~\eqref{eq:log1-4} since $\varepsilon\in (0,\infty)$ can be chosen arbitrarily small and $c_n\rightarrow 1$ as $n\rightarrow\infty$. 
Thus, it remains to prove~\eqref{eq:limsup_p_n}. 

We proceed to derive an estimate for $\mu_n$. 
Using~\eqref{eq:dpq-i} and~\eqref{eq:i-v2}, we have for $x\in [0,\infty)$
\begin{align}
\phi_n(1+x)
=nIx+n\frac{V}{2}x^2+xO(\log n)+nr(x)x^2,
\label{eq:mod-phi-expansion}
\end{align}
where $O(\log n)$ is independent of $x$, 
$r$ is a function such that $r(x)\rightarrow 0$ as $x\rightarrow 0^+$, and $r$ is independent of $n$. 
Let $b_n\coloneqq s_n-1=c_na_n/V>0$. 
Using~\eqref{eq:mod-phi-expansion} and the definition of $\mu_n$ in~\eqref{eq:mod-mun-def}, we obtain
\begin{align}
\mu_n
&=(1-b_n+\underbrace{O(b_n^2)}_{=O(a_n^2)})
(nI+na_n+nIb_n+n\frac{V}{2}b_n^2
+\underbrace{nr(b_n)b_n^2}_{=o(nb_n^2)=o(na_n^2)}+O(\log n))
\\
&=nI + n a_n + O(n a_n^2) + O(\log n)
=nI+na_n+o(na_n).
\label{eq:mod_mu}
\end{align}

Next, we derive an estimate for $m_n\coloneqq \mathbb{E}_{x_n\sim P_{n,s_n}}[L_n(x_n)] $. 
We have 
$\mathbb{E}_{x_n\sim P_{n,s}}[L_n(x_n)] = \phi_n'(s)$ for any $s\in (1,\infty)$. 
In particular, 
$m_n = \phi_n'(s_n)=\phi_n'(1+b_n)$. 
To derive an estimate for $m_n$, we now analyze this derivative. 

Let 
$\varepsilon_n \coloneqq \sup_{u\in (0,3b_n)}|r(u)|$, 
$\theta_n \coloneqq \max\{\varepsilon_n^{1/4},b_n\}$, 
$h_n \coloneqq b_n\theta_n>0$. 
Since $b_n\rightarrow 0$ as $n\rightarrow\infty$ and $r(x)\rightarrow 0$ as $x\rightarrow 0^+$, 
we have $\varepsilon_n\rightarrow 0$ and $\theta_n\rightarrow 0$ as $n\rightarrow\infty$. 
Therefore, 
$h_n=o(b_n)$. 
In particular, we have $b_n-h_n>0$ and $b_n+h_n<2b_n$ for all sufficiently large $n$. 
Since $\phi_n$ is convex and differentiable, its derivative is bounded from below and above by the corresponding left and right difference quotients. Therefore,
\begin{align}
\frac{\phi_n(1+b_n)-\phi_n(1+b_n-h_n)}{h_n}
\leq \phi_n'(1+b_n)
\le
\frac{\phi_n(1+b_n+h_n)-\phi_n(1+b_n)}{h_n}.
\label{eq:mod-convexity-mn}
\end{align}
We now insert~\eqref{eq:mod-phi-expansion} into the two quotients in~\eqref{eq:mod-convexity-mn}. 
The contribution of the first term in~\eqref{eq:mod-phi-expansion} is $nI$, 
and the contribution of the third term in~\eqref{eq:mod-phi-expansion} is $O(\log n)$, because these terms are linear in $x$. 
Note that $O(\log n)=o(na_n)$ because $\sqrt{n}a_n\rightarrow\infty$ and $(\log n)/\sqrt{n}\rightarrow 0$ as $n\rightarrow\infty$. 
The contribution of the second term in~\eqref{eq:mod-phi-expansion} is
\begin{align}
\pm \frac{n\frac{V}{2}\big((b_n\pm h_n)^2-b_n^2\big)}{h_n}
= nVb_n + O(nh_n)
=nVb_n+o(nb_n)
=c_n na_n+o(na_n).
\end{align}
For the contribution of the last term in~\eqref{eq:mod-phi-expansion}, we use that $b_n\pm h_n\in (0,2b_n)$ for all sufficiently large $n$. 
Thus,
\begin{align}
&\left|
\frac{nr(b_n\pm h_n)(b_n\pm h_n)^2-nr(b_n)b_n^2}{h_n}
\right|
\nonumber\\
&\leq n\varepsilon_n \frac{(b_n+h_n)^2+b_n^2}{h_n}
\leq \text{const. } n\varepsilon_n \frac{b_n^2}{h_n}
\leq \text{const. } n b_n \varepsilon_n^{3/4}
=o(nb_n)
=o(na_n).
\end{align}
The second inequality holds because $h_n=o(b_n)$. 
Combining the contributions of the four terms yields an estimate for $m_n$ given by
\begin{align}
m_n 
&=\phi_n'(s_n)
=\phi_n'(1+b_n)
= nI + c_n n a_n + o(n a_n).
\label{eq:mod_mn}
\end{align}

Let us define 
$Z_n(x_n) \coloneqq \frac{1}{n a_n}(L_n(x_n) - m_n )$ for $x_n\in \mathcal{X}_n$ and 
let $z_n \coloneqq (\mu_n - m_n)/(n a_n)$. 
Then, 
\begin{align}
\mathcal{P}_{n,s_n}
\stackrel{\eqref{eq:mod-def-pns}}{=}\mathrm{Pr}_{x_n\sim P_{n,s_n}}[z_n \leq Z_n(x_n) < z_n + 2 \varepsilon ].
\label{eq:mod_pns_2e}
\end{align}
We now show that $z_n$, which is the left endpoint in~\eqref{eq:mod_pns_2e}, tends to $0$. 
By~\eqref{eq:mod_mu}, we have 
$\mu_n = nI + n a_n + o(n a_n)$. 
By~\eqref{eq:mod_mn}, we have
$m_n = nI + c_n n a_n + o(n a_n)$. 
Therefore,
$z_n=1-c_n+o(1)=-\delta_n +o(1)$. 
We now specify the choice of $\delta_n$. 
Since $z_n=-\delta_n +o(1)$ and $\delta_n\in (0,1)$, we can choose $\delta_n\rightarrow 0$ sufficiently slowly such that 
$z_n<0$ for all sufficiently large $n$.  
(For instance, one can choose $\delta_n\coloneqq \sqrt{o(1)}$. 
Then, $\delta_n\rightarrow 0$ and $o(1)/\delta_n\rightarrow 0$ as $n\rightarrow\infty$, hence $z_n<0$ for all sufficiently large $n$.) 
Then, for all sufficiently large $n$, we have $-\varepsilon < z_n<0$, so that 
$(0,\varepsilon)\subseteq [z_n,z_n+2\varepsilon)$. 
Combining this with~\eqref{eq:mod_pns_2e}, we obtain 
$\mathcal{P}_{n,s_n} \geq \mathrm{Pr}_{x_n\sim P_{n,s_n}}[ Z_n(x_n) \in (0,\varepsilon)]$ 
for all sufficiently large $n$.
To prove~\eqref{eq:limsup_p_n}, it therefore suffices to prove that
\begin{align}
\liminf_{n \to \infty}
\frac{1}{n a_n^2}
\log \mathrm{Pr}_{x_n\sim P_{n,s_n}}[Z_n(x_n) \in (0,\varepsilon)]
\stackrel{?}{\geq} 0.
\label{eq:mod_toprove_e}
\end{align}

To this end, we determine the asymptotic cumulant generating function of $Z_n$ under $P_{n,s_n}$ with speed $na_n^2$.
For any $\theta \in \mathbb{R}$, let
\begin{align}
\Lambda_n(\theta)
&\coloneqq
\frac{1}{n a_n^2}
\log
\mathbb{E}_{x_n\sim P_{n,s_n}}
[ e^{\theta n a_n^2 Z_n(x_n)} ]
\\
&=
\frac{1}{n a_n^2}
\log
\mathbb{E}_{x_n\sim P_{n,s_n}}
[ e^{\theta a_n L_n(x_n) - \theta a_n m_n} ]
\label{eq:mod-lambdan2}\\
&=
\frac{1}{n a_n^2}
\left( \phi_n(s_n + \theta a_n)
-\phi_n(s_n)
-\theta a_n \phi_n'(s_n)
\right).
\label{eq:mod_lambdan_0}
\end{align}
\eqref{eq:mod-lambdan2} follows from the definition of $Z_n$. 
\eqref{eq:mod_lambdan_0} follows from~\eqref{eq:mod_mn} and from the definition of $P_{n,s_n}$ in~\eqref{eq:pns-pn}. 
Next, we estimate the three terms occurring in~\eqref{eq:mod_lambdan_0} separately. 
For the last term in~\eqref{eq:mod_lambdan_0}, \eqref{eq:mod_mn} implies
\begin{align}
\theta a_n \phi_n'(s_n) 
&= \theta a_n ( nI + c_n n a_n + o(n a_n))
\\
&=nI\theta a_n+\theta c_n na_n^2+o(na_n^2).
\label{eq:mod_lambdan_3}
\end{align}
For the first two terms in~\eqref{eq:mod_lambdan_0}, \eqref{eq:mod-phi-expansion} implies for any fixed $\theta\in \mathbb{R}$
\begin{align}
\phi_n(s_n + \theta a_n)
-\phi_n(s_n)
&=nI\theta a_n
+n\frac{V}{2}(2b_n\theta a_n+\theta^2a_n^2)
+\theta a_nO(\log n)
+o(na_n^2)
\\
&=nI \theta a_n
+\frac{V}{2}\left(\frac{2c_n\theta}{V}+\theta^2\right)na_n^2
+o(na_n^2),
\label{eq:mod_lambdan_2}
\end{align}
where we have used the identity $b_n=c_na_n/V$.  
Inserting~\eqref{eq:mod_lambdan_3} and~\eqref{eq:mod_lambdan_2} into~\eqref{eq:mod_lambdan_0} yields 
for any fixed $\theta \in \mathbb{R}$ 
\begin{align}
\Lambda_n(\theta)
&=\frac{V}{2}
\left( \frac{2c_n\theta}{V}+\theta^2 \right)
-\theta c_n
+o(1)
=\frac{V\theta^2}{2} 
+o(1).
\end{align}

Thus, the limiting cumulant generating function exists and is given by
$\Lambda(\theta) \coloneqq \lim_{n\to\infty}\Lambda_n(\theta) = V\theta^2/2$ 
for all $\theta \in \mathbb{R}$. 
Its Legendre-Fenchel transform is 
$\Lambda^*(z)
\coloneqq
\sup_{\theta \in \mathbb{R}} \{ \theta z - \Lambda(\theta) \}
=z^2/(2V)$ for any $z\in \mathbb{R}$. 
We now apply the G\"{a}rtner-Ellis lower bound from~\cite[Theorem II.6.1.]{Ellis1975entropy} 
to $Z_n$ under $P_{n,s_n}$ with speed $n a_n^2$. 
This bound is applicable because $\Lambda$ is finite and differentiable on $\mathbb{R}$. 
The G\"{a}rtner-Ellis lower bound yields
\begin{align}
\liminf_{n \to \infty}
\frac{1}{n a_n^2}
\log \mathrm{Pr}_{x_n\in P_{n,s_n}}[Z_n(x_n) \in (0,\varepsilon)]
\geq -\inf_{z \in (0,\varepsilon)}
\Lambda^*(z).
\end{align}
The right-hand side is given by
$-\inf_{z \in (0,\varepsilon)} \Lambda^*(z)
=-\inf_{z \in (0,\varepsilon)} z^2/(2V)
=0$. 
Thus, we obtain~\eqref{eq:mod_toprove_e}.
\end{proof}

\subsubsection*{Proof of optimality}

We will show that 
\begin{align}
\liminf_{n\rightarrow\infty}-\frac{1}{na_n^2}\log (1-\hat{\alpha}_{n,\rho}^{\mathrm{iid}}(e^{-nR_n}))
\geq \frac{1}{2V(A:B)_\rho}.
\end{align}

\begin{proof} 
Let us define the function $\hat{\alpha}_{n,\rho}^{\mathrm{mar}}(\mu )$ as in~\eqref{eq:def-alpha-mar}. 
Then, 
\begin{align}
\liminf_{n\rightarrow\infty}-\frac{1}{na_n^2}\log (1-\hat{\alpha}_{n,\rho}^{\mathrm{iid}}(e^{-nR_n}))
\geq \liminf_{n\rightarrow\infty}-\frac{1}{na_n^2}\log (1-\hat{\alpha}_{n,\rho}^{\mathrm{mar}}(e^{-nR_n}))
=\frac{1}{2V(A:B)_\rho}.
\end{align}
The equality follows from~\cite[Theorem~1]{Chubb_2017}.
\end{proof}

\section{Proof for Section~\ref{ssec:second}}
\subsection{Proof of Theorem~\ref{thm:second}}\label{app:second}
The proof of Theorem~\ref{thm:second} is divided into two parts: a proof of achievability for $\hat{\alpha}_{n,\rho}$ and a proof of optimality for $\hat{\alpha}_{n,\rho}^{\mathrm{iid}}$. 
The assertion in Theorem~\ref{thm:second} follows from these two parts because 
$\hat{\alpha}_{n,\rho}^{\mathrm{iid}}(\mu)\leq \hat{\alpha}_{n,\rho}(\mu)$ for all $\mu\in [0,\infty)$, see Lemma~\ref{lem:order}.

\subsubsection{Proof of achievability}
We will show that for any $r\in \mathbb{R}$ 
\begin{align}
\limsup_{n\rightarrow\infty} \hat{\alpha}_{n,\rho}(e^{-nR_n })
\leq \Phi\left(\frac{r}{\sqrt{V(A:B)_{\rho} }}\right).
\end{align}
\begin{proof}
Let $r\in \mathbb{R}$. 
For any $n\in \mathbb{N}_{>0}$, let 
\begin{align}
\mu_n&\coloneqq nI(A:B)_\rho +\sqrt{n}r+\log g_{n,d_A}+\log g_{n,d_B}.
\label{eq:def-mu-n}
\end{align}
In the following, we consider again the test $T^n_{A^nB^n}$, the PMFs $P_n,Q_n$, and the random variables $X_n,X_n'$ from the proof of achievability in Appendix~\ref{proof:achievability}, see \eqref{eq:test-np}--\eqref{eq:test-q}, where $\mu_n$ is now given by~\eqref{eq:def-mu-n} instead of~\eqref{eq:def-mun}. 
Note that the relations in~\eqref{eq:ach-at}--\eqref{eq:ach-gg0} still apply.
Let $n\in \mathbb{N}_{>0}$ be arbitrary but fixed. Then
\begin{subequations}\label{eq:ach-x'}
\begin{align}
e^{\mu_n}\mathrm{Pr}[P_n(X_n')\geq e^{\mu_n}Q_n(X_n')]
&= \tr[T^n_{A^nB^n}e^{\mu_n}\omega_{A^n}^n\otimes \omega_{B^n}^n]\\
&\leq \tr[T^n_{A^nB^n}\mathcal{P}_{\omega_{A^n}^n\otimes \omega_{B^n}^n}(\rho_{AB}^{\otimes n}) ]
\label{eq:ach-x'0}\\
&=\mathrm{Pr}[P_n(X_n)\geq e^{\mu_n}Q_n(X_n)].
\end{align}
\end{subequations}
\eqref{eq:ach-x'0} follows from the definition of the test in~\eqref{eq:test-np}. 
We have 
\begin{subequations}\label{eq:proof-second-feasible}
\begin{align}
\sup_{\substack{\sigma_{A^n}\in \mathcal{S}_{\sym}(A^{\otimes n}),\\ \tau_{B^n}\in \mathcal{S}_{\sym}(B^{\otimes n})}} \tr[\sigma_{A^n}\otimes \tau_{B^n} T^n_{A^nB^n}]
&\leq g_{n,d_A}g_{n,d_B}\mathrm{Pr}[P_n(X_n')\geq e^{\mu_n}Q_n(X_n')]
\label{eq:proof-ach-sym0}\\
&\leq g_{n,d_A}g_{n,d_B}e^{-\mu_n}\mathrm{Pr}[P_n(X_n)\geq e^{\mu_n}Q_n(X_n)]
\label{eq:proof-ach-sym1}\\
&=e^{-nR_n}\mathrm{Pr}[P_n(X_n)\geq e^{\mu_n}Q_n(X_n)]\leq e^{-nR_n}.
\label{eq:proof-ach-sym2}
\end{align}
\end{subequations}
\eqref{eq:proof-ach-sym0} follows from~\eqref{eq:ach-gg0}.
\eqref{eq:proof-ach-sym1} follows from~\eqref{eq:ach-x'}. 
\eqref{eq:proof-ach-sym2} follows from the definition of $R_n$ and $\mu_n$, see~\eqref{eq:def-mu-n}.
By~\eqref{eq:ach-at}, 
\begin{subequations}\label{eq:proof-second-y}
\begin{align}
\tr[\rho_{AB}^{\otimes n}(1-T^n_{A^nB^n})]
&=\mathrm{Pr}[P_n(X_n)<e^{\mu_n}Q_n(X_n)]
\\
&=\mathrm{Pr}[\log P_n(X_n)<\mu_n+\log Q_n(X_n)]\\
&=\mathrm{Pr}[\log P_n(X_n)-\log Q_n(X_n)<nR_n + \log g_{n,d_A}+\log g_{n,d_B}]\\
&=\mathrm{Pr}[Y_n<r].
\end{align}
\end{subequations}
For the last equality, we defined the random variable 
\begin{equation}\label{eq:def-yn}
Y_n\coloneqq \frac{1}{\sqrt{n}}(\log P_n(X_n)-\log Q_n(X_n)-n I(A:B)_\rho -\log g_{n,d_A}-\log g_{n,d_B}).
\end{equation}
\eqref{eq:proof-second-feasible} and~\eqref{eq:proof-second-y} imply that 
\begin{align}\label{eq:limsup-a-yx}
\limsup_{n\rightarrow\infty}\hat{\alpha}_{n,\rho}(e^{-nR_n})
\leq \limsup_{n\rightarrow\infty}\tr[\rho_{AB}^{\otimes n}(1-T^n_{A^nB^n})]
=\limsup_{n\rightarrow\infty}\mathrm{Pr}[Y_n<r].
\end{align}
For any $n\in \mathbb{N}_{>0}$,
let us define $M_n(t)\coloneqq \mathbb{E}[e^{tY_n}]$ for $t\in [0,\infty)$. 
Then, for all $t\in [0,\infty)$
\begin{align}
\log M_n(t)
&=\log \mathbb{E}[e^{tY_n}] \\
&=\log \mathbb{E}[\left(\frac{P_n(X_n)}{Q_n(X_n)} \right)^{\frac{t}{\sqrt{n}}} (g_{n,d_A}g_{n,d_B})^{-\frac{t}{\sqrt{n}} } e^{-t\sqrt{n}I(A:B)_\rho } ] \\
&=t\sqrt{n} \left(\frac{1}{n} D_{1+\frac{t}{\sqrt{n}}}(P_n\| Q_n)-I(A:B)_\rho \right)
-\frac{t}{\sqrt{n}}\log (g_{n,d_A}g_{n,d_B})
\\
&=t\sqrt{n}\left(\frac{1}{n} D_{1+\frac{t}{\sqrt{n}}}(\mathcal{P}_{\omega_{A^n}^n\otimes \omega_{B^n}^n}(\rho_{AB}^{\otimes n}) \| \omega_{A^n}^n\otimes \omega_{B^n}^n ) -I(A:B)_\rho \right)
-\frac{t}{\sqrt{n}}\log (g_{n,d_A}g_{n,d_B}).
\label{eq:ach-tgg}
\end{align}
By \cite[Remark~1~(b)]{burri2025prmisrmi1}, the last term in~\eqref{eq:ach-tgg} vanishes in the limit $n\rightarrow\infty$. 
The limit as $n\rightarrow\infty$ of the first term in~\eqref{eq:ach-tgg} has been determined in~\cite[Theorem~4~(j)]{burri2025prmisrmi1} by means of the mutual information variance of $\rho_{AB}$. 
Thus, for all $t\in [0,\infty)$
\begin{equation}\label{eq:lim-ln-mn}
\lim_{n\rightarrow\infty}\log M_n(t)
=\frac{t^2}{2}V(A:B)_\rho.
\end{equation}

Let $Y$ be a normally distributed random variable with mean $\mu\coloneqq 0$ and variance $\sigma^2\coloneqq V(A:B)_\rho $. Its moment generating function is given for all $t\in \mathbb{R}$ by
\begin{align}\label{eq:mgf-normal}
M(t)\coloneqq \mathbb{E}[e^{tY}]=\exp\left(t \mu +\frac{t^2}{2}\sigma^2\right)
=\exp\left(\frac{t^2}{2}V(A:B)_\rho\right).
\end{align}
A comparison of~\eqref{eq:lim-ln-mn} with~\eqref{eq:mgf-normal} shows that 
$M(t)=\lim_{n\rightarrow\infty}M_n(t)$ for all $t\in (0,\infty)$. 
We can now apply Curtiss' theorem~\cite{mukherjea2006note} (see also~\cite[Lemma 20]{hayashi2016correlation}).
According to \cite[Theorem~2]{mukherjea2006note}, if a sequence of moment generating functions $M_n(t)$ converges pointwise to a moment generating function $M(t)$ for all $t$ in some open interval of the positive real axis, then the corresponding sequence of distribution functions converges weakly to the distribution function corresponding to $M(t)$. 
Thus,
$\limsup_{n\rightarrow\infty}\mathrm{Pr}[Y_n<r]=\mathrm{Pr}[Y<r]$. 
By~\eqref{eq:limsup-a-yx}, we can conclude that 
\begin{align}
\limsup_{n\rightarrow\infty}\hat{\alpha}_{n,\rho}(e^{-nR_n})
&\leq \mathrm{Pr}[Y<r]
=\mathrm{Pr}\left[\frac{Y}{\sigma}<\frac{r}{\sigma}\right]
=\Phi\left(\frac{r}{\sigma}\right)
=\Phi\left(\frac{r}{\sqrt{V(A:B)_\rho}}\right).
\end{align}
\end{proof}

\subsubsection{Proof of optimality}
We will show that for any $r\in \mathbb{R}$ 
\begin{equation}
\liminf_{n\rightarrow\infty} \hat{\alpha}_{n,\rho}^{\mathrm{iid}}(e^{-nR_n})
\geq \Phi\left(\frac{r}{\sqrt{V(A:B)_{\rho} }}\right).
\end{equation}
\begin{proof}
Let $r\in \mathbb{R}$. 
Let us define the function $\hat{\alpha}_{n,\rho}^{\mathrm{mar}}(\mu )$ as in~\eqref{eq:def-alpha-mar}. 
Then, 
\begin{align}\label{eq:proof-opts2}
\liminf_{n\rightarrow\infty} \hat{\alpha}_{n,\rho}^{\mathrm{iid}}(e^{-nR_n })
&\geq \lim_{n\rightarrow\infty} \hat{\alpha}_{n,\rho}^{\mathrm{mar}}(e^{-nR_n})
= \Phi\left(\frac{r}{\sqrt{V(A:B)_{\rho} }}\right).
\end{align}
For the last equality, we have used the results in~\cite{tomamichel2013hierarchy,li2014second} on the second-order asymptotics for i.i.d. quantum hypothesis testing.
\end{proof}

\begin{rem}[Extensions of Corollary~\ref{cor:stein} and Theorem~\ref{thm:second}]
By~\eqref{eq:proof-opts2}, it is clear that Theorem~\ref{thm:second} also holds if 
$\hat{\alpha}_{n,\rho}$ is replaced by $\hat{\alpha}_{n,\rho}^{\mathrm{mar}}$ as defined in~\eqref{eq:def-alpha-mar}.
Furthermore, note that Corollary~\ref{cor:stein} also holds if 
$\hat{\alpha}_{n,\rho}$ is replaced by $\hat{\alpha}_{n,\rho}^{\mathrm{mar}}$ due to the quantum Stein's lemma~\cite{hiai1991proper,ogawa2005strong}.
\end{rem}

\bibliographystyle{arxiv_fullname}
\bibliography{bibfile}

\end{document}